\DeclareRobustCommand{\gobblefour}[5]{}
\def\myarabic#1{\normalfont(\roman{#1})}
\newlist{theoremlist}{enumerate}{1}
\setlist[theoremlist]{label=\myarabic{theoremlisti},ref={\myarabic{theoremlisti}},itemindent=0pt,labelindent=0pt,
leftmargin=*,noitemsep}
\renewcommand{\p@theoremlisti}{\perh@ps{\thetheorem}}
\protected\def\perh@ps#1#2{\textup{#1#2}}
\newcommand{\itemrefperh@ps}[2]{\textup{#2}}
\newcommand{\itemref}[1]{\begingroup\let\perh@ps\itemrefperh@ps\ref{#1}\endgroup}
\newcommand{\retainlabel}[1]{\label{#1}\sbox0{\ref{#1}}}
\newtheorem{theorem}{Theorem}[section]
\newtheorem{lemma}[theorem]{Lemma}
\newtheorem{proposition}[theorem]{Proposition}
\newtheorem{corollary}[theorem]{Corollary}
\theoremstyle{definition}
\newtheorem{remark}[theorem]{Remark}
\theoremstyle{definition}
\newtheorem{definition}[theorem]{Definition}
\newtheorem{question}[theorem]{Question}
\theoremstyle{definition}
\theoremstyle{definition}
\newtheorem{example}[theorem]{Example}
\def\Acal{\mathcal{A}}
\def\one{{\mathbbm{1}}}
\def\C{\mathbb{C}}
\def\R{\mathbb{R}}
\def\Z{\mathbb{Z}}
\newcommand\parr[1]{{({#1})}}
\def\<{{\langle}}
\def\>{{\rangle}}
\def\diag{{ \operatorname{diag}}}
\def\op{{ \operatorname{op}}}
\def\Span{ \operatorname{Span}}
\def\Gr{\operatorname{Gr}}
\def\Grtnn{\Gr_{\ge 0}}
\def\xing{{\operatorname{xing}}}
\newcounter{todofigure}
\def\Prob{\operatorname{Prob}}
\def\OG{\operatorname{OG}}
\def\OGtnn{\OG_{\ge0}}
\def\doublemap{\phi}
\newcommand\double[1]{\widetilde{#1}}
\def\Reg{R}
\def\arg{\operatorname{arg}}
\def\curve{\bm{\gamma}}
\def\bth{{\bm{\theta}}}
\def\btht{{\tilde{\bm{\theta}}}}
\def\g{\gamma}
\def\J{J}
\def\bx{{\mathbf{x}}}
\def\Tiling{{\mathbb{T}}}
\crefname{figure}{Figure}{Figures}
\def\Arr{{\Acal}}
\def\ska#1{s_k\cdot #1}
\def\ska#1{#1\cdot s_k}
\def\coef{f}
\def\taut{{\tilde\tau}}
\def\Jt{{\tilde\J}}
\def\v{v}
\def\th{\theta}
\def\tht{{\tilde\theta}}
\def\vt{{\tilde\v}}
\def\eps{\varepsilon}
\def\bas{e}
\def\DUMMY{\Phi(t)}
\def\gg{\Gamma}
\def\GG{{\bm{\Gamma}}}
\def\supp{\operatorname{supp}_\tau}
\def\u#1{{u_\Reg^\parr{#1}}}
\def\RegN{\Reg_{\parr N}}
\def\bthN{\bth_{\parr N}}
\def\dd#1{\bar\partial_{#1}}
\def\Jp{J'}
\def\Jpp{J''}
\def\w#1{w^\parr{#1,N}}
\def\GGB{\GG^{\flat}_{\RegN}}
\def\GGBT{\GGB(T)}
\def\SpanC{\Span_\C}
\def\figref#1(#2){Figure~\hyperref[#1]{\ref*{#1}(#2)}}
\begin{document}
\numberwithin{equation}{section}

\title[A formula for boundary correlations of the critical Ising model]{A formula for boundary correlations \\ of the critical Ising model}
\author{Pavel Galashin}
\address{Department of Mathematics, University of California, Los Angeles, 520 Portola Plaza,
Los Angeles, CA 90025, USA}
\email{\href{mailto:galashin@math.ucla.edu}{galashin@math.ucla.edu}}

\date{\today}

\subjclass[2010]{
  Primary:
  82B27. 
  Secondary:
14M15, 
15B48. 
}

\keywords{Critical $Z$-invariant Ising model, totally nonnegative Grassmannian, Fourier transform, regular polygons, boundary spin correlations.}

\begin{abstract}
Given a finite rhombus tiling of a polygonal region in the plane, the associated critical $Z$-invariant Ising model is invariant under star-triangle transformations. We give a simple matrix formula describing spin correlations between boundary vertices in terms of the shape of the region. When the region is a regular polygon, our formula becomes an explicit trigonometric sum. 
\end{abstract}

\maketitle

\section{Introduction}\label{sec:intro}

Consider a rhombus tiling $\Tiling$ of a polygonal region $\Reg$ in the plane, such as the one in~\figref{fig:intro1}(a). To this data, one can attach a weighted \emph{isoradial graph} $G_\Tiling$ which depends on the geometry of the tiling in a simple local way. The associated \emph{critical $Z$-invariant Ising model} was introduced by Baxter~\cite{Bax,Bax2} and has been studied extensively since then; see e.g.~\cite{YangPerk, CS2, CS, CDC13, BdT1,BdT2,BdTR}. It is a probability measure on the space of spin configurations on the vertices of $G_\Tiling$, generalizing the Ising model at critical temperature on the square, triangular, and hexagonal lattices. Denote by $b_1,b_2,\dots,b_n$ the vertices of $G_\Tiling$ that belong to the boundary of $\Reg$, listed in counterclockwise order. For all $1\leq j,k\leq n$, let $\<\sigma_j\sigma_k\>_{\Reg}$ be the spin correlation between $b_j$ and $b_k$. By definition, $\<\sigma_j\sigma_k\>_{\Reg}$ is the difference between the probability that the spins at $b_j$ and $b_k$ are equal and the probability that these spins are different. See \cref{sec:ising-model} for details.

 It is known~\cite{KenyonAlg} that any two rhombus tilings of the same region can be related by a sequence of \emph{flips} as in \figref{fig:intro1}(d). Applying a flip to a rhombus tiling results in applying a \emph{star-triangle move} to the weighted graph $G_\Tiling$. The boundary correlations $\<\sigma_j\sigma_k\>_{\Reg}$ are preserved by such moves, and therefore depend only on the region $\Reg$ itself, and not on the particular choice of a rhombus tiling $\Tiling$. It is thus natural to look for an expression for $\<\sigma_j\sigma_k\>_{\Reg}$ purely in terms of $\Reg$. 
In this paper, building on our previous results with P.~Pylyavskyy~\cite{GP}, we give such an expression for an arbitrary region~$\Reg$.

\subsection{Regular polygons}\label{sec:regular-polygons-1}
In general, our formula involves computing the inverse of an $n\times n$ matrix. However, in the most symmetric case when $\Reg$ is a regular $2n$-gon, the matrix can be inverted explicitly, which gives rise to the following result.

\begin{theorem}\label{thm:reg}
  Let $\Reg_n$ be a regular $2n$-gon. Then for $1\leq p,q\leq n$ and $k:=|p-q|$, we have
\begin{equation}\label{eq:regular}
  \<\sigma_p\sigma_q\>_{\Reg_n}=\frac2n \left(\frac1{\sin \left((2k-1)\pi/2n\right)}-\frac1{\sin \left((2k-3)\pi/2n\right)}+\dots\pm \frac1{\sin \left(\pi/2n\right)} \right)\mp 1.
\end{equation}
\end{theorem}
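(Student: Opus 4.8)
The plan is to feed the regular $2n$-gon into the general formula of this paper --- which expresses $\<\sigma_p\sigma_q\>_{\Reg}$ in terms of the inverse of an explicit $n\times n$ matrix $A_\Reg$ built from the shape of the region $\Reg$ --- and then to invert the resulting matrix by hand. The key structural point is that $\Reg_n$ is invariant under rotation by $2\pi/n$, which permutes the boundary vertices $b_1,\dots,b_n$ cyclically; this forces $A_{\Reg_n}$ to be circulant up to a sign twist reflecting the two-colouring of the vertices of the rhombus tiling $\Tiling$, so that its $(p,q)$ entry depends on $p-q$ alone and is negated when $p-q$ increases by $n$. This half-period twist is precisely what makes odd multiples of $\pi/2n$, rather than multiples of $\pi/n$, appear in the final answer.

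Such a skew-circulant matrix is simultaneously diagonalized by the characters $(\zeta_j^{\,p})_{p=1}^n$, where $\zeta_1,\dots,\zeta_n$ are the $n$ roots of $\zeta^n=-1$, namely $\zeta_j=e^{\mathrm{i}(2j-1)\pi/n}$. I expect the eigenvalue $\lambda_j$ on the $j$-th character to be, up to an elementary constant factor, a single cosecant (or cotangent) of $(2j-1)\pi/2n$. Taking reciprocals of the eigenvalues and applying the inverse discrete Fourier transform then produces an expression of the form
\[
\<\sigma_p\sigma_q\>_{\Reg_n}=\frac{c}{n}\sum_{j=1}^n\frac{\zeta_j^{\,p-q}}{\lambda_j}
\]
(possibly with an additional diagonal term), which is real and depends on $p,q$ only through $k=|p-q|$, as one sees by pairing $\zeta_j$ with its complex conjugate $\zeta_{n+1-j}$. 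It then remains to identify this finite trigonometric sum with the right-hand side of~\eqref{eq:regular}.

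I would carry out this last step by expanding the summand into partial fractions in the variable $\zeta_j$ and summing the resulting geometric-type series over the roots of $\zeta^n=-1$ --- equivalently, by a residue computation --- which I expect to telescope into the alternating expression $\frac1{\sin((2k-1)\pi/2n)}-\frac1{\sin((2k-3)\pi/2n)}+\dots$; the base case $k=0$, where $\<\sigma_p\sigma_p\>_{\Reg_n}=1$, then pins down the additive constant $\mp1$, and tracking the parity of $k$ fixes the terminal sign $\pm\frac1{\sin(\pi/2n)}$. Alternatively, one can verify directly that both sides of~\eqref{eq:regular} satisfy the same first-order recurrence in $k$ and agree at $k=0$.

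The main obstacle is exactly this final reduction: passing from the Fourier-diagonalized form to the precise closed form in~\eqref{eq:regular} with all of the signs accounted for. Pinning down the skew-circulant structure and the eigenvalues of $A_{\Reg_n}$ --- so that the alternating signs and the additive $\mp1$ emerge cleanly, rather than as an unwieldy trigonometric sum --- is delicate but essentially routine, whereas the conceptual input (rotational symmetry forces a circulant-type matrix, which the discrete Fourier transform inverts) is immediate.
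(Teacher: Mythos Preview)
Your strategy is essentially the one the paper uses: exploit the rotational symmetry of $\Reg_n$ to make the relevant $n\times n$ matrix diagonalizable by a DFT-type change of basis, then read off the answer. In the paper this is carried out concretely by choosing the $n\times 2n$ matrix $A$ with entries $a_{j,k}=z_j^{k-1}$, where the $z_j$ are the $2n$-th roots of $(-1)^{n-1}$ with positive real part; one then finds $AK_n=DV$ with $D$ diagonal and $V$ a DFT matrix, so $(AK_n)^{-1}=\tfrac1n V^*D^{-1}$ and
\[
\double m_{1,k}=\frac2n\sum_{j=1}^n\frac{z_j^{\,k-1}}{1+z_j}.
\]
The one place where the paper is sharper than your outline is the final reduction. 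Rather than partial fractions, residues, or a recurrence in $k$, the paper simply computes the consecutive sum $\double m_{1,k}+\double m_{1,k+1}$: the factor $1+z_j$ cancels and one is left with the raw geometric sum $\tfrac2n\sum_j z_j^{\,k-1}$, which evaluates immediately to $0$ for $k$ odd and to $(-1)^{k/2+1}\,\tfrac{2}{n\sin((k-1)\pi/2n)}$ for $k$ even. Summing these telescoping increments from the known value $\double m_{1,1}=\double m_{1,2}=1$ gives~\eqref{eq:regular} at once, with no sign-chasing beyond that. Your proposed endgame would also work, but this trick of passing to $\double m_{1,k}+\double m_{1,k+1}$ (equivalently, working with $\double M$ rather than $M$) is what makes the alternating cosecants and the $\mp1$ fall out without effort.
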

\noindent Note that the right hand side of~\eqref{eq:regular} is invariant under the symmetry between $k$ and $n-k$.

 As we explain in \cref{rmk:KW}, this formula describes the unique $n\times n$ boundary correlation matrix of the Ising model that is invariant under the Kramers--Wannier duality~\cite{KrWa}. The proof of \cref{thm:reg} and its asymptotic consequences are presented in \cref{sec:regular-polygons}.

\begin{figure}
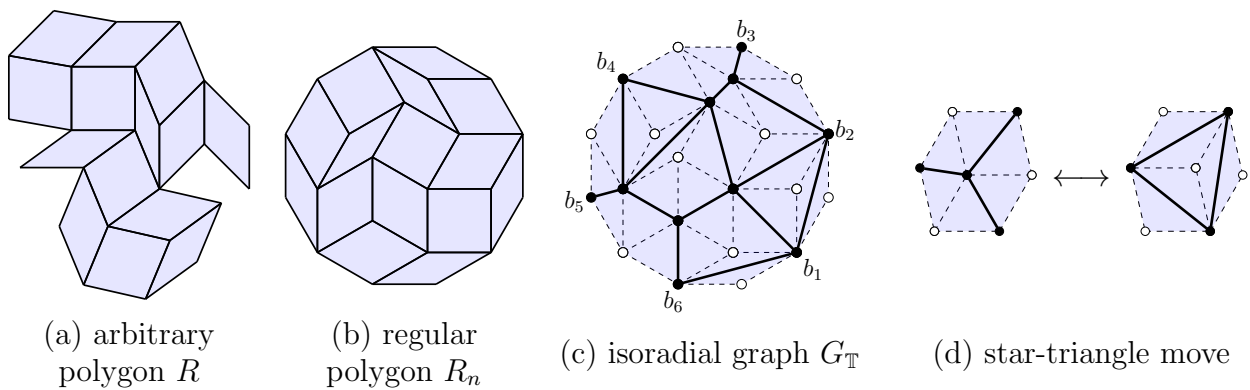
\makebox[1.0\textwidth]{

\setlength{\tabcolsep}{4pt}
} & \scalebox{1.00}{(c) isoradial graph $G_{\Tiling}$} & \scalebox{1.00}{(d) star-triangle move} 

\end{tabular}}

\caption{\label{fig:intro1} 
(a) A rhombus tiling of an arbitrary polygon $\Reg$; (b) a rhombus tiling of a regular polygon $\Reg_n$ for $n=6$; (c) the associated isoradial graph $G_\Tiling$ consists of black vertices and black solid edges; (d) a flip of a rhombus tiling resulting in a star-triangle move on $G_\Tiling$.}
\end{figure}

\subsection{Arbitrary regions}\label{sec:arbitrary-regions}
We now present our main result, \cref{thm:main_span} and \cref{cor:main}, which gives a formula for an arbitrary polygonal region $\Reg$. We identify vectors in the plane with complex numbers and denote $[2n]:=\{1,2,\dots,2n\}$.

Let $\Reg$ be a region whose boundary is a simple closed polygonal chain comprised of $2n$ unit vectors $\v_1,\v_2,\dots,v_{2n}\in\C$ listed and directed in the counterclockwise order.  Any rhombus tiling of $\Reg$ is dual to a \emph{pseudoline arrangement} obtained by connecting the midpoints of the opposite edges of each rhombus; see \figref{fig:intro2}(c). Each pseudoline connects the midpoints of $\v_j$ and $\v_k$ for some $j,k\in[2n]$, and we record the associated matching as a fixed-point-free involution $\tau=\tau_\Reg:[2n]\to[2n]$. By definition, we set $\tau(j):=k$ and $\tau(k):=j$ whenever the midpoints of $\v_j$ and $\v_k$ are connected by the same pseudoline. Clearly, we have 
\begin{equation}\label{eq:v=-v}
  \v_{\tau(j)}=-\v_j \quad\text{for all $j\in[2n]$.}
\end{equation}
Note that $\tau$ depends only on $\Reg$ and not on the choice of a rhombus tiling. For example, if $\Reg$ is given in \figref{fig:intro2}(a), then we have $n=9$ and $\tau:[2n]\to[2n]$ satisfies (cf. \figref{fig:intro2}(c))
\begin{equation}\label{eq:}
\tau(1)=7,\  \tau(2)=18,\  \tau(3)=12,\  \tau(4)=10,\ \tau(5)=8,\ \tau(6)=17,\ \dots,\ \tau(17)=6,\ \tau(18)=2.
\end{equation}

\begin{figure}
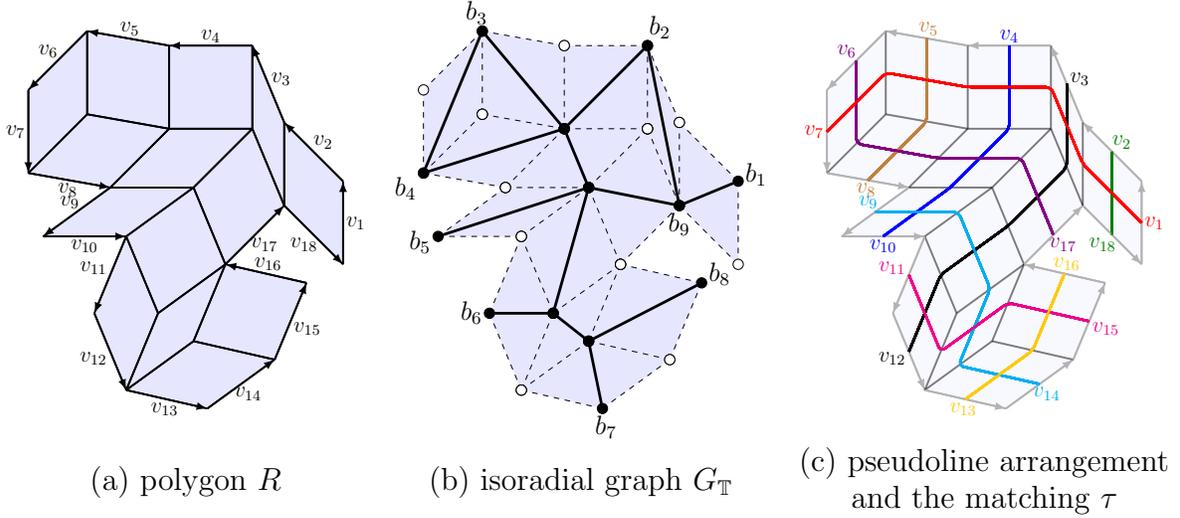


\makebox[1.0\textwidth]{
\setlength{\tabcolsep}{1pt}

}

\end{tabular}
}

\caption{\label{fig:intro2} The boundary vectors $\v_1,\dots,\v_{2n}$, the isoradial graph $G_\Tiling$, and the pseudoline arrangement (which determines the matching $\tau$) associated with a rhombus tiling $\Tiling$ of a polygonal region $\Reg$.}

\end{figure}

We would like to extract a square root of each $\v_j\in\C$ in a particular way. Namely, we choose angles $\th_1,\th_2,\dots,\th_{2n}\in\R$ satisfying 
\begin{align}
\label{eq:v=exp(th)}
\v_k&=\exp(2i\th_k), &&\text{for all $k\in[2n]$;}\\
\label{eq:th+pi/2}
\th_{\tau(k)}&=\th_k+\pi/2, &&\text{when $k<\tau(k)$;}\\
\label{eq:th<th<th<th}
\th_j<\th_k&<\th_{\tau(j)}<\th_{\tau(k)}, &&\text{when $j<k<\tau(j)<\tau(k)$.}
\end{align}
 For instance, if $\Reg$ is convex and $\arg(\v_1)\leq \arg(\v_k)$ for all $k\in[2n]$, we may choose $\th_k:=\arg(\v_k)/2$ for all $k\in[2n]$.

For the purposes of this introduction, we will impose the following ``non-alternating'' restriction on $\Reg$; see \cref{fig:alt}. It will be lifted later in \cref{sec:alternating-regions}.

\begin{definition}\label{dfn:non_alt}
\scalebox{1.0}{We say that $\Reg$ is \emph{alternating} if there exist $1\leq a<b<c<d\leq 2n$ such that}
\begin{equation}\label{eq:alternating}
\v_a=-\v_b=\v_c=-\v_d.
\end{equation}
Otherwise, we say that $R$ is \emph{non-alternating}.
\end{definition}
\noindent For example, in the \emph{generic} case where $\v_j\neq \v_k$ for all $j\neq k$, the region $\Reg$ is non-alternating.

\begin{definition}
For $a,b,c\in[2n]$, we say that $(a,b,c)$ \emph{form a counterclockwise triple} if either $a<b<c$, or $b<c<a$, or $c<a<b$. For a matching $\tau:[2n]\to[2n]$ and $k\in[2n]$, we set
\begin{equation}\label{eq:J_k}
\J_k:=\{j\in[2n]\mid (k,j,\tau(j))\text{ form a counterclockwise triple}\}.
\end{equation}
\end{definition}
\noindent Thus $\J_k$ is an $(n-1)$-element subset of $[2n]$ which contains exactly one element out of each pair $\{j,\tau(j)\}$ disjoint from $\{k,\tau(k)\}$.

\begin{figure}\makebox[1.0\textwidth]{

\setlength\arrayrulewidth{1pt}

\setlength{\tabcolsep}{4pt}
\begin{tabular}{c|c}\\
\scalebox{0.9}{
\begin{tabular}{ccc}

\scalebox{0.65}{
\begin{tikzpicture}[scale=1.15 ,baseline=(ZUZU.base)]\coordinate(ZUZU) at (0,0); 
\coordinate (X0) at (0.500,1.566);
\coordinate (X1) at (-0.500,1.566);
\coordinate (X2) at (-1.000,0.700);
\coordinate (X3) at (-0.000,0.700);
\coordinate (X4) at (0.000,2.432);
\coordinate (X5) at (-1.000,2.432);
\coordinate (X6) at (-0.000,-0.300);
\coordinate (X7) at (0.866,-0.800);
\coordinate (X8) at (0.866,0.200);
\coordinate (X9) at (-1.500,1.566);
\coordinate (X10) at (-1.000,-0.300);
\coordinate (X11) at (-2.366,1.066);
\coordinate (X12) at (-1.866,0.200);
\coordinate (X13) at (1.366,1.066);
\coordinate (X14) at (0.000,0.700);
\coordinate (X15) at (-1.866,-0.800);
\coordinate (X16) at (1.366,0.066);
\coordinate (X17) at (-0.866,-0.800);
\coordinate (X18) at (0.866,1.932);
\coordinate (X19) at (-0.000,-1.300);
\coordinate (X20) at (-1.866,1.932);
\coordinate (X21) at (-2.366,0.066);
\coordinate (X22) at (-1.000,-1.300);
\coordinate (X23) at (0.433,-1.050);
\coordinate (X24) at (1.116,-0.367);
\coordinate (X25) at (1.366,0.566);
\coordinate (X26) at (1.116,1.499);
\coordinate (X27) at (0.433,2.182);
\coordinate (X28) at (-0.500,2.432);
\coordinate (X29) at (-1.433,2.182);
\coordinate (X30) at (-2.116,1.499);
\coordinate (X31) at (-2.366,0.566);
\coordinate (X32) at (-2.116,-0.367);
\coordinate (X33) at (-1.433,-1.050);
\coordinate (X34) at (-0.500,-1.300);
\fill [opacity=0.10,blue] (X0)-- (X1)-- (X2)-- (X3) -- cycle;
\draw [line width=1pt,black,opacity=1.00] (X0)-- (X1);
\draw [line width=1pt,black,opacity=1.00] (X1)-- (X2);
\draw [line width=1pt,black,opacity=1.00] (X2)-- (X3);
\draw [line width=1pt,black,opacity=1.00] (X3)-- (X0);
\fill [opacity=0.10,blue] (X4)-- (X5)-- (X1)-- (X0) -- cycle;
\draw [line width=1pt,black,opacity=1.00] (X4)-- (X5);
\draw [line width=1pt,black,opacity=1.00] (X5)-- (X1);
\draw [line width=1pt,black,opacity=1.00] (X1)-- (X0);
\draw [line width=1pt,black,opacity=1.00] (X0)-- (X4);
\fill [opacity=0.10,blue] (X3)-- (X6)-- (X7)-- (X8) -- cycle;
\draw [line width=1pt,black,opacity=1.00] (X3)-- (X6);
\draw [line width=1pt,black,opacity=1.00] (X6)-- (X7);
\draw [line width=1pt,black,opacity=1.00] (X7)-- (X8);
\draw [line width=1pt,black,opacity=1.00] (X8)-- (X3);
\fill [opacity=0.10,blue] (X5)-- (X9)-- (X2)-- (X1) -- cycle;
\draw [line width=1pt,black,opacity=1.00] (X5)-- (X9);
\draw [line width=1pt,black,opacity=1.00] (X9)-- (X2);
\draw [line width=1pt,black,opacity=1.00] (X2)-- (X1);
\draw [line width=1pt,black,opacity=1.00] (X1)-- (X5);
\fill [opacity=0.10,blue] (X2)-- (X10)-- (X6)-- (X3) -- cycle;
\draw [line width=1pt,black,opacity=1.00] (X2)-- (X10);
\draw [line width=1pt,black,opacity=1.00] (X10)-- (X6);
\draw [line width=1pt,black,opacity=1.00] (X6)-- (X3);
\draw [line width=1pt,black,opacity=1.00] (X3)-- (X2);
\fill [opacity=0.10,blue] (X9)-- (X11)-- (X12)-- (X2) -- cycle;
\draw [line width=1pt,black,opacity=1.00] (X9)-- (X11);
\draw [line width=1pt,black,opacity=1.00] (X11)-- (X12);
\draw [line width=1pt,black,opacity=1.00] (X12)-- (X2);
\draw [line width=1pt,black,opacity=1.00] (X2)-- (X9);
\fill [opacity=0.10,blue] (X8)-- (X13)-- (X0)-- (X14) -- cycle;
\draw [line width=1pt,black,opacity=1.00] (X8)-- (X13);
\draw [line width=1pt,black,opacity=1.00] (X13)-- (X0);
\draw [line width=1pt,black,opacity=1.00] (X0)-- (X14);
\draw [line width=1pt,black,opacity=1.00] (X14)-- (X8);
\fill [opacity=0.10,blue] (X12)-- (X15)-- (X10)-- (X2) -- cycle;
\draw [line width=1pt,black,opacity=1.00] (X12)-- (X15);
\draw [line width=1pt,black,opacity=1.00] (X15)-- (X10);
\draw [line width=1pt,black,opacity=1.00] (X10)-- (X2);
\draw [line width=1pt,black,opacity=1.00] (X2)-- (X12);
\fill [opacity=0.10,blue] (X7)-- (X16)-- (X13)-- (X8) -- cycle;
\draw [line width=1pt,black,opacity=1.00] (X7)-- (X16);
\draw [line width=1pt,black,opacity=1.00] (X16)-- (X13);
\draw [line width=1pt,black,opacity=1.00] (X13)-- (X8);
\draw [line width=1pt,black,opacity=1.00] (X8)-- (X7);
\fill [opacity=0.10,blue] (X15)-- (X17)-- (X6)-- (X10) -- cycle;
\draw [line width=1pt,black,opacity=1.00] (X15)-- (X17);
\draw [line width=1pt,black,opacity=1.00] (X17)-- (X6);
\draw [line width=1pt,black,opacity=1.00] (X6)-- (X10);
\draw [line width=1pt,black,opacity=1.00] (X10)-- (X15);
\fill [opacity=0.10,blue] (X13)-- (X18)-- (X4)-- (X0) -- cycle;
\draw [line width=1pt,black,opacity=1.00] (X13)-- (X18);
\draw [line width=1pt,black,opacity=1.00] (X18)-- (X4);
\draw [line width=1pt,black,opacity=1.00] (X4)-- (X0);
\draw [line width=1pt,black,opacity=1.00] (X0)-- (X13);
\fill [opacity=0.10,blue] (X17)-- (X19)-- (X7)-- (X6) -- cycle;
\draw [line width=1pt,black,opacity=1.00] (X17)-- (X19);
\draw [line width=1pt,black,opacity=1.00] (X19)-- (X7);
\draw [line width=1pt,black,opacity=1.00] (X7)-- (X6);
\draw [line width=1pt,black,opacity=1.00] (X6)-- (X17);
\fill [opacity=0.10,blue] (X5)-- (X20)-- (X11)-- (X9) -- cycle;
\draw [line width=1pt,black,opacity=1.00] (X5)-- (X20);
\draw [line width=1pt,black,opacity=1.00] (X20)-- (X11);
\draw [line width=1pt,black,opacity=1.00] (X11)-- (X9);
\draw [line width=1pt,black,opacity=1.00] (X9)-- (X5);
\fill [opacity=0.10,blue] (X11)-- (X21)-- (X15)-- (X12) -- cycle;
\draw [line width=1pt,black,opacity=1.00] (X11)-- (X21);
\draw [line width=1pt,black,opacity=1.00] (X21)-- (X15);
\draw [line width=1pt,black,opacity=1.00] (X15)-- (X12);
\draw [line width=1pt,black,opacity=1.00] (X12)-- (X11);
\fill [opacity=0.10,blue] (X15)-- (X22)-- (X19)-- (X17) -- cycle;
\draw [line width=1pt,black,opacity=1.00] (X15)-- (X22);
\draw [line width=1pt,black,opacity=1.00] (X22)-- (X19);
\draw [line width=1pt,black,opacity=1.00] (X19)-- (X17);
\draw [line width=1pt,black,opacity=1.00] (X17)-- (X15);
\end{tikzpicture}
}
 & \scalebox{0.65}{
\begin{tikzpicture}[scale=1.15 ,baseline=(ZUZU.base)]\coordinate(ZUZU) at (0,0); 
\coordinate (X0) at (-0.223,0.975);
\coordinate (X1) at (-0.223,1.975);
\coordinate (X2) at (-1.223,1.975);
\coordinate (X3) at (-1.223,0.975);
\coordinate (X4) at (0.777,0.975);
\coordinate (X5) at (0.777,1.975);
\coordinate (X6) at (0.000,0.000);
\coordinate (X7) at (1.000,0.000);
\coordinate (X8) at (-2.223,0.975);
\coordinate (X9) at (-2.223,-0.025);
\coordinate (X10) at (-1.223,-0.025);
\coordinate (X11) at (-2.223,1.975);
\coordinate (X12) at (0.500,0.000);
\coordinate (X13) at (0.889,0.487);
\coordinate (X14) at (0.777,1.475);
\coordinate (X15) at (0.277,1.975);
\coordinate (X16) at (-0.723,1.975);
\coordinate (X17) at (-1.723,1.975);
\coordinate (X18) at (-2.223,1.475);
\coordinate (X19) at (-2.223,0.475);
\coordinate (X20) at (-1.723,-0.025);
\coordinate (X21) at (-1.223,0.475);
\coordinate (X22) at (-0.723,0.975);
\coordinate (X23) at (-0.111,0.487);
\fill [opacity=0.10,blue] (X0)-- (X1)-- (X2)-- (X3) -- cycle;
\draw [line width=1pt,black,opacity=1.00] (X0)-- (X1);
\draw [line width=1pt,black,opacity=1.00] (X1)-- (X2);
\draw [line width=1pt,black,opacity=1.00] (X2)-- (X3);
\draw [line width=1pt,black,opacity=1.00] (X3)-- (X0);
\fill [opacity=0.10,blue] (X4)-- (X5)-- (X1)-- (X0) -- cycle;
\draw [line width=1pt,black,opacity=1.00] (X4)-- (X5);
\draw [line width=1pt,black,opacity=1.00] (X5)-- (X1);
\draw [line width=1pt,black,opacity=1.00] (X1)-- (X0);
\draw [line width=1pt,black,opacity=1.00] (X0)-- (X4);
\fill [opacity=0.10,blue] (X0)-- (X6)-- (X7)-- (X4) -- cycle;
\draw [line width=1pt,black,opacity=1.00] (X0)-- (X6);
\draw [line width=1pt,black,opacity=1.00] (X6)-- (X7);
\draw [line width=1pt,black,opacity=1.00] (X7)-- (X4);
\draw [line width=1pt,black,opacity=1.00] (X4)-- (X0);
\fill [opacity=0.10,blue] (X8)-- (X9)-- (X10)-- (X3) -- cycle;
\draw [line width=1pt,black,opacity=1.00] (X8)-- (X9);
\draw [line width=1pt,black,opacity=1.00] (X9)-- (X10);
\draw [line width=1pt,black,opacity=1.00] (X10)-- (X3);
\draw [line width=1pt,black,opacity=1.00] (X3)-- (X8);
\fill [opacity=0.10,blue] (X2)-- (X11)-- (X8)-- (X3) -- cycle;
\draw [line width=1pt,black,opacity=1.00] (X2)-- (X11);
\draw [line width=1pt,black,opacity=1.00] (X11)-- (X8);
\draw [line width=1pt,black,opacity=1.00] (X8)-- (X3);
\draw [line width=1pt,black,opacity=1.00] (X3)-- (X2);
\end{tikzpicture}
}
 & \scalebox{0.65}{
\begin{tikzpicture}[scale=1.15 ,baseline=(ZUZU.base)]\coordinate(ZUZU) at (0,0); 
\coordinate (X0) at (-1.090,2.631);
\coordinate (X1) at (-1.797,1.924);
\coordinate (X2) at (-1.414,1.000);
\coordinate (X3) at (-0.707,1.707);
\coordinate (X4) at (-2.090,2.631);
\coordinate (X5) at (-2.797,1.924);
\coordinate (X6) at (-1.797,0.924);
\coordinate (X7) at (-1.414,0.000);
\coordinate (X8) at (-3.075,2.805);
\coordinate (X9) at (-3.782,2.098);
\coordinate (X10) at (-0.707,0.707);
\coordinate (X11) at (-2.606,0.336);
\coordinate (X12) at (-2.223,-0.588);
\coordinate (X13) at (-2.606,-1.512);
\coordinate (X14) at (-1.631,-1.734);
\coordinate (X15) at (-1.248,-0.810);
\coordinate (X16) at (-2.797,0.924);
\coordinate (X17) at (-3.782,1.098);
\coordinate (X18) at (-3.606,0.336);
\coordinate (X19) at (0.000,0.000);
\coordinate (X20) at (0.000,1.000);
\coordinate (X21) at (-0.439,-0.223);
\coordinate (X22) at (-0.822,-1.146);
\coordinate (X23) at (-2.989,-0.588);
\coordinate (X24) at (0.000,0.500);
\coordinate (X25) at (-0.354,1.354);
\coordinate (X26) at (-0.898,2.169);
\coordinate (X27) at (-1.590,2.631);
\coordinate (X28) at (-2.582,2.718);
\coordinate (X29) at (-3.428,2.451);
\coordinate (X30) at (-3.782,1.598);
\coordinate (X31) at (-3.289,1.011);
\coordinate (X32) at (-3.201,0.630);
\coordinate (X33) at (-3.106,0.336);
\coordinate (X34) at (-2.797,-0.126);
\coordinate (X35) at (-2.797,-1.050);
\coordinate (X36) at (-2.118,-1.623);
\coordinate (X37) at (-1.226,-1.440);
\coordinate (X38) at (-0.631,-0.684);
\coordinate (X39) at (-0.927,-0.111);
\coordinate (X40) at (-1.061,0.354);
\coordinate (X41) at (-0.354,0.354);
\fill [opacity=0.10,blue] (X0)-- (X1)-- (X2)-- (X3) -- cycle;
\draw [line width=1pt,black,opacity=1.00] (X0)-- (X1);
\draw [line width=1pt,black,opacity=1.00] (X1)-- (X2);
\draw [line width=1pt,black,opacity=1.00] (X2)-- (X3);
\draw [line width=1pt,black,opacity=1.00] (X3)-- (X0);
\fill [opacity=0.10,blue] (X4)-- (X5)-- (X1)-- (X0) -- cycle;
\draw [line width=1pt,black,opacity=1.00] (X4)-- (X5);
\draw [line width=1pt,black,opacity=1.00] (X5)-- (X1);
\draw [line width=1pt,black,opacity=1.00] (X1)-- (X0);
\draw [line width=1pt,black,opacity=1.00] (X0)-- (X4);
\fill [opacity=0.10,blue] (X1)-- (X6)-- (X7)-- (X2) -- cycle;
\draw [line width=1pt,black,opacity=1.00] (X1)-- (X6);
\draw [line width=1pt,black,opacity=1.00] (X6)-- (X7);
\draw [line width=1pt,black,opacity=1.00] (X7)-- (X2);
\draw [line width=1pt,black,opacity=1.00] (X2)-- (X1);
\fill [opacity=0.10,blue] (X8)-- (X9)-- (X5)-- (X4) -- cycle;
\draw [line width=1pt,black,opacity=1.00] (X8)-- (X9);
\draw [line width=1pt,black,opacity=1.00] (X9)-- (X5);
\draw [line width=1pt,black,opacity=1.00] (X5)-- (X4);
\draw [line width=1pt,black,opacity=1.00] (X4)-- (X8);
\fill [opacity=0.10,blue] (X7)-- (X10)-- (X3)-- (X2) -- cycle;
\draw [line width=1pt,black,opacity=1.00] (X7)-- (X10);
\draw [line width=1pt,black,opacity=1.00] (X10)-- (X3);
\draw [line width=1pt,black,opacity=1.00] (X3)-- (X2);
\draw [line width=1pt,black,opacity=1.00] (X2)-- (X7);
\fill [opacity=0.10,blue] (X6)-- (X11)-- (X12)-- (X7) -- cycle;
\draw [line width=1pt,black,opacity=1.00] (X6)-- (X11);
\draw [line width=1pt,black,opacity=1.00] (X11)-- (X12);
\draw [line width=1pt,black,opacity=1.00] (X12)-- (X7);
\draw [line width=1pt,black,opacity=1.00] (X7)-- (X6);
\fill [opacity=0.10,blue] (X13)-- (X14)-- (X15)-- (X12) -- cycle;
\draw [line width=1pt,black,opacity=1.00] (X13)-- (X14);
\draw [line width=1pt,black,opacity=1.00] (X14)-- (X15);
\draw [line width=1pt,black,opacity=1.00] (X15)-- (X12);
\draw [line width=1pt,black,opacity=1.00] (X12)-- (X13);
\fill [opacity=0.10,blue] (X5)-- (X16)-- (X6)-- (X1) -- cycle;
\draw [line width=1pt,black,opacity=1.00] (X5)-- (X16);
\draw [line width=1pt,black,opacity=1.00] (X16)-- (X6);
\draw [line width=1pt,black,opacity=1.00] (X6)-- (X1);
\draw [line width=1pt,black,opacity=1.00] (X1)-- (X5);
\fill [opacity=0.10,blue] (X9)-- (X17)-- (X16)-- (X5) -- cycle;
\draw [line width=1pt,black,opacity=1.00] (X9)-- (X17);
\draw [line width=1pt,black,opacity=1.00] (X17)-- (X16);
\draw [line width=1pt,black,opacity=1.00] (X16)-- (X5);
\draw [line width=1pt,black,opacity=1.00] (X5)-- (X9);
\fill [opacity=0.10,blue] (X16)-- (X18)-- (X11)-- (X6) -- cycle;
\draw [line width=1pt,black,opacity=1.00] (X16)-- (X18);
\draw [line width=1pt,black,opacity=1.00] (X18)-- (X11);
\draw [line width=1pt,black,opacity=1.00] (X11)-- (X6);
\draw [line width=1pt,black,opacity=1.00] (X6)-- (X16);
\fill [opacity=0.10,blue] (X10)-- (X19)-- (X20)-- (X3) -- cycle;
\draw [line width=1pt,black,opacity=1.00] (X10)-- (X19);
\draw [line width=1pt,black,opacity=1.00] (X19)-- (X20);
\draw [line width=1pt,black,opacity=1.00] (X20)-- (X3);
\draw [line width=1pt,black,opacity=1.00] (X3)-- (X10);
\fill [opacity=0.10,blue] (X15)-- (X21)-- (X7)-- (X12) -- cycle;
\draw [line width=1pt,black,opacity=1.00] (X15)-- (X21);
\draw [line width=1pt,black,opacity=1.00] (X21)-- (X7);
\draw [line width=1pt,black,opacity=1.00] (X7)-- (X12);
\draw [line width=1pt,black,opacity=1.00] (X12)-- (X15);
\fill [opacity=0.10,blue] (X14)-- (X22)-- (X21)-- (X15) -- cycle;
\draw [line width=1pt,black,opacity=1.00] (X14)-- (X22);
\draw [line width=1pt,black,opacity=1.00] (X22)-- (X21);
\draw [line width=1pt,black,opacity=1.00] (X21)-- (X15);
\draw [line width=1pt,black,opacity=1.00] (X15)-- (X14);
\fill [opacity=0.10,blue] (X11)-- (X23)-- (X13)-- (X12) -- cycle;
\draw [line width=1pt,black,opacity=1.00] (X11)-- (X23);
\draw [line width=1pt,black,opacity=1.00] (X23)-- (X13);
\draw [line width=1pt,black,opacity=1.00] (X13)-- (X12);
\draw [line width=1pt,black,opacity=1.00] (X12)-- (X11);
\end{tikzpicture}
}

\end{tabular} 
}
& 
\scalebox{0.9}{
\begin{tabular}{cc}

\scalebox{0.65}{
\begin{tikzpicture}[scale=1.15 ,baseline=(ZUZU.base)]\coordinate(ZUZU) at (0,0); 
\coordinate (X0) at (0.000,1.200);
\coordinate (X1) at (-1.000,1.200);
\coordinate (X2) at (-1.000,0.200);
\coordinate (X3) at (0.000,0.200);
\coordinate (X4) at (-2.000,1.200);
\coordinate (X5) at (-2.000,0.200);
\coordinate (X6) at (1.000,0.200);
\coordinate (X7) at (1.000,1.200);
\coordinate (X8) at (-2.000,-0.800);
\coordinate (X9) at (-1.000,-0.800);
\coordinate (X10) at (-0.000,-0.800);
\coordinate (X11) at (1.000,-0.800);
\coordinate (X12) at (0.500,-0.800);
\coordinate (X13) at (1.000,-0.300);
\coordinate (X14) at (1.000,0.700);
\coordinate (X15) at (0.500,1.200);
\coordinate (X16) at (-0.500,1.200);
\coordinate (X17) at (-1.500,1.200);
\coordinate (X18) at (-2.000,0.700);
\coordinate (X19) at (-2.000,-0.300);
\coordinate (X20) at (-1.500,-0.800);
\coordinate (X21) at (-1.000,-0.300);
\coordinate (X22) at (-0.500,0.200);
\coordinate (X23) at (0.000,-0.800);
\coordinate (X24) at (0.000,-0.300);
\fill [opacity=0.10,blue] (X0)-- (X1)-- (X2)-- (X3) -- cycle;
\draw [line width=1pt,black,opacity=1.00] (X0)-- (X1);
\draw [line width=1pt,black,opacity=1.00] (X1)-- (X2);
\draw [line width=1pt,black,opacity=1.00] (X2)-- (X3);
\draw [line width=1pt,black,opacity=1.00] (X3)-- (X0);
\fill [opacity=0.10,blue] (X1)-- (X4)-- (X5)-- (X2) -- cycle;
\draw [line width=1pt,black,opacity=1.00] (X1)-- (X4);
\draw [line width=1pt,black,opacity=1.00] (X4)-- (X5);
\draw [line width=1pt,black,opacity=1.00] (X5)-- (X2);
\draw [line width=1pt,black,opacity=1.00] (X2)-- (X1);
\fill [opacity=0.10,blue] (X6)-- (X7)-- (X0)-- (X3) -- cycle;
\draw [line width=1pt,black,opacity=1.00] (X6)-- (X7);
\draw [line width=1pt,black,opacity=1.00] (X7)-- (X0);
\draw [line width=1pt,black,opacity=1.00] (X0)-- (X3);
\draw [line width=1pt,black,opacity=1.00] (X3)-- (X6);
\fill [opacity=0.10,blue] (X5)-- (X8)-- (X9)-- (X2) -- cycle;
\draw [line width=1pt,black,opacity=1.00] (X5)-- (X8);
\draw [line width=1pt,black,opacity=1.00] (X8)-- (X9);
\draw [line width=1pt,black,opacity=1.00] (X9)-- (X2);
\draw [line width=1pt,black,opacity=1.00] (X2)-- (X5);
\fill [opacity=0.10,blue] (X10)-- (X11)-- (X6)-- (X3) -- cycle;
\draw [line width=1pt,black,opacity=1.00] (X10)-- (X11);
\draw [line width=1pt,black,opacity=1.00] (X11)-- (X6);
\draw [line width=1pt,black,opacity=1.00] (X6)-- (X3);
\draw [line width=1pt,black,opacity=1.00] (X3)-- (X10);
\node[inner sep=1.5pt,scale=1,anchor=180,color=red] (Vd) at (X13) {$\v_{d}$};
\draw[->,line width=2pt,red,>={latex}] (X11) -- (X6);
\node[inner sep=1.5pt,scale=1,anchor=360,color=red] (Va) at (X19) {$\v_{a}$};
\draw[->,line width=2pt,red,>={latex}] (X5) -- (X8);
\node[inner sep=1.5pt,scale=1,anchor=540,color=red] (Vb) at (X21) {$\v_{b}$};
\draw[->,line width=2pt,red,>={latex}] (X9) -- (X2);
\node[inner sep=1.5pt,scale=1,anchor=360,color=red] (Vc) at (X24) {$\v_{c}$};
\draw[->,line width=2pt,red,>={latex}] (X3) -- (X23);
\end{tikzpicture}
}
 & \scalebox{0.65}{
\begin{tikzpicture}[scale=1.15 ,baseline=(ZUZU.base)]\coordinate(ZUZU) at (0,0); 
\coordinate (X0) at (-1.090,2.631);
\coordinate (X1) at (-1.797,1.924);
\coordinate (X2) at (-1.414,1.000);
\coordinate (X3) at (-0.707,1.707);
\coordinate (X4) at (-2.090,2.631);
\coordinate (X5) at (-2.797,1.924);
\coordinate (X6) at (-1.797,0.924);
\coordinate (X7) at (-1.414,0.000);
\coordinate (X8) at (-3.075,2.805);
\coordinate (X9) at (-3.782,2.098);
\coordinate (X10) at (-0.707,0.707);
\coordinate (X11) at (-2.606,0.336);
\coordinate (X12) at (-2.223,-0.588);
\coordinate (X13) at (-2.606,-1.512);
\coordinate (X14) at (-1.899,-2.219);
\coordinate (X15) at (-1.516,-1.295);
\coordinate (X16) at (-2.797,0.924);
\coordinate (X17) at (-3.782,1.098);
\coordinate (X18) at (-3.606,0.336);
\coordinate (X19) at (0.000,0.000);
\coordinate (X20) at (0.000,1.000);
\coordinate (X21) at (-0.707,-0.707);
\coordinate (X22) at (-1.090,-1.631);
\coordinate (X23) at (-2.989,-0.588);
\coordinate (X24) at (0.000,0.500);
\coordinate (X25) at (-0.354,1.354);
\coordinate (X26) at (-0.898,2.169);
\coordinate (X27) at (-1.590,2.631);
\coordinate (X28) at (-2.582,2.718);
\coordinate (X29) at (-3.428,2.451);
\coordinate (X30) at (-3.782,1.598);
\coordinate (X31) at (-3.289,1.011);
\coordinate (X32) at (-3.201,0.630);
\coordinate (X33) at (-3.106,0.336);
\coordinate (X34) at (-2.797,-0.126);
\coordinate (X35) at (-2.797,-1.050);
\coordinate (X36) at (-2.252,-1.865);
\coordinate (X37) at (-1.494,-1.925);
\coordinate (X38) at (-0.898,-1.169);
\coordinate (X39) at (-1.061,-0.354);
\coordinate (X40) at (-1.061,0.354);
\coordinate (X41) at (-0.354,0.354);
\fill [opacity=0.10,blue] (X0)-- (X1)-- (X2)-- (X3) -- cycle;
\draw [line width=1pt,black,opacity=1.00] (X0)-- (X1);
\draw [line width=1pt,black,opacity=1.00] (X1)-- (X2);
\draw [line width=1pt,black,opacity=1.00] (X2)-- (X3);
\draw [line width=1pt,black,opacity=1.00] (X3)-- (X0);
\fill [opacity=0.10,blue] (X4)-- (X5)-- (X1)-- (X0) -- cycle;
\draw [line width=1pt,black,opacity=1.00] (X4)-- (X5);
\draw [line width=1pt,black,opacity=1.00] (X5)-- (X1);
\draw [line width=1pt,black,opacity=1.00] (X1)-- (X0);
\draw [line width=1pt,black,opacity=1.00] (X0)-- (X4);
\fill [opacity=0.10,blue] (X1)-- (X6)-- (X7)-- (X2) -- cycle;
\draw [line width=1pt,black,opacity=1.00] (X1)-- (X6);
\draw [line width=1pt,black,opacity=1.00] (X6)-- (X7);
\draw [line width=1pt,black,opacity=1.00] (X7)-- (X2);
\draw [line width=1pt,black,opacity=1.00] (X2)-- (X1);
\fill [opacity=0.10,blue] (X8)-- (X9)-- (X5)-- (X4) -- cycle;
\draw [line width=1pt,black,opacity=1.00] (X8)-- (X9);
\draw [line width=1pt,black,opacity=1.00] (X9)-- (X5);
\draw [line width=1pt,black,opacity=1.00] (X5)-- (X4);
\draw [line width=1pt,black,opacity=1.00] (X4)-- (X8);
\fill [opacity=0.10,blue] (X7)-- (X10)-- (X3)-- (X2) -- cycle;
\draw [line width=1pt,black,opacity=1.00] (X7)-- (X10);
\draw [line width=1pt,black,opacity=1.00] (X10)-- (X3);
\draw [line width=1pt,black,opacity=1.00] (X3)-- (X2);
\draw [line width=1pt,black,opacity=1.00] (X2)-- (X7);
\fill [opacity=0.10,blue] (X6)-- (X11)-- (X12)-- (X7) -- cycle;
\draw [line width=1pt,black,opacity=1.00] (X6)-- (X11);
\draw [line width=1pt,black,opacity=1.00] (X11)-- (X12);
\draw [line width=1pt,black,opacity=1.00] (X12)-- (X7);
\draw [line width=1pt,black,opacity=1.00] (X7)-- (X6);
\fill [opacity=0.10,blue] (X13)-- (X14)-- (X15)-- (X12) -- cycle;
\draw [line width=1pt,black,opacity=1.00] (X13)-- (X14);
\draw [line width=1pt,black,opacity=1.00] (X14)-- (X15);
\draw [line width=1pt,black,opacity=1.00] (X15)-- (X12);
\draw [line width=1pt,black,opacity=1.00] (X12)-- (X13);
\fill [opacity=0.10,blue] (X5)-- (X16)-- (X6)-- (X1) -- cycle;
\draw [line width=1pt,black,opacity=1.00] (X5)-- (X16);
\draw [line width=1pt,black,opacity=1.00] (X16)-- (X6);
\draw [line width=1pt,black,opacity=1.00] (X6)-- (X1);
\draw [line width=1pt,black,opacity=1.00] (X1)-- (X5);
\fill [opacity=0.10,blue] (X9)-- (X17)-- (X16)-- (X5) -- cycle;
\draw [line width=1pt,black,opacity=1.00] (X9)-- (X17);
\draw [line width=1pt,black,opacity=1.00] (X17)-- (X16);
\draw [line width=1pt,black,opacity=1.00] (X16)-- (X5);
\draw [line width=1pt,black,opacity=1.00] (X5)-- (X9);
\fill [opacity=0.10,blue] (X16)-- (X18)-- (X11)-- (X6) -- cycle;
\draw [line width=1pt,black,opacity=1.00] (X16)-- (X18);
\draw [line width=1pt,black,opacity=1.00] (X18)-- (X11);
\draw [line width=1pt,black,opacity=1.00] (X11)-- (X6);
\draw [line width=1pt,black,opacity=1.00] (X6)-- (X16);
\fill [opacity=0.10,blue] (X10)-- (X19)-- (X20)-- (X3) -- cycle;
\draw [line width=1pt,black,opacity=1.00] (X10)-- (X19);
\draw [line width=1pt,black,opacity=1.00] (X19)-- (X20);
\draw [line width=1pt,black,opacity=1.00] (X20)-- (X3);
\draw [line width=1pt,black,opacity=1.00] (X3)-- (X10);
\fill [opacity=0.10,blue] (X15)-- (X21)-- (X7)-- (X12) -- cycle;
\draw [line width=1pt,black,opacity=1.00] (X15)-- (X21);
\draw [line width=1pt,black,opacity=1.00] (X21)-- (X7);
\draw [line width=1pt,black,opacity=1.00] (X7)-- (X12);
\draw [line width=1pt,black,opacity=1.00] (X12)-- (X15);
\fill [opacity=0.10,blue] (X14)-- (X22)-- (X21)-- (X15) -- cycle;
\draw [line width=1pt,black,opacity=1.00] (X14)-- (X22);
\draw [line width=1pt,black,opacity=1.00] (X22)-- (X21);
\draw [line width=1pt,black,opacity=1.00] (X21)-- (X15);
\draw [line width=1pt,black,opacity=1.00] (X15)-- (X14);
\fill [opacity=0.10,blue] (X11)-- (X23)-- (X13)-- (X12) -- cycle;
\draw [line width=1pt,black,opacity=1.00] (X11)-- (X23);
\draw [line width=1pt,black,opacity=1.00] (X23)-- (X13);
\draw [line width=1pt,black,opacity=1.00] (X13)-- (X12);
\draw [line width=1pt,black,opacity=1.00] (X12)-- (X11);
\node[inner sep=1.5pt,scale=1,anchor=225,color=red] (Vd) at (X25) {$\v_{d}$};
\draw[->,line width=2pt,red,>={latex}] (X20) -- (X3);
\node[inner sep=1.5pt,scale=1,anchor=405,color=red] (Va) at (X36) {$\v_{a}$};
\draw[->,line width=2pt,red,>={latex}] (X13) -- (X14);
\node[inner sep=1.5pt,scale=1,anchor=585,color=red] (Vb) at (X39) {$\v_{b}$};
\draw[->,line width=2pt,red,>={latex}] (X21) -- (X7);
\node[inner sep=1.5pt,scale=1,anchor=405,color=red] (Vc) at (X41) {$\v_{c}$};
\draw[->,line width=2pt,red,>={latex}] (X10) -- (X19);
\end{tikzpicture}
}

\end{tabular}
}
 \\

 & \\

\scalebox{1}{non-alternating regions (good)} & \scalebox{1}{alternating regions (\textcolor{red}{bad})} 

\end{tabular}}

\caption{\label{fig:alt} The formula in \cref{cor:main} applies to non-alternating regions shown on the left. In \cref{sec:alternating-regions}, we give a formula that applies to arbitrary regions, including the alternating regions shown on the right. We indicate the vectors $\v_a,\v_b,\v_c,\v_d$ satisfying~\eqref{eq:alternating} for each alternating region.}
\end{figure}
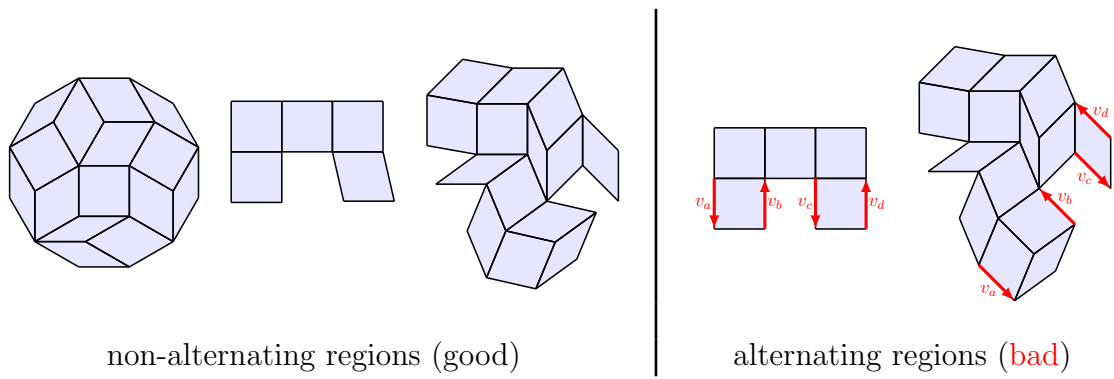

\begin{definition}\label{dfn:curve}
To any region $\Reg$ we associate a curve $\curve_\Reg:\R\to\R^{2n}$ with coordinates $\curve_\Reg(t)=(\g_1(t),\g_2(t),\dots,\g_{2n}(t))$ given by
\begin{equation}\label{eq:curve}
  \g_k(t)= (-1)^{|\J_k\cap [k]|}\prod_{j\in \J_k} \sin(t-\th_j) \qquad\text{for $k\in[2n]$.}
\end{equation}
\end{definition}
\noindent We remark that $\J_k\cap [k]=\{j\in[2n]\mid j<\tau(j)<k\}$. 
 With a more natural ``cyclically symmetric'' choice of conventions~\eqref{eq:aff:curve}, the extra sign $(-1)^{|\J_k\cap [k]|}$ disappears.

\begin{proposition}\label{prop:span_dim_n}
Suppose that $\Reg$ is non-alternating. Then the linear span $\Span(\curve_\Reg)\subset\R^{2n}$ of the vectors $\{\curve_\Reg(t)\}_{t\in\R}$ has dimension $n$.
\end{proposition}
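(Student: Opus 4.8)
The plan is to reduce the statement to a spanning claim about univariate polynomials, and then to prove that by induction using a recursion satisfied by the coordinates of $\curve_\Reg$.

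First, $\dim\Span(\curve_\Reg)$ equals the dimension of the span of the coordinate functions $\g_1,\dots,\g_{2n}$ (row rank $=$ column rank of the matrix $(\g_k(t))$). Put $z=e^{it}$, $w=z^2=e^{2it}$, and fix square roots $\mu_j=e^{i\th_j}$ of $\v_j$ (using \eqref{eq:v=exp(th)}). Then $\sin(t-\th_j)=\frac{1}{2i\mu_j z}(w-\v_j)$, so \eqref{eq:curve} gives
\[
\g_k(t)=c_k\,z^{-(n-1)}\tilde P_k(w),\qquad
\tilde P_k(w):=\prod_{j\in\J_k}(w-\v_j),\qquad
c_k:=\frac{(-1)^{|\J_k\cap[k]|}}{(2i)^{n-1}\prod_{j\in\J_k}\mu_j}\ne 0 .
\]
Rescaling the $k$-th coordinate by $c_k^{-1}$ and multiplying all coordinates by the nowhere-zero factor $z^{n-1}$ does not change the dimension of the span, and since the $\tilde P_k$ are polynomials one gets $\dim\Span(\curve_\Reg)=\dim\Span\{\tilde P_1,\dots,\tilde P_{2n}\}$, with the $\tilde P_k$ viewed inside $\C[w]_{\le n-1}$. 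As each $\tilde P_k$ is monic of degree $n-1$, this is at most $n$, and the proposition becomes: the $\tilde P_k$ span the $n$-dimensional space $\C[w]_{\le n-1}$.

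The key tool is a recursion: comparing the counterclockwise-triple condition for $\J_k$ and $\J_{k+1}$ (indices mod $2n$) shows $\J_{k+1}=(\J_k\setminus\{k+1\})\cup\{\tau(k)\}$ when $\tau(k)\ne k+1$ (and $\J_{k+1}=\J_k$ otherwise), whence, using $\v_{\tau(k)}=-\v_k$,
\[
(w-\v_{k+1})\,\tilde P_{k+1}(w)=(w+\v_k)\,\tilde P_k(w)\qquad(k\in\Z/2n);
\]
equivalently $\tilde P_k$ and $\tilde P_{k+1}$ share a common monic factor $R_k$ of degree $n-2$, with $\tilde P_k=(w-\v_{k+1})R_k$, $\tilde P_{k+1}=(w+\v_k)R_k$. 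I would then induct on $n$ (the case $n=1$ being trivial), proving the spanning claim for all non-alternating data $(\tau,(\v_j))$ with $\v_{\tau(j)}=-\v_j$. Given a linear functional $\phi$ on $\C[w]_{\le n-1}$ killing all $\tilde P_k$, subtracting consecutive relations gives $(\v_k+\v_{k+1})\phi(R_k)=0$, and then $0=\phi((w-\v_{k+1})R_k)=\phi(wR_k)-\v_{k+1}\phi(R_k)$; so $\phi(R_k)=\phi(wR_k)=0$ for every $k$ with $\v_k+\v_{k+1}\ne 0$. Hence it is enough to show that these $R_k$ span $\C[w]_{\le n-2}$, since then $\phi$ kills $\C[w]_{\le n-2}+w\,\C[w]_{\le n-2}=\C[w]_{\le n-1}$.

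When $\tau$ has an adjacent pair $\{k,k+1\}$ the inductive step is clean: then $\tilde P_k=\tilde P_{k+1}$; deleting this pair yields non-alternating data on $2n-2$ points, and every $\tilde P_l$ with $l\notin\{k,k+1\}$ equals $(w-\v_k)$ times the corresponding polynomial for it, so by induction these span the hyperplane of polynomials vanishing at $\v_k$; finally $\tilde P_k(\v_k)\ne 0$, because $\v_k=\v_j$ for some $j\in\J_k$ would, after locating $j$ and $\tau(j)$ relative to the adjacent positions $k,k+1$, produce indices $a<b<c<d$ with $\v_a=-\v_b=\v_c=-\v_d$, contradicting the hypothesis --- so $\tilde P_k$ is not in that hyperplane and the $\tilde P_l$ span $\C[w]_{\le n-1}$. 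The hard part is the case with no adjacent pair; this is in fact the only case occurring for a genuine region (having $\v_{k+1}=-\v_k$ would force two vertices of $\Reg$ to coincide), and it persists combinatorially, e.g.\ for the regular $2n$-gon where $\tau(k)=k+n$. Here one must show directly that the $R_k$ (for good $k$) span $\C[w]_{\le n-2}$. My plan is to recognize $\{R_k\}_{k\in\Z/2n}$ --- which runs over ``one representative per pair'' for the matchings obtained from $\tau$ by removing the two pairs through $k$ and $k+1$, and which obeys its own recursion $(w+\v_k)R_k=(w-\v_{k+2})R_{k+1}$ --- as, after relabelling, a genuinely smaller instance of the same spanning problem, to which the inductive hypothesis applies; one then checks that non-alternation of $\Reg$ descends to this auxiliary data, each potential failure again exhibiting a forbidden quadruple, and uses the angle hypotheses \eqref{eq:v=exp(th)}--\eqref{eq:th<th<th<th} to exclude degenerate configurations (in particular to ensure there are enough good indices). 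As a sanity check: for the regular $2n$-gon the cyclic shift $k\mapsto k+1$ acts on $\C[w]_{\le n-1}$ by $P(w)\mapsto\zeta^{n-1}P(w/\zeta)$ with $\zeta=e^{\pi i/n}$, which has $n$ distinct eigenvalues, so the claim reduces to $\tilde P_1(w)=\prod_{i=1}^{n-1}(w-\zeta^i)$ having all coefficients nonzero --- which holds since up to sign they are the Gaussian binomials $\binom{n-1}{j}_\zeta$, none of which vanish.
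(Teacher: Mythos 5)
Your reduction is sound up to a point, but the core of the argument---the ``hard case'' you yourself flag---is not actually carried out, so there is a genuine gap.

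What works: rewriting $\g_k(t)=c_k z^{-(n-1)}\tilde P_k(w)$ with $\tilde P_k(w)=\prod_{j\in\J_k}(w-\v_j)$ is correct, and so is the resulting reduction of the proposition to the claim that the $2n$ monic polynomials $\tilde P_k$ span $\C[w]_{\leq n-1}$. The recursion $\J_{k+1}=(\J_k\setminus\{k+1\})\cup\{\tau(k)\}$ (for $\tau(k)\neq k+1$) and hence $(w-\v_{k+1})\tilde P_{k+1}=(w+\v_k)\tilde P_k$ is also correct, and the deduction $\phi(R_k)=\phi(wR_k)=0$ for any annihilating functional $\phi$, whenever $\v_k+\v_{k+1}\neq 0$, is fine. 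The adjacent-pair case ($\tau(k)=k+1$) is handled correctly: deleting the pair preserves non-alternation, $\tilde P_l=(w-\v_k)\tilde P_l'$ for $l\notin\{k,k+1\}$, and the forbidden-quadruple argument showing $\tilde P_k(\v_k)\neq 0$ is valid.

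Where it breaks: the non-adjacent case is exactly the generic one (for a simple region one never has $\tau(k)=k+1$), and you only describe a plan for it (``recognize $\{R_k\}$ as a genuinely smaller instance, after relabelling, ... one then checks ...''). That plan is not a proof, and it is far from clear that it can be executed. The $R_k=\prod_{j\in\J_k\setminus\{k+1\}}(w-\v_j)$ are up to $2n$ polynomials of degree $n-2$, whereas a ``smaller instance'' on $2(n-1)$ letters would produce only $2n-2$ of them coming from a single matching $\tau'$ on $[2n-2]$; the sets $\J_k\setminus\{k+1\}$ are not of the form $\J'_{\sigma(k)}$ for any fixed matching $\tau'$, since each one omits a different pair $\{k+1,\tau(k+1)\}$. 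The cyclic-shift eigenvalue sanity check at the end handles the regular $2n$-gon only, where the extra symmetry collapses the problem; it does not extend to general non-alternating $\bth$. So as written, the argument establishes $\dim\Span(\curve_\Reg)=n$ only in the presence of an adjacent $\tau$-pair.

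For contrast, the paper avoids this dichotomy entirely: it inducts on the number of crossings $\xing(\tau)$. At a $\tau$-descent $k$ it replaces $\Reg$ by $\ska\Reg$ (swapping $\th_k\leftrightarrow\th_{k+1}$ and uncrossing the two strands), proves $\curve_\Reg(t)=\curve_{\ska\Reg}(t)\cdot\gth_k$ for an explicit invertible $2n\times 2n$ matrix $\gth_k$ (Lemma~4.1), and reduces to the non-crossing base case, where the dimension count is immediate. Your recursion relates \emph{consecutive coordinates within one region}, whereas the paper's relates \emph{the same coordinates across two regions differing by an uncrossing move}; the latter makes the induction close in one step. If you want to salvage your polynomial framework, the natural move is to imitate the paper: show that $\tilde P_k,\tilde P_{k+1}$ for $\Reg$ are obtained from those of $\ska\Reg$ by a fixed invertible $2\times 2$ transformation (with the other $\tilde P_m$ unchanged), reducing to the non-crossing base case rather than to a smaller~$n$.
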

\noindent When $R$ is alternating, $\Span(\curve_\Reg)$ has dimension strictly less than $n$.

Recall that any rhombus tiling of $\Reg$ gives rise to the critical $Z$-invariant Ising model whose boundary correlations are denoted by $\<\sigma_j\sigma_k\>_\Reg$ for $j,k\in[n]$. We assume that the boundary vertex $b_j$ is adjacent to $\v_{2j-1}$ and $\v_{2j}$ for each $j\in[n]$. We introduce the \emph{boundary correlation matrix}
\begin{equation}\label{eq:M_dfn}
  M_\Reg=(m_{j,k}),\quad \text{where} \quad m_{j,k}:=\<\sigma_j\sigma_k\>_\Reg \quad\text{for $j,k\in[n]$}.
\end{equation}
It is a symmetric $n\times n$ matrix.

\begin{figure}
\scalebox{0.86}{
  $   \displaystyle M=\begin{pmatrix}
    1 & m_{12} & m_{13} & m_{14} \\
m_{12} & 1 & m_{23} & m_{24} \\
m_{13} & m_{23} & 1 & m_{34} \\
m_{14} & m_{24} & m_{34} & 1
  \end{pmatrix}\quad \mapsto\quad \double M=
\begin{pmatrix}
1 & 1 & m_{12} & -m_{12} & -m_{13} & m_{13} & m_{14} & -m_{14} \\
-m_{12} & m_{12} & 1 & 1 & m_{23} & -m_{23} & -m_{24} & m_{24} \\
m_{13} & -m_{13} & -m_{23} & m_{23} & 1 & 1 & m_{34} & -m_{34} \\
-m_{14} & m_{14} & m_{24} & -m_{24} & -m_{34} & m_{34} & 1 & 1
\end{pmatrix} $}
  \caption{\label{fig:double} An example of applying the doubling map of~\cite{GP} for $n=4$. Modulo changing some signs in an alternating fashion, $\double M$ contains two copies of each column of $M$. By definition, $\doublemap(M)$ is the row span of $\double M$. See \cref{sec:double}.}
\end{figure}

In our joint work with Pylyavskyy~\cite{GP}, we introduced the \emph{doubling map} $\doublemap$. (See \cref{fig:double} for an example and \cref{sec:double} for the definition.) To any symmetric $n\times n$ matrix $M$ with ones on the diagonal, it associates an $n$-dimensional linear subspace $\doublemap(M)$ of $\R^{2n}$. The map $\doublemap$ is injective: $M$ can be recovered from $\doublemap(M)$ via a simple linear-algebraic procedure outlined below in \cref{cor:main}.  The map $\doublemap$ has remarkable properties: for instance, the Kramers--Wannier duality~\cite{KrWa}, which has a complicated effect on the spin correlations, translates under the map $\doublemap$ into the cyclic shift operator $\R^{2n}\to\R^{2n}$ given in~\eqref{eq:CS}. Similarly, the main result of the present paper can be stated most cleanly in terms of the map $\doublemap$.
\begin{theorem}\label{thm:main_span}
For any non-alternating region $\Reg$, we have
\begin{equation}\label{eq:main_span}
\doublemap(M_\Reg)=\Span(\curve_\Reg).
\end{equation}
\end{theorem}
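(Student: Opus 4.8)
The plan is to prove the two $n$-dimensional subspaces coincide by establishing one inclusion and matching dimensions. First I would note that both sides of~\eqref{eq:main_span} depend only on $\Reg$: the left side because $\<\sigma_j\sigma_k\>_\Reg$ is preserved by star--triangle moves and hence is independent of the rhombus tiling $\Tiling$, the right side by construction; so I may fix any convenient tiling $\Tiling$. It is also harmless to assume $\Reg$ is \emph{generic}, i.e.\ $\v_1,\dots,\v_{2n}$ are pairwise distinct: generic regions are dense in the non-alternating locus, and both $\doublemap(M_\Reg)$ and $\Span(\curve_\Reg)$ vary continuously there while remaining $n$-dimensional (by the construction of $\doublemap$ and by \cref{prop:span_dim_n}, respectively), so the general non-alternating case follows by a limiting argument. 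Since $\dim\Span(\curve_\Reg)=n=\dim\doublemap(M_\Reg)$, it then suffices to prove that $\curve_\Reg(t)\in\doublemap(M_\Reg)$ for every $t\in\R$.

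Next I would make $\doublemap(M_\Reg)$ concrete. Unwinding the doubling map (\cref{fig:double}, \cref{sec:double}), $\doublemap(M)$ is the solution set in $\R^{2n}$ of the $n$ linear equations
\[
  x_{2k-1}-x_{2k}=\sum_{j\neq k} s_{j,k}\,(x_{2j-1}+x_{2j})\,m_{j,k},\qquad k\in[n],
\]
where $s_{j,k}\in\{\pm1\}$ is the sign in row $j$, column block $k$ of $\double M$ (an explicit alternating sign pattern, antisymmetric in $j$ and $k$). Thus the theorem is equivalent to $n$ trigonometric identities, obtained by substituting $x_\ell=\g_\ell(t)$, relating the products of sines in~\eqref{eq:curve} to the correlations $m_{j,k}=\<\sigma_j\sigma_k\>_\Reg$.

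To evaluate the $m_{j,k}$ I would invoke the key input from~\cite{GP}: under $\doublemap$, the boundary correlation matrix of a planar Ising network equals the boundary measurement subspace of the associated medial network, and for the critical $Z$-invariant model on $G_\Tiling$ this medial network is exactly the pseudoline arrangement dual to $\Tiling$ from \figref{fig:intro2}(c), with the weight of a crossing of two pseudolines determined by the isoradial angle between them --- equivalently, by the half-angles $\th_i$. It then remains to compute this boundary measurement subspace and identify it with $\Span(\curve_\Reg)$. I would do this by induction on the number of crossings: the inductive step is a single flip of $\Tiling$, which acts on the network by the Yang--Baxter / star--triangle move and so leaves the subspace unchanged, while the base case (a crossingless arrangement) has a subspace visibly spanned by vectors supported on two coordinates. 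Following how the weights and the data $\tau$, $\J_k$ enter through the strand transfer matrices --- with $e^{2it}$ as the spectral parameter of an auxiliary strand --- produces the product $\prod_{i\in\J_k}\sin(t-\th_i)$ in coordinate $k$, and keeping track of strand orientations gives the prefactor $(-1)^{|\J_k\cap[k]|}$.

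The main obstacle is this last computation --- matching the transfer-matrix description of the boundary measurement subspace of the weighted pseudoline arrangement to the explicit formula~\eqref{eq:curve}, getting every sign right, and handling the freedom in the choice of angles $\th_i$ satisfying~\eqref{eq:v=exp(th)}--\eqref{eq:th<th<th<th}. A fallback for this step is to instead invoke the known closed form for the critical Ising correlations on isoradial graphs (from the $s$-holomorphic fermionic observable / Kac--Ward determinant, \cite{CS,CS2,BdT1}) and to verify the $n$ identities of the second paragraph after the substitution $w=e^{2it}$, where each becomes an identity of rational functions of $w$ provable by comparing residues at $w=\v_i$ and behaviour as $w\to\infty$; this still requires matching the analytic formula to the combinatorics of $\tau$.
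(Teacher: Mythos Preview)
Your high-level plan --- induction on the number of crossings, with the non-crossing matching as base case --- is exactly what the paper does. But your description of the inductive step contains a real confusion that would make the argument fail as written. You say ``the inductive step is a single flip of $\Tiling$, which acts on the network by the Yang--Baxter / star--triangle move and so leaves the subspace unchanged.'' A flip relates two tilings of the \emph{same} region $\Reg$; it does not change the matching $\tau$, and hence does not change the crossing number $\xing(\tau)$ at all. Flips are what make $M_\Reg$ well defined, but they cannot drive an induction on $\xing(\tau)$.

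The operation that actually decreases $\xing(\tau)$ is removing a \emph{boundary} crossing: pick a $\tau$-descent $k$ and pass to $\Reg'=\ska\Reg$, which swaps the endpoints of strands $k$ and $k{+}1$. This is not a flip and does not leave the subspace unchanged; rather, by \cite[Theorem~3.22]{GP} one has $\doublemap(M_\Reg)=\doublemap(M_{\Reg'})\cdot \gth_k$ for an explicit $2\times 2$ block matrix $\gth_k$ (the ``bridge'' matrix, with entries $1/c_k$ and $s_k/c_k$ where $s_k=\sin(\tht_{k+1}-\tht_k)$, $c_k=\cos(\tht_{k+1}-\tht_k)$). The heart of the paper's proof is the matching computation on the curve side: one checks directly from~\eqref{eq:curve} that $\curve_\Reg(t)=\curve_{\Reg'}(t)\cdot\gth_k$ for all $t$, which boils down to a single angle-addition identity after noting that $\Jt_k,\Jt_{k+1},\Jt'_k,\Jt'_{k+1}$ differ in exactly one element each. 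Since $\gth_k$ is invertible, this simultaneously carries the equality and the dimension statement through the induction, so there is no need for your separate generic-plus-limit reduction (which, as you set it up, circularly invokes \cref{prop:span_dim_n}; in the paper that proposition is proved \emph{by} this same induction, with only the generic case \cref{prop:Span_dim=n} available beforehand). Your ``auxiliary strand with spectral parameter $e^{2it}$'' picture is a nice heuristic for why the bridge matrix should act the same way on both sides, but the actual content is the two-line trigonometric check just described.
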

\noindent An analogous result holds for arbitrary regions; see \cref{thm:arb_reg}.

In order to recover the boundary correlation matrix $M_\Reg$ from $\doublemap(M_\Reg)=\Span(\curve_\Reg)$, one needs to pick a basis of $\Span(\curve_\Reg)$. The easiest way to achieve that is to just take $n$ distinct points on the curve $\curve_\Reg$.
\begin{proposition}\label{prop:rows_dist_pts}
Suppose that $\Reg$ is non-alternating. Then for any $0\leq t_1<t_2<\dots<t_n<\pi$, the vectors $\curve_\Reg(t_1),\curve_\Reg(t_2),\dots,\curve_\Reg(t_n)$ form a basis of $\Span(\curve_\Reg)$.
\end{proposition}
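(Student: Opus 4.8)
The plan is to reduce the statement to a standard fact about Chebyshev systems. Since \cref{prop:span_dim_n} already gives $\dim\Span(\curve_\Reg)=n$, it suffices to show the $n$ vectors $\curve_\Reg(t_1),\dots,\curve_\Reg(t_n)\in\R^{2n}$ are linearly independent. The key observation is that every coordinate function $\g_k(t)=\pm\prod_{j\in\J_k}\sin(t-\th_j)$ is a product of $n-1$ factors $\sin(t-\th_j)$, hence a real trigonometric polynomial lying in the fixed $n$-dimensional real vector space
\[
V:=\Span_\R\bigl\{\cos mt,\ \sin mt \ :\ 0\le m\le n-1,\ m\equiv n-1\ (\mathrm{mod}\ 2)\bigr\}.
\]

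First I would fix a basis $f_1,\dots,f_n$ of $V$ and write $\g_k=\sum_{i=1}^n c_{ki}f_i$, so that $\curve_\Reg(t)=C\,\mathbf f(t)$, where $C=(c_{ki})$ is a $2n\times n$ matrix and $\mathbf f(t):=(f_1(t),\dots,f_n(t))^{\top}$. Since $f_1,\dots,f_n$ are linearly independent as functions of $t$, the vectors $\{\mathbf f(t)\}_{t\in\R}$ span all of $\R^n$; hence $\Span(\curve_\Reg)=C\cdot\R^n$ is the column space of $C$, and therefore $\rank C=\dim\Span(\curve_\Reg)=n$ by \cref{prop:span_dim_n}. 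In particular $C$ has full column rank.

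Next, the $2n\times n$ matrix whose columns are $\curve_\Reg(t_1),\dots,\curve_\Reg(t_n)$ equals $C\cdot G$, where $G:=\bigl(f_i(t_l)\bigr)_{1\le i,l\le n}$. Since $C$ has full column rank, $\rank(CG)=\rank G$, so it remains to prove that $G$ is invertible, i.e.\ that no nonzero element of $V$ vanishes at all of the distinct points $0\le t_1<\dots<t_n<\pi$. This is exactly the assertion that $V$ is a Chebyshev (Haar) system on an interval of length $\pi$, and I would verify it via the substitution $z=e^{2it}$: every nonzero $f\in V$ has the form $f(t)=e^{-i(n-1)t}P(e^{2it})$ for some nonzero polynomial $P$ with $\deg P\le n-1$, so $f$ has at most $n-1$ zeros as $t$ runs over $[0,\pi)$, on which $z=e^{2it}$ traverses the unit circle exactly once. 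Since $n-1<n$, no nonzero $f\in V$ can vanish at all $n$ of the $t_l$; thus $G$ is invertible and the proposition follows.

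I do not expect a serious obstacle, as the argument is short. The two points that require care are: (i) correctly identifying the ambient $n$-dimensional space $V$ containing all the coordinate functions, so that the rank count from \cref{prop:span_dim_n} forces $C$ to have full column rank; and (ii) establishing the Chebyshev property on a window of length exactly $\pi$ — this is the reason for the hypothesis $0\le t_1<\dots<t_n<\pi$, since $\curve_\Reg(t+\pi)=(-1)^{n-1}\curve_\Reg(t)$ shows the conclusion would fail over any longer interval.
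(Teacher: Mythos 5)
Your proof is correct and takes essentially the same route as the paper's (\cref{lemma:span=F}): there the matrix with rows $\curve_\Reg(t_1),\dots,\curve_\Reg(t_n)$ is written as $V\cdot F_\Reg$, where the Fourier coefficient matrix $F_\Reg$ plays the role of your $C$ in the particular basis $\{e^{i(2j-n-1)t}\}_{j=1}^{n}$ of your space $V$, and the Vandermonde-type matrix $V$ plays the role of your $G$, whose invertibility for distinct $t_m\in[0,\pi)$ is exactly your Chebyshev/Haar verification via $z=e^{2it}$. Combined with \cref{prop:span_dim_n} (proved in the paper by induction on $\xing(\tau)$), this yields the conclusion just as in your sketch.
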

\noindent In the generic case, the proof of \cref{prop:Span_dim=n} below gives a particularly nice set of points. 
\begin{remark}\label{rem:fourier}
A more canonical (and computationally robust) way to produce a basis of $\Span(\curve_\Reg)$ is to observe that each coordinate $\g_j(t)$ is a trigonometric polynomial of degree $n-1$. Therefore it has precisely $n$ non-trivial Fourier coefficients. The rows of the resulting $n\times 2n$ matrix $F_\Reg$ of Fourier coefficients form a basis of $\Span(\curve_\Reg)$ which does not depend on anything besides the angles $\th_1,\th_2,\dots,\th_{2n}$; see \cref{sec:fourier-transform}.
\end{remark}

Lastly, we introduce a $2n\times n$ matrix $K_n$ defined as
\begin{equation}\label{eq:K_n}
K_n=\frac12 \begin{pmatrix}
1 & 0 & \dots & 0\\
1 & 0 & \dots & 0\\
0 & 1 & \dots & 0\\
0 & 1 & \dots & 0\\
\vdots & \vdots & \ddots & \vdots\\
0 & 0 & \dots & 1\\
0 & 0 & \dots & 1
\end{pmatrix}.
\end{equation}
Observe that for $n=4$ and for the $n\times 2n$ matrix $\double M$ from \cref{fig:double}, the product $\double M K_n$ is the $n\times n$ identity matrix. We are ready to translate \cref{thm:main_span} into an explicit matrix formula for boundary correlations of the critical $Z$-invariant Ising model.
\begin{corollary}\label{cor:main}
Let $\Reg$ be a non-alternating region. Choose an $n\times 2n$ matrix $A$ whose row span equals $\Span(\curve_\Reg)$. (See e.g. \cref{prop:rows_dist_pts}, \cref{rem:fourier}, and \cref{thm:arb_reg}.) Let $B=(b_{j,k})$ be the $n\times 2n$ matrix given by
\begin{equation}\label{eq:AKA}
B:=(AK_n)^{-1}A.
\end{equation}
Then, up to a sign, the entries of $B$ are the boundary correlations: we have
\begin{equation}\label{eq:corr_formula_B}
  \<\sigma_j\sigma_k\>_\Reg=\<\sigma_k\sigma_j\>_\Reg=|b_{j,2k-1}|=(-1)^{k-j+1}b_{j,2k-1} \quad\text{for all $1\leq j< k\leq n$.}
\end{equation}
\end{corollary}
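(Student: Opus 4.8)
The plan is to combine \cref{thm:main_span} with elementary linear algebra: once we know that $\Span(\curve_\Reg)$ is nothing but the row span of the doubled matrix $\double{M_\Reg}$, the recipe $B=(AK_n)^{-1}A$ will be seen to recover $\double{M_\Reg}$ itself, independently of the choice of $A$, and the entries of $\double{M_\Reg}$ in odd-indexed columns are (up to the alternating signs visible in \cref{fig:double}) precisely the boundary correlations.

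First I would record that $\dim\Span(\curve_\Reg)=n$ by \cref{prop:span_dim_n}, and that by \cref{thm:main_span} and the definition of the doubling map (see \cref{sec:double}), $\Span(\curve_\Reg)=\doublemap(M_\Reg)$ equals the row span of the $n\times 2n$ matrix $\double{M_\Reg}$. In particular $\double{M_\Reg}$ has full row rank $n$; this also follows directly from the identity $\double{M}K_n=I_n$, which holds for every symmetric matrix $M$ with unit diagonal — one reads off from the construction of $\double{M}$ that $K_n$ averages the two copies of each column of $M$ occurring in $\double{M}$, and the extra $\pm$ signs in $\double M$ cancel (the case $n=4$ in \cref{fig:double} exhibits the general pattern). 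Since $A$ and $\double{M_\Reg}$ are two $n\times 2n$ matrices with the same $n$-dimensional row span, there is a unique $g\in\GL_n(\R)$ with $A=g\,\double{M_\Reg}$. Then $AK_n=g\,\double{M_\Reg}K_n=g\,I_n=g$, hence $B=(AK_n)^{-1}A=g^{-1}g\,\double{M_\Reg}=\double{M_\Reg}$; in particular $B$ does not depend on the choice of $A$.

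It remains to identify the entries of $\double{M_\Reg}$. By the definition of $\doublemap$, for $1\le j<k\le n$ the entry of $\double{M_\Reg}$ in row $j$ and column $2k-1$ equals $(-1)^{k-j+1}m_{j,k}$ with $m_{j,k}=\<\sigma_j\sigma_k\>_\Reg$ — this is exactly the sign bookkeeping displayed in \cref{fig:double}. Therefore $b_{j,2k-1}=(-1)^{k-j+1}\<\sigma_j\sigma_k\>_\Reg$, which is the last equality of~\eqref{eq:corr_formula_B}; the symmetry $\<\sigma_j\sigma_k\>_\Reg=\<\sigma_k\sigma_j\>_\Reg$ is the symmetry of $M_\Reg$ noted after~\eqref{eq:M_dfn}. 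Finally, since the critical $Z$-invariant Ising model is ferromagnetic, all spin correlations are nonnegative (Griffiths' inequality; cf. \cref{sec:ising-model}), so $\<\sigma_j\sigma_k\>_\Reg=|b_{j,2k-1}|$ as well.

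The genuinely non-trivial input is \cref{thm:main_span}, which is assumed here; within the proof of the corollary the only point requiring care is the precise sign convention in the definition of $\doublemap$ in \cref{sec:double}, which must be tracked so that the exponent $k-j+1$ in~\eqref{eq:corr_formula_B} and the identity $\double{M}K_n=I_n$ both come out exactly right. Everything else is routine linear algebra together with the standard positivity of Ising correlations.
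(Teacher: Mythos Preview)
Your proof is correct and follows essentially the same route as the paper: you combine \cref{thm:main_span} with the identity $\double{M}K_n=I_n$ (which is the content of \cref{prop:doublemap}) to conclude $B=\double{M_\Reg}$, and then read off the entries via the sign rule~\eqref{eq:double_signs}. The only addition is your explicit invocation of Griffiths' inequality to justify $\<\sigma_j\sigma_k\>_\Reg=|b_{j,2k-1}|$, which the paper leaves implicit.
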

\begin{remark}
Part of the content of \cref{cor:main} is that the matrix $AK_n$ is always invertible and that in the notation of \cref{fig:double}, the matrix $B=(AK_n)^{-1}A$ coincides with the matrix $\double M_\Reg$; see \cref{prop:doublemap}. 
\end{remark}

\begin{example}\label{ex:square1}
Consider the case $n=2$ and let $\Reg:=\Reg_n$ be a square. We have 
\begin{equation}\label{eq:}
  \th_1=0,\ \ \th_2=\pi/4,\ \ \th_3=\pi/2,\ \ \th_4=3\pi/4 \quad\text{and}\quad \tau(1)=3,\ \ \tau(2)=4,\ \ \tau(3)=1,\ \ \tau(4)=2.
\end{equation}
Thus $  \J_1=\{2\}$, $\J_2=\{3\}$, $\J_3=\{4\}$, $\J_4=\{1\}$ and 
\begin{equation}\label{eq:}
\curve_{\Reg}(t)= \left(\sin(t-\pi/4),\sin(t-\pi/2),\sin(t-3\pi/4),-\sin(t) \right).
\end{equation}
Let us take $A$ to be the matrix with rows, say, $\curve_{\Reg}(0)$ and $\curve_{\Reg}(3\pi/4)$:
\begin{equation}\label{eq:}
A=\begin{pmatrix}
-\sqrt2/2 & -1 & -\sqrt2/2 & 0\\ 
1 & \sqrt2/2 & 0 & -\sqrt2/2
\end{pmatrix},\quad\text{thus}\quad AK_n=\frac12\begin{pmatrix}
-1-\sqrt2/2 & -\sqrt2/2\\
1+\sqrt2/2 & -\sqrt2/2
\end{pmatrix}.
\end{equation}
We calculate
\begin{equation}\label{eq:}
(AK_n)^{-1}=\begin{pmatrix}
\sqrt2-2 & 2-\sqrt2\\
-\sqrt2 & -\sqrt2
\end{pmatrix} \quad\text{and}\quad (AK_n)^{-1}A=\begin{pmatrix}
1 & 1 & \sqrt2-1 & 1-\sqrt2\\
 1-\sqrt2 & \sqrt2-1 & 1 & 1
\end{pmatrix}.
\end{equation}
By \cref{cor:main}, we find $\<\sigma_1\sigma_2\>=\sqrt2-1$. This is indeed the correct value; see \cref{ex:square2}. It is also consistent with \cref{thm:reg}.
\end{example}

\begin{remark}
 The formula~\eqref{eq:AKA} involves inverting the matrix $AK_n$. In general, correlations of the Ising model may be computed in terms of the Kasteleyn matrix or the Kac--Ward matrix associated with $G$~\cite{KacWard,Kasteleyn,Fisher2,Fisher,DZMSS}. These matrices depend on the choice of the rhombus tiling and are roughly of size $n^2\times n^2$. On the other hand, the matrix $AK_n$ is of size $n\times n$ and depends manifestly only on the angles $\th_1,\th_2,\dots,\th_{2n}$ (cf. \cref{rem:fourier}). 
\end{remark}

\begin{remark}
Many of our constructions are special cases of objects arising in the total positivity literature. In the main body of the paper, we present them in a self-contained way, and explain their relations to total positivity in \cref{sec:relat-total-posit}.
\end{remark}

\begin{remark}
\Cref{thm:main_span} and most of our other results generalize directly to the setting of the totally nonnegative Grassmannian $\Grtnn(k,n)$ studied in~\cite{Lus98,Pos}: one replaces a matching with a decorated permutation, a rhombus tiling with a plabic tiling~\cite{Pos,OPS}, the critical $Z$-invariant Ising model with a suitable critical dimer model~\cite{KenCrit} on a plabic graph, and the doubling map $\doublemap$ with Postnikov's boundary measurement map. This simultaneously includes the cases of the Ising model (for $\Grtnn(n,2n)$) and electrical resistor networks (for $\Grtnn(n-1,2n)$; cf.~\cite{Lam}), providing partial progress towards~\cite[Question~9.2]{GP}. These results will appear in a forthcoming paper~\cite{GalPrep}.
\end{remark}

\begin{remark}\label{rmk:KW}
We showed in~\cite[Proposition~3.6]{GP} that for each $n\geq 2$, there exists a \emph{unique} $n\times n$ boundary correlation matrix $M_0$ that is invariant under Kramers--Wannier duality~\cite{KrWa}. Applying this duality to the graph $G_\Tiling$ associated with a rhombus tiling $\Tiling$ amounts to switching the black/white colors of vertices of $\Tiling$ and relabeling $(\v_1,\v_2,\dots,\v_{2n})\mapsto(\v_2,\dots,\v_{2n},\v_1)$. For the case when $\Reg=\Reg_n$ is a regular $2n$-gon, any rhombus tiling can be connected by a sequence of flips to its rotation by $\pi/n$. Thus $M_{\Reg_n}$ coincides with the unique self-dual $n\times n$ matrix $M_0$.
\end{remark}

\subsection*{Acknowledgments} I am indebted to Pasha Pylyavskyy for his numerous contributions at various stages of this project. I also thank Cl\'ement Hongler for bringing several useful references to my attention. In addition, I am grateful to the anonymous referee for their valuable suggestions. This work was supported by an Alfred P. Sloan Research Fellowship and by the National Science Foundation under Grants No.~DMS-1954121 and No.~DMS-2046915.

\section{Background}

\subsection{Matchings, regions, tilings, and pseudoline arrangements}\label{sec:match-regi-tilings}
Since our proof will necessarily pass through very degenerate regions and their rhombus tilings, we need to define these objects formally.

For $1\leq j<k\leq 2n$, we introduce \emph{cyclic intervals} $[j,k]:=\{j,j+1,\dots,k\}$ and $[j,k]^c:=[2n]\setminus[j,k]$. By a \emph{matching} we mean a map $\tau:[2n]\to[2n]$ such that $\tau(j)=k$ implies $\tau(k)=j$ and $j\neq k$. (Such a map is also called a \emph{fixed-point-free involution}.) We say that a sequence $\bth=(\th_1,\th_2,\dots,\th_{2n})$ is a \emph{$\tau$-shape} if it satisfies~\eqref{eq:th+pi/2} and~\eqref{eq:th<th<th<th} for all $j,k\in[2n]$.  For a $\tau$-shape $\bth$, we let the vectors $\v_1,\v_2,\dots,\v_{2n}\in\C$ be defined by~\eqref{eq:v=exp(th)}. By a \emph{valid region} we mean a pair $\Reg=(\tau,\bth)$ consisting of a matching $\tau$ and a $\tau$-shape $\bth$. We draw $\Reg$ in the plane as a polygonal chain with sides $\v_1,\v_2,\dots,\v_{2n}$ (in this order). By~\eqref{eq:v=-v}, this polygonal chain is closed. In general, it may intersect itself.

Let us say that a matching $\tau:[2n]\to[2n]$ is \emph{disconnected} if there exist $1\leq j<k\leq 2n$ such that $\tau([j,k])=[j,k]$ but $[j,k]\neq [2n]$; otherwise $\tau$ is called \emph{connected}. When $\tau([j,k])=[j,k]$, the \emph{restriction} of $\tau$ to $[j,k]$ and $[j,k]^c$ is defined in an obvious way. By~\eqref{eq:v=-v}, $\tau([j,k])=[j,k]$ implies $\v_j+\v_{j+1}+\dots+\v_k=0$, so if $\tau$ is disconnected then the boundary of $\Reg$ is self-intersecting. A \emph{connected component} of $\tau$ is a cyclic interval $I$ such that the restriction of $\tau$ to $I$ is connected. The conditions~\eqref{eq:th+pi/2} and~\eqref{eq:th<th<th<th} only involve pairs of indices from the same connected component. All our constructions will work independently for each connected component of $\tau$.

We say that a valid region $\Reg$ is \emph{simple} if its boundary is a  simple (non-self-intersecting) polygonal chain. In this case, $\tau$ must be connected, but the converse need not hold. A \emph{rhombus tiling} of a simple region $\Reg$ is a finite collection of rhombi whose sides are unit vectors parallel to the vectors $\v_1,\v_2,\dots,\v_{2n}$, whose interiors do not overlap, and whose union is the region bounded by $\Reg$. The \emph{critical $Z$-invariant Ising model} defined below is most naturally associated to a rhombus tiling of a given simple region, but in fact it is well defined for any valid region in terms of \emph{pseudoline arrangements}. 

Consider $2n$ points $d_1,d_2,\dots,d_{2n}$ drawn counterclockwise on a circle. A \emph{pseudoline arrangement} is a collection $\Arr$ of $n$ embedded line segments (``pseudolines'') such that any two pseudolines intersect at most once, all intersection points are in the interior of the disk, and no three pseudolines intersect at one point. The set of endpoints of the pseudolines is assumed to be $\{d_1,d_2,\dots,d_{2n}\}$ (thus the endpoints are pairwise distinct). Each pseudoline connects some $d_j$ to $d_k$ and this gives rise to a matching $\tau=\tau_\Arr:[2n]\to[2n]$ sending $j\mapsto k$ and $k\mapsto j$ for each such pseudoline. The procedure in \cref{sec:arbitrary-regions} describes how pseudoline arrangements are naturally the dual objects to rhombus tilings; see \figref{fig:intro2}(c). It is easy to see that any matching is realized by some pseudoline arrangement: one can choose generic boundary points $d_1,d_2,\dots,d_{2n}$ and take each pseudoline to be just a straight line segment connecting $d_j$ to $d_{\tau(j)}$.

\subsection{Ising model}\label{sec:ising-model}
Let $G=(V,E)$ be a finite simple undirected graph. A \emph{spin configuration} is an assignment $\sigma=(\sigma_v)_{v\in V}\in\{\pm1\}^V$ of spins to the vertices of $G$, where we have $\sigma_v=\pm1$ for each $v\in V$. Given an assignment $\bx=\{x_e\}_{e\in E}$ of positive real edge weights, the Ising model is a probability distribution on the set $\{\pm1\}^V$ of all spin configurations: the probability of a given spin configuration $\sigma$ equals
\begin{equation}\label{eq:Z_dfn}
  \Prob(\sigma):=\frac1Z\prod_{\{u,v\}\in E:\, \sigma_u=\sigma_v} x_{\{u,v\}},\quad\text{where}\quad Z:=\sum_{\sigma\in\{\pm1\}^V} \prod_{\{u,v\}\in E:\, \sigma_u=\sigma_v} x_{\{u,v\}}
\end{equation}
is the \emph{partition function}. Given two vertices $u,v\in V$, we define their \emph{correlation} as
\begin{equation}\label{eq:corr_dfn}
\<\sigma_u\sigma_v\>:=\Prob(\sigma_u=\sigma_v)-\Prob(\sigma_u\neq\sigma_v).
\end{equation}

Suppose we are given a rhombus tiling $\Tiling$ of a simple region $\Reg=(\tau,\bth)$. Color the vertices of $\Tiling$ in a bipartite way so that the vertex adjacent to $\v_{2j-1}$ and $\v_{2j}$ is black for each $j\in[n]$. The graph $G_\Tiling=(V,E)$ is defined as follows: the vertex set $V$ consists of all black vertices of $\Tiling$, and the edge set $E$ contains, for each rhombus in $\Tiling$, the diagonal connecting its two black vertices. For an edge $e\in E$, let $2\th_e\in(0,\pi)$ be the angle at a white vertex of the rhombus containing $e$. The edge weights $\bx_\Tiling=(x_e)_{e\in E}$ are defined as follows: for $e\in E$, we set $x_e:=\cot(\th_e/2)\in(1,\infty)$; see \figref{fig:x_e}(a). The Ising model associated with the weighted graph $(G_\Tiling,\bx_\Tiling)$ is referred to as the \emph{critical $Z$-invariant Ising model}.

\begin{example}\label{ex:square2}
Consider the case $n=2$ and $\Reg=\Reg_n$. Then $G$ is a single edge $e$ connecting $b_1$ to $b_2$. The rhombus containing $e$ has all angles equal to $\pi/2$, thus $\th_e=\pi/4$, and $x_e=\cot(\pi/8)=\sqrt2+1$. By~\eqref{eq:Z_dfn} and~\eqref{eq:corr_dfn}, we have
\begin{equation}\label{eq:}
  Z=x_e+1=\sqrt2+2 \quad\text{and}\quad\<\sigma_1\sigma_2\>_{\Reg}=\frac{x_e-1}{x_e+1}=\frac{\sqrt2}{\sqrt2+2}=\sqrt2-1,
\end{equation}
in agreement with \cref{ex:square1} and \cref{thm:reg}.
\end{example}

A \emph{flip} is a local operation $\Tiling\mapsto\Tiling'$ on rhombus tilings which replaces three rhombi in $\Tiling$ whose union is a convex hexagon with the other three rhombi whose union is the same hexagon. The associated weighted graphs $(G_\Tiling,\bx_\Tiling)$ and $(G_{\Tiling'},\bx_{\Tiling'})$ are related by a \emph{star-triangle move}; see \figref{fig:intro1}(d). The edge weights $\bx_\Tiling,\bx_{\Tiling'}$ have the important property that applying a flip preserves the correlations $\<\sigma_u\sigma_v\>$ whenever $u,v$ are black vertices present in both $\Tiling$ and $\Tiling'$. (We caution that the partition function $Z$ is in general \emph{not} preserved by star-triangle moves.)

\begin{figure}

\def\sq{0.70710}
\def\rc{10pt}
\def\lw{0.5pt}
\def\lww{3pt}
\def\scl{0.4}
\def\nodescl{2}
\def\elw{2.5pt}
\def\bwscl{0.8}
\def\escl{2}
\def\scll{1}
\def\isp{1pt}
\def\botscl{2.5}
\def\xescl{1}

\def\op{0.1}

\def\centerarc[#1](#2)(#3:#4:#5)
    { \draw[#1] ($(#2)+({#5*cos(#3)},{#5*sin(#3)})$) arc (#3:#4:#5); }

\setlength{\tabcolsep}{10pt}

\begin{tabular}{ccc}

\scalebox{\scll}{
\begin{tikzpicture}[scale=1.7,baseline=(ZUZU.base)]
\coordinate (0) at (0.00,0.00);
\coordinate (2) at (0.87,0.50);
\coordinate (6) at (-0.87,0.50);
\coordinate (26) at (0.00,1.00);

\node[fill=white,circle,draw=black,line width=0.3pt,scale=0.3] (A0) at (0.00,0.00) {};
\node[fill=black,circle,draw=black,line width=0.3pt,scale=0.3] (A2) at (0.87,0.50) {};
\node[fill=black,circle,draw=black,line width=0.3pt,scale=0.3] (A6) at (-0.87,0.50) {};
\node[fill=white,circle,draw=black,line width=0.3pt,scale=0.3] (A26) at (0.00,1.00) {};
\draw[dashed,line width=0.1pt] (0) -- (2);
\draw[dashed,line width=0.1pt] (0) -- (6);
\fill [opacity=\op,blue] (0.center)-- (2.center)-- (26.center)-- (6.center) -- cycle;
\draw[black, line width=1pt] (A6) -- (A2);
\draw[dashed,line width=0.1pt] (2) -- (26);
\draw[dashed,line width=0.1pt] (6) -- (26);
\node[fill=white,circle,draw=black,line width=0.3pt,scale=0.3] (A0) at (0.00,0.00) {};
\node[fill=black,circle,draw=black,line width=0.3pt,scale=0.3] (A2) at (0.87,0.50) {};
\node[fill=black,circle,draw=black,line width=0.3pt,scale=0.3] (A6) at (-0.87,0.50) {};
\node[fill=white,circle,draw=black,line width=0.3pt,scale=0.3] (A26) at (0.00,1.00) {};

\centerarc[red,line width=1pt](0,0)(30:150:0.15)
\node[scale=1,red,anchor=south] (A) at (0,0.1) {$2\theta_e$};
\node[scale=1,black,anchor=south] (A) at (0,0.5) {$e$};

\node[scale=1,black,anchor=north] (ZUZU) at (0,-0.2) {(a) rhombus tiling};

\end{tikzpicture}}

&

\scalebox{\scl}{
\begin{tikzpicture}[scale=3,baseline=(ZUZU.base)]
\begin{scope}[even odd rule]
\clip[rounded corners=\rc] (1.5,1.5)--(\sq,\sq)--(0.5,0.2)--(0,0)--(-0.3,0)--(-0.4,-0.2)--(-0.5,0)--(-1.5,0)--(-1.5,1.5)--(1.5,1.5)--(-1.5,1.5)--(-\sq,\sq)--(-0.2,0.4)--(0.2,-0.3)--(0.5,-0.4)--(\sq,-\sq)--(1.5,-1.5)--cycle;
\shade[inner color=black!50,outer color=white] (0,0) circle (0.5);
\end{scope}
\draw[line width=\lw] (0,0) circle (1);
\draw[rounded corners=\rc, line width=\lww,red] (\sq,-\sq)--(0.5,-0.4)--(0.2,-0.3)--(-0.2,0.4)--(-\sq,\sq);
\draw[rounded corners=\rc, line width=\lww,blue] (\sq,\sq)--(0.5,0.2)--(0,0)--(-0.3,0)--(-0.4,-0.2)--(-0.5,0)--(-1,0);
\node[scale=\nodescl,anchor=north west,red,inner sep=\isp] (J) at (\sq,-\sq) {$j$};
\node[scale=\nodescl,anchor=south east,red,inner sep=\isp] (TJ) at (-\sq,\sq) {$\tau(j)$};
\node[scale=\nodescl,anchor=east,blue,inner sep=\isp] (TK) at (-1,0) {$\tau(k)$};
\node[scale=\nodescl,anchor=south west,blue,inner sep=\isp] (K) at (\sq,\sq) {$k$};
\node[scale=\bwscl,draw=black,circle,fill=black] (A) at (-0.2,0.15) {};
\node[scale=\bwscl,draw=black,circle,fill=black] (B) at (0.2,-0.05) {};
\draw[line width=\elw,black] (A)--(B);
\node[scale=\escl,black,anchor=south west] (E) at (-0.1,0){$e$};
\node[scale=\botscl](ZUZU) at (0,-1.3) {(b) $2\th_e:=\th_k-\th_j$};
\end{tikzpicture}
}

&

\scalebox{\scl}{
\begin{tikzpicture}[scale=3,baseline=(ZUZU.base)]
\begin{scope}[even odd rule]
\clip[rounded corners=\rc] (1.5,1.5)--(\sq,\sq)--(0.5,0.2)--(0,0)--(-0.3,0)--(-0.4,-0.2)--(-0.5,0)--(-1.5,0)--(-1.5,-1.5)--(1.5,-1.5)--(\sq,-\sq)--(0.5,-0.4)--(0.2,-0.3)--(-0.2,0.4)--(-\sq,\sq)--(-1.5,1.5)--(1.5,1.5)--cycle;
\shade[inner color=black!50,outer color=white] (0,0) circle (0.5);
\end{scope}
\draw[line width=\lw] (0,0) circle (1);
\draw[rounded corners=\rc, line width=\lww,red] (\sq,-\sq)--(0.5,-0.4)--(0.2,-0.3)--(-0.2,0.4)--(-\sq,\sq);
\draw[rounded corners=\rc, line width=\lww,blue] (\sq,\sq)--(0.5,0.2)--(0,0)--(-0.3,0)--(-0.4,-0.2)--(-0.5,0)--(-1,0);
\node[scale=\nodescl,anchor=north west,red,inner sep=\isp] (J) at (\sq,-\sq) {$j$};
\node[scale=\nodescl,anchor=south east,red,inner sep=\isp] (TJ) at (-\sq,\sq) {$\tau(j)$};
\node[scale=\nodescl,anchor=east,blue,inner sep=\isp] (TK) at (-1,0) {$\tau(k)$};
\node[scale=\nodescl,anchor=south west,blue,inner sep=\isp] (K) at (\sq,\sq) {$k$};
\node[scale=\bwscl,draw=black,circle,fill=black] (A) at (-0,-0.2) {};
\node[scale=\bwscl,draw=black,circle,fill=black] (B) at (0.1,0.2) {};
\draw[line width=\elw,black] (A)--(B);
\node[scale=\escl,black,anchor=north west] (E) at (-0,0.1){$e$};
\node[scale=\botscl](ZUZU) at (-0.1,-1.3) {(c) $2\th_e:=\th_{\tau(j)}-\th_k$};
\end{tikzpicture}
}

\end{tabular}

  \caption{\label{fig:x_e} In each case, we assign $x_e:=\cot(\th_e/2)$. (a) For a rhombus in a rhombus tiling, $2\th_e$ is the angle at a white vertex; (b) and (c) refer to the case of an edge $e$ in the graph $G_\Arr$ associated to a pseudoline arrangement $\Arr$; here $j<k<\tau(j)<\tau(k)$.}
\end{figure}

More generally, given any valid region $\Reg=(\tau,\bth)$, choose a pseudoline arrangement $\Arr$ such that $\tau=\tau_\Arr$. (For example, $\Arr$ may be chosen to consist of straight line segments as above.) The complement of $\Arr$ in the disk may be colored black and white in a checkerboard fashion so that the arc connecting $d_{2j-1}$ to $d_{2j}$ is adjacent to a black region for each $j\in[n]$; see e.g.~\cite[Figure~1]{Bax}. We now let $G_\Arr=(V,E)$ be the graph whose vertices are the black regions of $\Arr$ (including $n$ boundary regions $b_1,b_2,\dots,b_n$, where $b_j$ is adjacent to the arc between $d_{2j-1}$ and $d_{2j}$ for $j\in[n]$), and whose edges correspond to intersection points between the pseudolines in $\Arr$. Each such intersection point $p$ involves a pseudoline connecting $j$ to $\tau(j)$ and a pseudoline connecting $k$ to $\tau(k)$ for some $j<k<\tau(j)<\tau(k)$. The corresponding edge $e$ connects (the vertices of $G$ corresponding to) the two black regions adjacent to $p$. We set $x_e:=\cot(\th_e/2)$, where we either have $2\th_e:=\th_k-\th_j$ or $2\th_e:=\th_{\tau(j)}-\th_k$, depending on how the two black regions are located relative to $j,k,\tau(j),\tau(k)$; see \figref{fig:x_e}(b,c). We set $\bx_\Arr:=(x_e)_{e\in E}$. For each triangular interior region of $\Arr$, one may perform a \emph{Yang--Baxter} move which is dual to a flip of a rhombus tiling. The associated weights still satisfy a similar star-triangle relation which preserves the boundary correlations of the associated critical $Z$-invariant Ising model. If a single black region contains several boundary points $b_{j_1},\dots,b_{j_k}$, they are treated as if they were ``contracted'' into a single vertex, and we set $\<\sigma_{j_s}\sigma_{j_t}\>:=1$ for all $s,t\in[k]$. See~\cite[Definition~6.1]{GP} for details.

The above construction associates a weighted graph $(G_\Tiling,\bx_\Tiling)$ (resp., $(G_\Arr,\bx_\Arr)$) to any rhombus tiling $\Tiling$ (resp., pseudoline arrangement $\Arr$) of a valid region $\Reg=(\tau,\bth)$. The graph $G_\Tiling$ has $n$ boundary vertices $b_1,b_2,\dots,b_n$, and the Ising model associated with $(G_\Tiling,\bx_\Tiling)$ yields an $n\times n$ boundary correlation matrix $M_\Reg$ defined in~\eqref{eq:M_dfn}. The construction of $(G_\Tiling,\bx_\Tiling)$ depends on a rhombus tiling $\Tiling$ (or a pseudoline arrangement $\Arr$), but the resulting boundary correlation matrix $M_\Reg$ depends only on $\Reg$.

\subsection{The doubling map}\label{sec:double}
We describe the map $\doublemap$ introduced in~\cite{GP}. For a symmetric $n\times n$ matrix $M=(m_{j,k})$ with ones on the diagonal, introduce an $n\times 2n$ matrix $\double M=(\double m_{j,k})$ satisfying $|m_{j,k}|=|\double m_{j,2k-1}|=|\double m_{j,2k}|$ for all $j,k\in[n]$; see \cref{fig:double}. The signs are chosen in a simple alternating fashion: for $j=k$, set $\double m_{j,2k-1}=\double m_{j,2k}:=m_{j,j}=1$, and for $j\neq k$, set
\begin{equation}\label{eq:double_signs}
  \double m_{j,2k-1}=-\double m_{j,2k}:=(-1)^{j+k+\one(j<k)}m_{j,k},\quad\text{where}\quad \one(j<k):=
  \begin{cases}
    1, &\text{if $j<k$,}\\
    0,&\text{otherwise.}
  \end{cases}
\end{equation}
We let $\doublemap(M)\subset\R^{2n}$ denote the linear subspace spanned by the rows of $\double M$.  Recall that we have introduced a $2n\times n$ matrix $K_n$ in~\eqref{eq:K_n}.

\begin{proposition}[\cite{GP}]\label{prop:doublemap}
Let $M$ be a symmetric $n\times n$ matrix with ones on the diagonal.
\begin{theoremlist}
\item\label{item:doublemap:dim} The subspace $\doublemap(M)$ has dimension $n$.
\item\label{item:doublemap:AKA} For any $n\times 2n$ matrix $A$ whose rows form a basis of $\doublemap(M)$, the matrix $AK_n$ is invertible and satisfies
\begin{equation}\label{eq:AKA_GP}
  (AK_n)^{-1}A=\double M.
\end{equation}
\end{theoremlist}
\end{proposition}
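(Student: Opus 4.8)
The plan is to reduce both parts of \cref{prop:doublemap} to the single matrix identity $\double M K_n = I_n$, where $I_n$ denotes the $n\times n$ identity matrix, and then extract everything by elementary linear algebra.

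\textbf{Step 1: the key identity.} First I would verify $\double M K_n = I_n$ directly from the definitions~\eqref{eq:double_signs} and~\eqref{eq:K_n}. Right multiplication by $K_n$ has the effect of averaging consecutive pairs of columns: for any $n\times 2n$ matrix $C=(c_{j,\ell})$, the $(j,k)$-entry of $CK_n$ equals $\frac12\bigl(c_{j,2k-1}+c_{j,2k}\bigr)$, since the only nonzero entries in the $k$-th column of $K_n$ are $(K_n)_{2k-1,k}=(K_n)_{2k,k}=\frac12$. Plugging in $C=\double M$: the diagonal case $j=k$ of~\eqref{eq:double_signs} gives $\double m_{j,2j-1}=\double m_{j,2j}=1$, so the average is $1$; the off-diagonal case $j\neq k$ gives $\double m_{j,2k-1}=-\double m_{j,2k}$, so the average is $0$. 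Hence $\double M K_n = I_n$. Note that this uses only the ``antisymmetry within each block $\{2k-1,2k\}$'' built into~\eqref{eq:double_signs}, not the symmetry of $M$.

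\textbf{Step 2: part~(i).} The identity of Step~1 immediately forces the $n$ rows of $\double M$ to be linearly independent: if a row vector $w\in\R^n$ satisfies $w\,\double M=0$, then $w = w\,\double M K_n = 0$. Since $\doublemap(M)$ is by definition the row span of $\double M$, it has dimension $n$.

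\textbf{Step 3: part~(ii).} Let $A$ be any $n\times 2n$ matrix whose rows form a basis of $\doublemap(M)$. Since $A$ and $\double M$ have the same row span and each has exactly $n=\dim\doublemap(M)$ rows, there is a unique invertible $n\times n$ change-of-basis matrix $G$ with $A=G\,\double M$. Then, by Step~1, $AK_n = G\,\double M K_n = G$, which is invertible, and $(AK_n)^{-1}A = G^{-1}(G\,\double M)=\double M$, which is exactly~\eqref{eq:AKA_GP}.

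\textbf{Where the difficulty lies.} There is no serious obstacle here: the whole argument collapses once one observes $\double M K_n = I_n$, and this observation is essentially forced by the way $\double M$ and $K_n$ are set up — $K_n$ is engineered precisely to undo the doubling by reconstructing $m_{j,k}$ from the block $\{2k-1,2k\}$. The only point requiring genuine care is the sign and index bookkeeping when checking $\double M K_n=I_n$: one must separate the $j=k$ and $j\neq k$ cases of~\eqref{eq:double_signs} correctly and confirm that the prefactor $(-1)^{j+k+\one(j<k)}$ cannot disturb the cancellation, which it cannot, since it is the same for columns $2k-1$ and $2k$ while those two entries carry opposite signs. It is also worth checking that the conclusion is genuinely insensitive to the ordering convention between $j$ and $k$ in~\eqref{eq:double_signs}, for the same reason.
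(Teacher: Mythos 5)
Your proof is correct and takes exactly the same approach as the paper: both hinge on the single observation $\double M K_n = I_n$, from which the paper says the two claims "follow in a straightforward way" and which you simply spell out. Your Steps 2 and 3 are precisely the straightforward linear-algebra deductions the paper leaves to the reader.
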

\begin{proof}
By construction, $\double M K_n$ is the $n\times n$ identity matrix. Both of the above statements now follow in a straightforward way. Alternatively, see the proof of~\cite[Lemma~6.7]{GP}.
\end{proof}
\noindent  Thus \cref{cor:main} follows from \cref{thm:main_span} via \cref{prop:doublemap}.

The image of the space of $n\times n$ Ising boundary correlation matrices under the map $\doublemap$ is invariant under the \emph{cyclic shift} operator
\begin{equation}\label{eq:CS}
S:\R^{2n}\to\R^{2n},\quad   (x_1,x_2,\dots,x_{2n})\mapsto \left((-1)^{n-1}x_{2n},x_1,x_2,\dots,x_{2n-1}\right).
\end{equation}
The sign twist is chosen in such a way that $S$ preserves the totally nonnegative Grassmannian; see~\cite[Remark~3.3]{Pos}. By~\cite[Theorem~3.4]{GP}, $S$ is the image of the Kramers--Wannier duality (cf. \cref{rmk:KW}) under the map $\doublemap$.

\subsection{Affine notation}\label{sec:affine-notation}
All of our constructions respect the cyclic shift action~\eqref{eq:CS}. At times, it will be more convenient to use notation that is invariant under this cyclic symmetry.

Given a matching $\tau:[2n]\to[2n]$, we extend it to the unique bijection $\taut:\Z\to\Z$ satisfying the following conditions:
\begin{enumerate}
\item $\taut(k+2n)=\taut(k)+2n$ for all $k\in\Z$;
\item $k<\taut(k)<k+2n$ for all $k\in\Z$;
\item $\taut(k)\equiv\tau(k)\pmod{2n}$ for all $k\in[2n]$.
\end{enumerate}
For example, we see that $\taut^2(k)=k+2n$ for all $k\in\Z$. Similarly, we extend $\bth$ to the unique sequence $\btht=(\tht_k)_{k\in\Z}$ satisfying $\tht_k=\th_k$ for $k\in[2n]$ and $\tht_{k+2n}=\tht_k+\pi$ for all $k\in\Z$. For $k\in\Z$, we set $\vt_k:=\exp(2i\tht_k)$, which satisfies $\vt_{k+2n}=\vt_k$ for all $k\in\Z$. We also have ``affine analogs'' of~\eqref{eq:th+pi/2}--\eqref{eq:th<th<th<th}:
\begin{align}
\retainlabel{eq:aff:th+pi/2} \label{eq:aff:th+pi/2}
\tht_{\taut(k)}&=\tht_k+\pi/2, &&\text{for all $k\in\Z$;}\\
\retainlabel{eq:aff:th<th<th<th}\label{eq:aff:th<th<th<th}
\tht_j<\tht_k&<\tht_{\taut(j)}<\tht_{\taut(k)}, &&\text{for all $j,k\in\Z$ satisfying $j<k<\taut(j)<\taut(k)$.}
\end{align}
Finally, for each $k\in\Z$, we set
\begin{equation}\label{eq:aff:Jt}
\Jt_k=\{\taut(j)\mid j\in\Z\text{ such that $j<k$ and $\taut(j)>k$}\}.
\end{equation}
We find that for $k\in[2n]$, the sets $\Jt_k$ and $\J_k$ coincide modulo $2n$. We can now explain the signs appearing in~\eqref{eq:curve}: for $k\in[2n]$, we have 
\begin{equation}\label{eq:aff:curve}
\g_k(t)=\prod_{j\in\Jt_k} \sin(t-\tht_j).
\end{equation}
We extend the definition of $\g_k(t)$ to all $k\in\Z$ using~\eqref{eq:aff:curve}. It satisfies $\g_{k+2n}(t)=(-1)^{n-1}\g_k(t)$, in agreement with~\eqref{eq:CS}.

Clearly,  $\tau$ (resp., $\bth$) determines and is uniquely determined by its affine analog $\taut$ (resp.,~$\btht$).  In what follows, we switch freely between the two conventions.

\subsection{Removing a crossing}\label{sec:removing-crossing}
Given a matching $\tau$, we define 
\begin{equation}\label{eq:}
\xing(\tau):=\#\{j,k\in[2n]\mid j<k<\tau(j)<\tau(k)\}.
\end{equation}
 If $\xing(\tau)=0$ then $\tau$ is called \emph{non-crossing}. In this case, the graph $G$ has no edges, and the correlation matrix $M_\Reg$ consists of zeroes and ones. Otherwise, we can find an index $k\in[2n]$ satisfying $k<k+1<\taut(k)<\taut(k+1)$. We call such an index $k$ \emph{a $\tau$-descent}.

\begin{definition}
Let $\Reg=(\tau,\bth)$ be a valid region and $k\in[2n]$ be a $\tau$-descent. We introduce another pair $\ska\Reg=(\tau',\bth')$ defined as follows: we put 
\begin{equation}\label{eq:}
\taut'(k):=\taut(k+1),\quad \taut'(k+1):=\taut(k),\quad \taut'(\taut(k)):=k+1+2n,\quad \taut'(\taut(k+1)):=k+2n,
\end{equation}
and $\taut'(j):=\taut(j)$ for $j\in\Z$ not equal to one of $k$, $k+1$, $\tau(k)$, or $\tau(k+1)$ modulo $2n$. Similarly, $\btht'=(\tht'_j)_{j\in\Z}$ is defined by setting $\tht'_k:=\tht_{k+1}$, $\tht'_{k+1}:=\tht_k$, and $\tht_j:=\th_j$ for all $j\in\Z$ not equal to either $k$ or $k+1$ modulo $2n$.
\end{definition}

The following is straightforward to check.
\begin{proposition}
If $\Reg=(\tau,\bth)$ is a valid region and $k\in[2n]$ is a $\tau$-descent then $\ska\Reg$ is also a valid region. If $\Reg$ is non-alternating then so is $\ska \Reg$.
\qed
\end{proposition}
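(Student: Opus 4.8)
The plan is to dispose of the two assertions in turn: first that $\ska\Reg=(\tau',\bth')$ is again a valid region, and then that the non-alternating property is preserved.

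\emph{That $\ska\Reg$ is valid.} Writing $s_k$ for the transposition of $k$ and $k+1$ and $\tilde s_k\colon\Z\to\Z$ for its shift-periodic extension, I would first observe that the formulas defining $\ska\Reg$ amount to $\taut'=\tilde s_k\circ\taut\circ\tilde s_k$ and $\btht'=\btht\circ\tilde s_k$. In particular $\tau'$ is automatically a fixed-point-free involution, so the only thing to check for the matching is that this conjugate is the canonical affine extension, i.e.\ that $m<\taut'(m)<m+2n$ for the four indices $m\in\{k,k+1,\taut(k),\taut(k+1)\}$ at which $\taut$ and $\taut'$ differ; each of these is immediate from the descent inequality $k<k+1<\taut(k)<\taut(k+1)$ once one notes that $\tau(k+1)\neq k$ (otherwise $\taut(k)=k+1$ and $k$ would not be a descent). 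Next, \eqref{eq:aff:th+pi/2} for $\bth'$: after substituting $\taut'(k)=\taut(k+1)$, $\taut'(k+1)=\taut(k)$, $\tht'_k=\tht_{k+1}$, $\tht'_{k+1}=\tht_k$ it reduces to $\tht_{\taut(k)}=\tht_k+\pi/2$ and $\tht_{\taut(k+1)}=\tht_{k+1}+\pi/2$ (and their $2n$-shifts), which hold for $\bth$. The only substantial point is \eqref{eq:aff:th<th<th<th} for $\bth'$.

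\emph{The ordering condition.} Here I would use the fact that, \emph{given} \eqref{eq:aff:th+pi/2}, condition \eqref{eq:aff:th<th<th<th} is equivalent to: $\tht_m-\tht_j\in(0,\pi/2)$ for every crossing pair, i.e.\ whenever $j<m<\taut(j)<\taut(m)$. Set $a:=\taut(k)$, $b:=\taut(k+1)$, so $k<k+1<a<b$. Passing from $\tau$ to $\tau'$ replaces the pseudolines $\{k,a\}$ and $\{k+1,b\}$ by $\{k,b\}$ and $\{k+1,a\}$ and leaves the other $n-2$ pseudolines, together with their $\tht$-values, untouched. The two new pseudolines are nested (endpoints in cyclic order $k,k+1,a,b$), hence do not cross and impose nothing; and because $k,k+1$ label adjacent points of $d_1,\dots,d_{2n}$, every unchanged pseudoline crosses $\{k,b\}$ iff it crossed the old $\{k+1,b\}$, and crosses $\{k+1,a\}$ iff it crossed the old $\{k,a\}$. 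Since $\tht'_k=\tht_{k+1}$ is exactly the $\tht$-value the old $\{k+1,b\}$ carried at its ``left'' endpoint and $\tht'_{k+1}=\tht_k$ is the one the old $\{k,a\}$ carried, the four-term inequality attached to each new crossing is literally the inequality already known for the corresponding old crossing (a possible shift by $\pi$, arising from which affine representative one picks, is harmless). Hence every $\tau'$-crossing matches a $\tau$-crossing with the same inequality, and \eqref{eq:aff:th<th<th<th} for $\bth'$ follows from that for $\bth$.

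\emph{Non-alternating regions stay non-alternating.} Since $\v'_m=\vt_{s_k(m)}$, the regions $\Reg$ and $\ska\Reg$ have the same multiset of boundary vectors. I would argue by contraposition: suppose $p_1<p_2<p_3<p_4$ witness that $\ska\Reg$ is alternating, so $\v'_{p_1}=-\v'_{p_2}=\v'_{p_3}=-\v'_{p_4}$. If at most one $p_i$ lies in $\{k,k+1\}$, then $s_k$ preserves the order of the $p_i$'s (it merely moves one index between the consecutive values $k,k+1$), so $s_k(p_1)<\dots<s_k(p_4)$ witnesses that $\Reg$ is alternating. If both $k$ and $k+1$ occur among the $p_i$, they must be two \emph{consecutive} terms of the quadruple, whose $\v'$-values are opposite, giving $\vt_k=-\vt_{k+1}$; but applying \eqref{eq:aff:th<th<th<th} and \eqref{eq:aff:th+pi/2} to the descent pair $(k,k+1)$ yields $0<\tht_{k+1}-\tht_k<\pi/2$, hence $\vt_{k+1}/\vt_k=\exp\!\big(2i(\tht_{k+1}-\tht_k)\big)\neq-1$, a contradiction. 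Either way $\Reg$ is alternating, as desired. The step I expect to be the real obstacle is the middle one: pinning down the correspondence between crossings of $\tau$ and of $\tau'$ — including the affine/cyclic bookkeeping of the four-term inequalities and the $\pi$-shift ambiguities — precisely enough that \eqref{eq:aff:th<th<th<th} transfers verbatim; everything else is a short direct computation.
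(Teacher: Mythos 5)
The paper asserts this proposition without proof (``The following is straightforward to check''), so there is no argument to compare against; your task is only to supply a correct one, and you have. Your proof is sound. The conjugation formula $\taut'=\tilde s_k\circ\taut\circ\tilde s_k$, $\btht'=\btht\circ\tilde s_k$ is the right organizing observation and is easy to verify from the definitions once one notes (as you do) that $\taut(k),\taut(k+1)\not\equiv k,k+1\pmod{2n}$, which follows from the descent hypothesis. The fixed-point-free involution property and the $m<\taut'(m)<m+2n$ bounds then come out cleanly; for the latter one also needs $\taut(k+1)<k+2n$, which you correctly derive from $\taut(k+1)\neq k+2n$, i.e.\ $\tau(k+1)\neq k$.

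For the ordering condition \eqref{eq:aff:th<th<th<th}, your ``transfer of crossings'' argument is the right mechanism. Concretely: an unchanged pseudoline $\{j,\taut(j)\}$ crosses the new $\{k,\taut(k+1)\}$ iff it crossed the old $\{k+1,\taut(k+1)\}$ (the open cyclic intervals $(k,\taut(k+1))$ and $(k+1,\taut(k+1))$ differ only in $k+1$, which is not an endpoint of the unchanged pseudoline), and similarly for $\{k+1,\taut(k)\}$ vs.\ $\{k,\taut(k)\}$; the two new pseudolines are nested and so impose nothing; and $\tht'_k=\tht_{k+1}$, $\tht'_{k+1}=\tht_k$ means the four-term inequality at each new crossing is literally the old inequality. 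The ``$\pi$-shift'' worry is indeed harmless because each crossing has a unique affine normal form $j<m<\taut(j)<\taut(m)$ up to a simultaneous shift of all four indices by $2n$, under which all four $\tht$-values shift by $\pi$.

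The non-alternating part is also correct. The observation that $0<\tht_{k+1}-\tht_k<\pi/2$ (already remarked in the paper just before the example of $\gth_1,\gth_4$) rules out $\v_k=-\v_{k+1}$, and the case analysis on how many of the witnessing indices land in $\{k,k+1\}$ is clean. Minor stylistic point: you could shorten the case ``at most one $p_i\in\{k,k+1\}$'' by noting that $s_k$ is order-preserving on any subset of $[2n]$ not containing both $k$ and $k+1$.
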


\def\gth{g^{\bth}}
\def\Bth{B^{\bth}}
In order to make an inductive argument, we need one more ingredient from~\cite{GP}. Let $\Reg=(\tau,\bth)$ be a valid region. For each $\tau$-descent $k\in[2n]$, we will define a $2n\times 2n$ matrix $\gth_k$. Denote $s_k:=\sin(\tht_{k+1}-\tht_k)$ and $c_k:=\cos(\tht_{k+1}-\tht_k)$ and let 
\begin{equation}\label{eq:}
\Bth_k:=\begin{pmatrix}
1/c_k & s_k/c_k\\
s_k/c_k & 1/c_k
\end{pmatrix}.
\end{equation}
For $k<2n$, the matrix $\gth_k$ coincides with the $2n\times 2n$ identity matrix except for a $2\times 2$ block $\Bth_k$ which appears in rows and columns $k,k+1$. For $k=2n$, the matrix $\gth_{2n}$ coincides with the $2n\times 2n$ identity matrix except for the following four entries: 
\begin{equation}\label{eq:}
  (\gth_{2n})_{1,1}=(\gth_{2n})_{2n,2n}=1/c_{2n}, \quad\text{and}\quad (\gth_{2n})_{1,2n}=(\gth_{2n})_{2n,1}=(-1)^{n-1}s_{2n}/c_{2n}.
\end{equation}
Clearly, for a $\tau$-descent $k\in[2n]$, we have $\taut(k)\neq k+1$. Furthermore, by~\eqref{eq:aff:th<th<th<th}, we must have $\tht_k<\tht_{k+1}<\tht_k+\pi/2$. In particular, $c_{k}\neq0$ for any $\tau$-descent $k$.
\begin{example}
For $n=2$, we have
\[\gth_1(t)=\begin{pmatrix}
1/c_1 & s_1/c_1 & 0 & 0\\
s_e/c_1 & 1/c_1 & 0 & 0\\
0 & 0 & 1 & 0\\
0 & 0 & 0 & 1\\
\end{pmatrix},
\quad
\gth_4(t)=\begin{pmatrix}
1/c_4 & 0 & 0 & -s_4/c_4\\
0 & 1 & 0 & 0\\
0 & 0 & 1 & 0\\
-s_4/c_4 & 0 & 0 & 1/c_4
\end{pmatrix}.
\]
\end{example}

The above $2n\times 2n$ matrices represent linear operators on $\R^{2n}$. Given an $n$-dimensional subspace $X\subset \R^{2n}$ and a $2n\times 2n$ matrix $g$, the subspace $X\cdot g$ is defined as $\{x\cdot g\mid x\in X\}$, where $x$ is treated as a row vector.
\begin{proposition}[{\cite[Theorem~3.22]{GP}}]
Let $\Reg=(\tau,\bth)$ be a valid region and $k\in[2n]$ be a $\tau$-descent. Then for $\Reg':=\ska\Reg$, we have 
\begin{equation}\label{eq:doublemap_ind}
\doublemap(M_\Reg)=\doublemap(M_{\Reg'})\cdot \gth_k.
\end{equation}
\end{proposition}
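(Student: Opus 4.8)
The identity~\eqref{eq:doublemap_ind} is proved in \cite{GP}; here is the route I would take. It is at heart a statement about the Ising model --- the doubling map merely packages it --- so the plan is to isolate a direct relation between the correlation matrices $M_\Reg$ and $M_{\Reg'}$ coming from a local surgery on the underlying graph, and then to check that after doubling this relation becomes right multiplication by $\gth_k$.

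First I would realize $\Reg$ by a pseudoline arrangement $\Arr$ in which the pseudolines ending at $d_k$ and at $d_{k+1}$ cross immediately next to the boundary arc between $d_k$ and $d_{k+1}$; this is possible exactly because $k$ is a $\tau$-descent. Smoothing that crossing so that the strand from $d_k$ continues to $d_{\tau(k+1)}$ and the strand from $d_{k+1}$ continues to $d_{\tau(k)}$ produces a pseudoline arrangement $\Arr'$ realizing $\Reg'$, with one fewer crossing and agreeing with $\Arr$ away from that arc. Correspondingly, $G_{\Arr'}$ is obtained from $G_\Arr$ by contracting the single edge $e$ recorded at the smoothed crossing; its weight is $x_e=\cot(\th_e/2)$ with $2\th_e=\tht_{k+1}-\tht_k\in(0,\pi/2)$. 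Equivalently, $G_\Arr$ is $G_{\Arr'}$ with one boundary vertex split along $e$: the vertex $b_j$ if $k=2j-1$, the contracted vertex ``$b_j=b_{j+1}$'' if $k=2j$, with $k=2n$ the wrap-around form of the latter. This trichotomy is exactly the one built into $\gth_k$ --- a $2\times2$ block mixing the two doubled copies of column $j$, a $2\times2$ block mixing two adjacent doubled columns, and the $(-1)^{n-1}$-twisted wrap-around block.

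Next I would run this through the doubling map. Following \cite{GP}, one views $\doublemap(M)$ attached to a weighted graph as a ``boundary-measurement subspace'' in the Grassmannian of $\R^{2n}$ on which each elementary modification of the weighted graph acts by an explicit, nearly-identity operator; the point is to identify the operator attached to the modification of the previous step. Because that modification is localized at one edge near the boundary and $\gth_k$ differs from the identity only in rows and columns $k,k+1$, this reduces to a single local computation: expanding the partition function of $G_\Arr$ according to whether the two endpoints of $e$ agree, the agreeing part is $x_e$ times the corresponding sum for $G_{\Arr'}$, while the disagreeing part, after flipping the spin on one side of $e$, reciprocates the weights there and is again expressed through the Ising data of $G_{\Arr'}$; feeding the resulting transformation of the correlations into the alternating-sign formulas~\eqref{eq:double_signs} for $\double M$, the signs are precisely what recombine the two doubled columns at the split vertex into the block $\Bth_k$ while every other column is reproduced, so that $\double M_{\Reg'}\cdot\gth_k$ and $\double M_\Reg$ span the same row space. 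The case $n=2$ --- passing from a bowtie $\Reg'$ to a square $\Reg$ --- is the computation already implicit in \cref{ex:square1} and \cref{ex:square2}. For $k=2n$ I would reduce to $k<2n$ by conjugating with the cyclic shift $S$ of~\eqref{eq:CS}: the image of $\doublemap$ is $S$-invariant and $\Reg\mapsto\ska\Reg$ is $S$-equivariant, so the $(-1)^{n-1}$ in $\gth_{2n}$ is exactly the sign carried by $S$. Finally, since both sides of~\eqref{eq:doublemap_ind} are $n$-dimensional --- the left by \cref{item:doublemap:dim}, the right because $\gth_k$ is invertible ($c_k\neq0$ for a $\tau$-descent) --- it is enough to verify one inclusion.

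The main obstacle I expect is this local computation together with the sign bookkeeping: tracking which edges of $G_{\Arr'}$ get reciprocated when an endpoint of $e$ is flipped, doing this uniformly across the two parity cases and the wrap-around, and confirming that after~\eqref{eq:double_signs} the transformation collapses onto the single $2\times2$ block $\Bth_k$ rather than leaking into neighbouring columns. The affine conventions of \cref{sec:affine-notation} should be what keeps this under control.
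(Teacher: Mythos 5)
The paper does not actually prove this proposition: it is imported verbatim from \cite[Theorem~3.22]{GP}, and the only new content here is the \emph{Remark} that follows it, which translates conventions between \cite{GP} and the present paper. So there is no internal proof to compare against, and your proposal is best read as a reconstruction of what happens in \cite{GP}. Judged on that basis, the overall plan is sensible (localize the surgery at the crossing adjacent to the boundary arc between $d_k$ and $d_{k+1}$, work out what it does to the Ising data, feed it through the doubling formulas~\eqref{eq:double_signs}, use a dimension count to reduce to one inclusion, and handle $k=2n$ via the cyclic shift $S$), but the central local picture is described incorrectly in exactly the place where the paper's remark signals a subtlety.

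Concretely: smoothing the crossing $p$ near the boundary arc does \emph{not} uniformly ``split a boundary vertex.'' When $k=2j-1$ is odd, the boundary arc belongs to a black region $b_j$ which is adjacent only to $p$, so the edge $e$ is a pendant edge and the passage $G_{\Arr'}\to G_\Arr$ really is a vertex split (equivalently, $G_{\Arr'}=G_\Arr/e$). But when $k=2j$ is even, the boundary arc is white, $e$ connects the two distinct black boundary regions $b_j$ and $b_{j+1}$, and smoothing $p$ merges the two adjacent \emph{white} regions: the move is edge \emph{deletion}, $G_{\Arr'}=G_\Arr\setminus e$, and $b_j,b_{j+1}$ stay distinct. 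Your framing ``the contracted vertex $b_j=b_{j+1}$'' is not what $G_{\Arr'}$ looks like, and the partition-function expansion you invoke (flip the spin on one side of $e$) only applies cleanly when $e$ is a bridge, which is true in the odd case but not the even case. This contraction/deletion dichotomy is Kramers--Wannier dual and is exactly why \cite{GP} swaps $s_e$ and $c_e$ for even $k$, the very point the paper's remark is devoted to explaining; an outline that blurs the two parities into one picture will not produce the correct $\Bth_k$ block in both cases. You also flag the ``local computation together with the sign bookkeeping'' as the main expected obstacle, and indeed it is the whole content of \cite[Theorem~3.22]{GP}; as written, the outline does not yet show that this computation comes out right.
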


\begin{remark}
Suppose that a rhombus tiling $\Tiling$ contains a rhombus adjacent to the boundary edges $\vt_k$ and $\vt_{k+1}$ for some $k\in[2n]$, and let $e\in E$ be the edge of $G_\Tiling$ inside this rhombus. To this edge, one can associate two quantities $s_e$ and $c_e$; see~\cite[Eq.~(3.6)]{GP}. When $k$ is odd, the vertex of $\Tiling$ adjacent to $v_k$ and $v_{k+1}$ is black, and when $k$ is even, it is white. Therefore we see that the weight $x_e$ of $e$ is equal to $\cot((\tht_{k+1}-\tht_k)/2)$ when $k$ is odd and to $\cot(\pi/4-(\tht_{k+1}-\tht_k)/2)$ when $k$ is even. Comparing this to~\cite[Eq.~(3.6)]{GP}, we find that $s_e=s_k$, $c_e=c_k$ when $k$ is odd and $s_e=c_k$, $c_e=s_k$ when $k$ is even. In~\cite{GP},  $s_e$ and $c_e$ were swapped in the definition of $\gth_k$ whenever $k$ was even. Therefore we do not ever need to swap $s_k$ and $c_k$ in our definition of $\gth_k$.
\end{remark}

\subsection{Real and complex subspaces}\label{sec:real-compl-subsp}
The space $\R^{2n}$ is naturally embedded as a subset of~$\C^{2n}$. If $V\subset\R^{2n}$ is an $n$-dimensional linear subspace, one can consider its $\C$-span $V_\C\subset\C^{2n}$. Conversely, given an $n$-dimensional subspace $U\subset\C^{2n}$ that is invariant under conjugation, its intersection with $\R^{2n}\subset\C^{2n}$ will be an $n$-dimensional real subspace. Given any complex $n\times 2n$ matrix $A$ whose \emph{$\C$-row span} (i.e., row span with complex coefficients) coincides with $V_\C$ for some $V\subset\R^{2n}$, and given any real $2n\times n$ matrix $K$ such that $AK$ is invertible, the matrix $(AK)^{-1}A$ will be real and its $\R$-row  span will be equal to $V$. Therefore in \cref{cor:main}, it suffices to find any complex $n\times 2n$ matrix $A$ whose $\C$-row span coincides with $\SpanC(\curve_\Reg):=(\Span(\curve_\Reg))_\C$. We will present a canonical such matrix in the next section.
In what follows, we will freely switch between $\R$-spans and $\C$-spans.

\section{Fourier transform}\label{sec:fourier-transform}
Let $\Reg$ be a valid region. Recall the expression for $\curve_\Reg(t)=(\g_1(t),\g_2(t),\dots,\g_{2n}(t))$ from~\eqref{eq:aff:curve}. Denote $T:=\exp(it)$ and $T_k:=\exp(i\tht_k)$ for each $k\in\Z$. Using the formula $\sin(x)=\frac{\exp(ix)-\exp(-ix)}{2i}$, we get
\begin{equation}\label{eq:gamma_fourier}
  \g_k(t)=\frac{1}{(2i)^{n-1}}\prod_{m\in \Jt_k}  \left(\frac{T}{T_m}-\frac{T_m}{T}\right)=\frac{1}{(2i)^{n-1}}\sum_{j=1}^{n} (-1)^{n-j} \coef_{j,k} T^{2j-n-1}.
\end{equation}
Here each coefficient $\coef_{j,k}$ is, up to a constant that depends only on $k$, the $(j-1)$-th elementary symmetric polynomial\footnote{By definition, we have $e_{a}(x_1,x_2,\dots,x_b):=\sum_{1\leq j_1<j_2<\dots<j_{a}\leq b} x_{j_1}x_{j_2}\cdots x_{j_{a}}$.} in the variables $(T_m^2)_{m\in \Jt_k}$, and we let $F_\Reg$ be the corresponding~${n\times 2n}$  \emph{Fourier coefficient matrix}:
\begin{equation}\label{eq:fourier_coef}
  F_\Reg:=(\coef_{j,k}), \quad\text{where}\quad  \coef_{j,k}= \frac{e_{j-1}((T_m^2)_{m\in \Jt_k})}{\prod_{m\in \Jt_k} T_m} \quad\text{for $j\in[n],k\in[2n]$}.
\end{equation}
\begin{example}
In the case $R:=\Reg_2$ of \cref{ex:square1}, we have 
\begin{equation}\label{eq:}
T_1=1,\ \ T_2=\frac{1+i}{\sqrt2},\ \ T_3=i,\ \ T_4=\frac{-1+i}{\sqrt2},\ \ T_5=-T_1,\ \ \dots\;.
\end{equation}
We have $e_0(T^2_j)=1$ and $e_1(T_j^2)=T_j^2$, thus~\eqref{eq:gamma_fourier} yields
\begin{equation}\label{eq:}
F_\Reg=\begin{pmatrix}
1/{T_2}&1/{T_3}&1/{T_4}&-1/{T_1}\\
T_2 & T_3 & T_4 & -T_1 
\end{pmatrix}.
\end{equation}
One can check that $(F_\Reg K_n)^{-1}F_\Reg$ coincides with the matrix $(AK_n)^{-1}A$ from \cref{ex:square1}.
\end{example}
 Substituting $t=t_0$ into~\eqref{eq:aff:curve} corresponds to substituting $T=\exp(it_0)$ into~\eqref{eq:gamma_fourier}, thus the $\C$-row span of $F_\Reg$ contains the subspace $\SpanC(\curve_\Reg)$. (The factor $\frac{(-1)^{n-j}}{(2i)^{n-1}}$ does not affect the row span of $F_\Reg$.) 

Given $0\leq t_1<t_2<\dots<t_n<\pi$ as in \cref{prop:rows_dist_pts}, let $A$ be the $n\times 2n$ matrix with rows  $\curve_\Reg(t_1),\curve_\Reg(t_2),\dots,\curve_\Reg(t_n)$. Then by~\eqref{eq:gamma_fourier}, we have $A=V\cdot F_\Reg$ where $V$ is a Vandermonde-like $n\times n$ matrix with $(m,j)$-th entry given by $\frac{(-1)^{n-j}}{(2i)^{n-1}}\exp((2j-n-1)it_m)$. One readily checks that this matrix is invertible, which implies that $A$ and $F_\Reg$ have the same $\C$-row span. We have shown the following result.

\begin{lemma}\label{lemma:span=F}
Let $\Reg$ be a valid region.
\begin{theoremlist}
\item\label{item:span(gamma)=F} The $\C$-row span of $F_\Reg$ equals $\SpanC(\curve_\Reg)$.
\item\label{item:dim<=n} The dimension of $\SpanC(\curve_\Reg)$ is at most $n$.
\item\label{item:span(A)=F} For any $0\leq t_1<t_2<\dots<t_n<\pi$, the $\C$-span of the vectors $\curve_\Reg(t_1),\curve_\Reg(t_2),\dots,\curve_\Reg(t_n)$ coincides with the $\C$-row span of the matrix $F_\Reg$.\qed
\end{theoremlist}
\end{lemma}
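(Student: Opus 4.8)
The three parts are closely tied together, and the plan is to obtain all of them from a single factorization: the matrix of curve samples at $n$ distinct parameters factors through $F_\Reg$ via an invertible Vandermonde-type matrix.

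First I would record the ``easy'' inclusion $\SpanC(\curve_\Reg)\subseteq$ ($\C$-row span of $F_\Reg$). By~\eqref{eq:gamma_fourier}, for any $t_0\in\R$, setting $T_0:=\exp(it_0)$,
\[
\g_k(t_0)=\frac{1}{(2i)^{n-1}}\sum_{j=1}^{n}(-1)^{n-j}\coef_{j,k}\,T_0^{\,2j-n-1}\qquad\text{for all }k\in[2n],
\]
so $\curve_\Reg(t_0)=\sum_{j=1}^{n}\frac{(-1)^{n-j}}{(2i)^{n-1}}T_0^{\,2j-n-1}\cdot(\text{$j$-th row of }F_\Reg)$. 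Hence every point of the curve, and therefore its real span and its complexification $\SpanC(\curve_\Reg)$, lies in the $\C$-row span of $F_\Reg$. Since $F_\Reg$ has only $n$ rows, this already proves part~(ii).

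Next I would prove the reverse inclusion and part~(iii) simultaneously. Fix $0\le t_1<t_2<\dots<t_n<\pi$ and let $A$ be the $n\times 2n$ matrix with rows $\curve_\Reg(t_1),\dots,\curve_\Reg(t_n)$. The displayed identity says $A=V F_\Reg$, where $V$ is the $n\times n$ matrix with $(m,j)$-entry $\frac{(-1)^{n-j}}{(2i)^{n-1}}\exp\!\big((2j-n-1)it_m\big)$. Writing $\zeta_m:=\exp(2it_m)$ and pulling out the scalar $(2i)^{-(n-1)}$, the left diagonal factor $\operatorname{diag}\big(\exp(-(n-1)it_m)\big)$, and the right diagonal factor $\operatorname{diag}\big((-1)^{n-j}\big)$, the remaining matrix is $\big(\zeta_m^{\,j-1}\big)_{m,j}$, a genuine Vandermonde matrix. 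Because $0\le t_1<\dots<t_n<\pi$, the values $2t_1,\dots,2t_n$ are pairwise distinct modulo $2\pi$, so $\zeta_1,\dots,\zeta_n$ are pairwise distinct and the Vandermonde determinant is nonzero; hence $V$ is invertible. Therefore $F_\Reg=V^{-1}A$, so each row of $F_\Reg$ is a $\C$-linear combination of the vectors $\curve_\Reg(t_m)$ and thus lies in $\SpanC(\curve_\Reg)$. Combined with the previous paragraph this yields the equality in part~(i), and the relation $A=VF_\Reg$ with $V$ invertible gives part~(iii) at once, since then $A$ and $F_\Reg$ have the same $\C$-row span.

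The only step requiring any care is the invertibility of $V$, and even there the content is just the classical Vandermonde determinant once the scalar and the two diagonal factors are stripped off; the hypothesis $t_m\in[0,\pi)$ is precisely what makes the $\zeta_m=\exp(2it_m)$ distinct. I do not anticipate a genuine obstacle here — the main thing to watch is bookkeeping the constant $\frac{(-1)^{n-j}}{(2i)^{n-1}}$ and the normalization of $\coef_{j,k}$ in~\eqref{eq:fourier_coef}.
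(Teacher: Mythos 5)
Your proof is correct and follows essentially the same route as the paper: show each $\curve_\Reg(t_0)$ lies in the row span of $F_\Reg$ by expanding~\eqref{eq:gamma_fourier} (giving (ii) and one inclusion), then write the sample matrix as $A=VF_\Reg$ and argue $V$ is invertible to get the reverse inclusion and (iii). The only difference is that you fill in the paper's ``one readily checks that this matrix is invertible'' step by explicitly stripping off the scalar and two diagonal factors to reveal a genuine Vandermonde matrix in $\zeta_m=\exp(2it_m)$, and observing that $t_m\in[0,\pi)$ is exactly what keeps the $\zeta_m$ distinct; this is a correct and useful amplification but not a different argument.
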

We will see later that whenever $\Reg$ is non-alternating, the matrix $F_\Reg$ has rank $n$, which will finish the proof of \cref{prop:span_dim_n}. For now, we observe that this is very easy to see in the generic case.

\begin{proposition}\label{prop:Span_dim=n}
Suppose that $\Reg$ is \emph{generic}, i.e., satisfies $\v_j\neq\v_k$ for all $j,k\in[2n]$ such that $j\neq k$. Then $\Span(\curve_\Reg)$ has dimension $n$.
\end{proposition}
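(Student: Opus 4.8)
The plan is to combine \cref{lemma:span=F} with a triangularity argument. Since every vector $\curve_\Reg(t)$ lies in $\R^{2n}$, the real dimension of $\Span(\curve_\Reg)$ equals the complex dimension of $\SpanC(\curve_\Reg)$, which by \cref{lemma:span=F} is the rank of the Fourier coefficient matrix $F_\Reg$ and in any case is at most $n$. So it suffices to produce $n$ real numbers $t_1,\dots,t_n$ for which $\curve_\Reg(t_1),\dots,\curve_\Reg(t_n)$ are linearly independent. (These can afterwards be shifted by multiples of $\pi$ into $[0,\pi)$ and reordered, using $\curve_\Reg(t+\pi)=(-1)^{n-1}\curve_\Reg(t)$ and the fact that in the generic case the $\th_j$ are pairwise distinct modulo $\pi$, so that the conclusion also matches the hypotheses of \cref{prop:rows_dist_pts}.)

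The starting observation is that in the generic case $\g_k(\th_j)=0$ if and only if $j\in\J_k$: a factor $\sin(\th_j-\th_m)$ of $\g_k(\th_j)$ in~\eqref{eq:curve} vanishes exactly when $\v_j=\v_m$, and for $j,m\in[2n]$ this forces $j=m$. I would therefore take as evaluation points the angles $\th_j$ with $j$ ranging over the set $L:=\{p\in[2n]\mid \tau(p)>p\}$ of left endpoints of the $n$ pairs of $\tau$; note $|L|=n$. Writing $L=\{j_1<j_2<\dots<j_n\}$, form the $n\times n$ matrix $N$ with entries $N_{a,b}:=\g_{j_a}(\th_{j_b})$.

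I claim $N$ is triangular with nonzero diagonal. If $a<b$ then $j_a<j_b<\tau(j_b)$, the last inequality because $j_b\in L$, so $(j_a,j_b,\tau(j_b))$ is a counterclockwise triple, i.e.\ $j_b\in\J_{j_a}$, and hence $N_{a,b}=\g_{j_a}(\th_{j_b})=0$ by the criterion above. On the diagonal, $N_{a,a}=\g_{j_a}(\th_{j_a})\neq0$, since $j_a\notin\J_{j_a}$ as $\J_k$ is disjoint from $\{k,\tau(k)\}$ and all factors are then nonzero by genericity. Thus $\det N=\prod_a N_{a,a}\neq0$, so $N$ is invertible. Reading a hypothetical relation $\sum_b c_b\,\curve_\Reg(\th_{j_b})=0$ in the coordinate indexed by $j_a$ gives $\sum_b N_{a,b}c_b=0$ for every $a$, i.e.\ $Nc=0$, so $c=0$; hence $\curve_\Reg(\th_{j_1}),\dots,\curve_\Reg(\th_{j_n})$ are linearly independent and $\dim\Span(\curve_\Reg)=n$.

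There is no serious obstacle beyond bookkeeping here; the one genuine choice is to use the evaluation points indexed by the left endpoints $L$, which is precisely what makes $N$ triangular. Equivalently, one may argue with the columns of $F_\Reg$: up to a nonzero scalar, column $k$ is the coefficient vector of the monic polynomial $\prod_{m\in\J_k}(x+\v_m)$ of degree $n-1$, and the same set $L$ selects $n$ of these polynomials whose values at the points $-\v_{j_1},\dots,-\v_{j_n}$ form a triangular invertible matrix, so that $F_\Reg$ has rank $n$.
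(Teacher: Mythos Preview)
Your proof is correct and follows essentially the same approach as the paper's. The paper uses the index set $\J_1\sqcup\{1\}$, which coincides with your set $L$ of left endpoints (since $j\in\J_1\sqcup\{1\}$ iff $j<\tau(j)$), picks the same evaluation points $\th_{j_1},\dots,\th_{j_n}$ (reduced modulo $\pi$), and observes the same triangularity: the submatrix of $A$ with columns $\J_1\sqcup\{1\}$ is upper triangular with nonzero diagonal, which is just the transpose of your matrix $N$.
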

\begin{proof}
By \cref{lemma:span=F}, it suffices to choose some values $0\leq t_1<t_2<\dots<t_n<\pi$ such that the matrix $A$ with rows $\curve_\Reg(t_1),\curve_\Reg(t_2),\dots,\curve_\Reg(t_n)$ has rank $n$. The rank of $A$ is invariant under permuting the rows, so it suffices to ensure that the values $t_1,t_2,\dots,t_n$ belong to $[0,\pi)$ and are pairwise distinct. Let us write $\J_1\sqcup\{1\}=\{j_1<j_2<\dots<j_n\}$, so $j_1=1$. For each $m\in[n]$, let $t_m\in[0,\pi)$ be equal to $\th_{j_m}$ modulo $\pi$. Because $\Reg$ is generic, the vectors $\v_j=\exp(2i\th_j)$ are pairwise distinct, thus $t_1,t_2,\dots,t_n$ are pairwise distinct. Now consider the matrix $A$ with rows $\curve_\Reg(t_1),\curve_\Reg(t_2),\dots,\curve_\Reg(t_n)$. Its submatrix with columns $\J_1\sqcup\{1\}$ is an $n\times n$ upper-triangular matrix with nonzero diagonal entries. Thus the matrix $A$ has rank~$n$.
\end{proof}

\section{Proof of the formula}
Let $\Reg=(\tau,\bth)$ be a non-alternating region. Our goal is to prove \cref{thm:main_span} by induction on $\xing(\tau)$. Along the way, we will show that the space $\Span(\curve_\Reg)$ has dimension $n$, which, in view of \cref{lemma:span=F}, will complete the proofs of \cref{prop:span_dim_n,prop:rows_dist_pts}.

The base case $\xing(\tau)=0$ corresponds to $\tau$ being a non-crossing matching. Let $j,k\in[2n]$ be such that $j<k=\tau(j)$. Then $\tau([j,k])=[j,k]$. The difference $k-j$ must be odd, and we set $\eps_{j,k}:=(-1)^{(k-j-1)/2}$. We see that $\J_j=\J_k$, so~\eqref{eq:curve} implies $\g_j(t)=\eps_{j,k}\g_k(t)$. Let us show by induction on $k-j$ that $\Span(\curve_\Reg)$ contains the vector $\bas_j+\eps_{j,k}\bas_k$, where $\bas_1,\bas_2,\dots,\bas_{2n}$ is the standard basis of $\R^{2n}$. When $k=j+1$, the non-alternating condition implies that $\g_j(\th_j),\g_k(\th_j)\neq0$ while $\g_m(\th_j)=0$ for all $m\in[2n]\setminus\{j,k\}$. Thus $\curve(\th_j)$ is proportional to $\bas_j+\eps_{j,k}\bas_k$, which shows the result for $k=j+1$. We may now modify the curve $\curve_\Reg$: let $\curve_\Reg^{[j,k]}$ be obtained from $\curve_\Reg$ by setting the $j$-th and $k=(j+1)$-th coordinates to zero and dividing all other coordinates by $\sin(t-\th_j)$ (which appears as a factor in $\g_m(t)$ for $m\in[2n]\setminus\{j,k\}$). Observe that $\Span(\curve_\Reg)$ coincides with $\Span(\curve_\Reg^{[j,k]})\oplus \Span(\bas_j+\eps_{j,k}\bas_k)$. We may now omit the pair $\{j,k\}$ from $\tau$ and repeat the process. Eventually, we decompose the space $\Span(\curve_\Reg)$ as a direct sum of spaces of the form $\Span(\bas_j+\eps_{j,k}\bas_k)$ over all pairs $j,k\in[2n]$ satisfying $j<k=\tau(j)$. Such vectors are linearly independent, which completes the base case for \cref{prop:span_dim_n,prop:rows_dist_pts}. It is also easy to see that the span of these vectors equals $\doublemap(M_\Reg)$, finishing the base case for \cref{thm:main_span}.
We note that the signs $\eps_{j,k}$ are uniquely determined by the property that the row vectors $(\bas_j+\eps_{j,\tau(j)}\bas_{\tau(j)})_{j\in\J_1\sqcup\{1\}}$ form an $n\times 2n$ matrix all of whose nonzero maximal minors have the same sign, cf.~\cite{GP}.

In order to complete the induction step, we need the following simple observation.
\begin{lemma}\label{lemma:gamma_ind}
Let $\Reg=(\tau,\bth)$ be a valid region and $k\in[2n]$ be a $\tau$-descent. Then for $\Reg':=\ska\Reg$ and all $t\in\R$, we have 
\begin{equation}\label{eq:gamma_ind}
\curve_\Reg(t)=\curve_{\Reg'}(t)\cdot \gth_k.
\end{equation}
\end{lemma}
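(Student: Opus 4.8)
The plan is to verify~\eqref{eq:gamma_ind} coordinate by coordinate, working throughout in the affine conventions of \cref{sec:affine-notation}; write $\curve_{\Reg'}(t)=(\g'_1(t),\dots,\g'_{2n}(t))$, so that $\g'_j$ is given by~\eqref{eq:aff:curve} with $\tau,\bth$ replaced by $\tau',\bth'$. The matrix $\gth_k$ is symmetric, acts as the identity in all coordinates $\not\equiv k,k+1\pmod{2n}$, and in coordinates $k,k+1$ acts by $\Bth_k$; because of the $2n$-antiperiodicity $\g_{m+2n}(t)=(-1)^{n-1}\g_m(t)$, the special sign in $\gth_{2n}$ is exactly what makes $k=2n$ behave like the generic case (the computation below then applies verbatim with $k=2n,\ k+1=2n+1$), so I will not treat it separately. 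Thus~\eqref{eq:gamma_ind} reduces to two claims: \textbf{(a)} $\g_j(t)=\g'_j(t)$ for all $j\not\equiv k,k+1\pmod{2n}$; and \textbf{(b)} the scalar identities $\g_k=\tfrac1{c_k}\g'_k+\tfrac{s_k}{c_k}\g'_{k+1}$ and $\g_{k+1}=\tfrac{s_k}{c_k}\g'_k+\tfrac1{c_k}\g'_{k+1}$.

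For~(a) the key is a conjugation identity. Put $p:=\taut(k)$, $q:=\taut(k+1)$ (so $k<k+1<p<q$, since $k$ is a $\tau$-descent), and let $\rho\colon\Z\to\Z$ be the involution swapping $k+2nm\leftrightarrow k+1+2nm$ for every $m\in\Z$ and fixing all other integers. Checking values on $k,k+1,p,q$ and their $2n$-shifts against the definition of $\ska\Reg$, one sees at once that $\taut'=\rho\circ\taut\circ\rho$, while obviously $\tht'_m=\tht_{\rho(m)}$ for all $m$. Hence, by~\eqref{eq:aff:curve} for $\Reg'$ and a reindexing,
\[
\g'_j(t)=\prod_{m\in\Jt'_j}\sin\!\big(t-\tht_{\rho(m)}\big)=\prod_{m\in\rho(\Jt'_j)}\sin\!\big(t-\tht_m\big),
\]
so it suffices to prove $\rho(\Jt'_j)=\Jt_j$ for $j\not\equiv k,k+1$. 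Substituting $i=\rho(i')$ in~\eqref{eq:aff:Jt} and using $\taut'=\rho\taut\rho$, this amounts to the equality of index sets $\{i'\mid\rho(i')<j<\rho(\taut(i'))\}=\{i'\mid i'<j<\taut(i')\}$. As $\rho$ changes an integer by at most $\pm1$, and the only $i'$ whose value $i'$ or $\taut(i')$ can be moved by $\rho$ are those with $i'\equiv k,\,k+1,\,\tau(k),\,\tau(k+1)$ (note $\tau(k),\tau(k+1)\not\equiv k,k+1$ because $\tau$ is a fixed-point-free involution and $k$ is a $\tau$-descent), a short case-by-case check shows the two conditions can disagree only when $j\equiv k$ or $j\equiv k+1$, which is excluded. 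Incidentally this shows $\ska\Reg$ merely replaces the two crossing matched pairs $\{k,p\},\{k+1,q\}$ by the nested pairs $\{k,q\},\{k+1,p\}$, leaving all other pairs intact.

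For~(b), inspecting~\eqref{eq:aff:Jt} (using $k<k+1<p<q$ and the definition of $\taut'$) shows $k+1\in\Jt_k\cap\Jt'_k$, $p\in\Jt_{k+1}$, $q\in\Jt'_{k+1}$, and that the four sets share a common part
\[
S:=\Jt_k\setminus\{k+1\}=\Jt_{k+1}\setminus\{p\}=\Jt'_k\setminus\{k+1\}=\Jt'_{k+1}\setminus\{q\},
\]
every element of which is $\not\equiv k,k+1\pmod{2n}$ (it comes from a matched pair disjoint from $\{k,k+1\}$); hence $\tht'_m=\tht_m$ for $m\in S$. Writing $P(t):=\prod_{m\in S}\sin(t-\tht_m)$ and using $\tht'_{k+1}=\tht_k$, we get
\[
\g_k=\sin(t-\tht_{k+1})\,P,\qquad \g'_k=\sin(t-\tht_k)\,P,\qquad \g_{k+1}=\sin(t-\tht_p)\,P,\qquad \g'_{k+1}=\sin(t-\tht_q)\,P.
\]
By~\eqref{eq:aff:th+pi/2}, $\tht_p=\tht_k+\pi/2$ and $\tht_q=\tht_{k+1}+\pi/2$, so $\sin(t-\tht_p)=-\cos(t-\tht_k)$ and $\sin(t-\tht_q)=-\cos(t-\tht_{k+1})$. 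Plugging into the identities of~(b) and cancelling $P(t)$, they become
\[
c_k\sin(t-\tht_{k+1})+s_k\cos(t-\tht_{k+1})=\sin(t-\tht_k),\qquad
c_k\cos(t-\tht_k)+s_k\sin(t-\tht_k)=\cos(t-\tht_{k+1}),
\]
which are precisely the addition formulas for $\sin$ and $\cos$ with the angle $\delta:=\tht_{k+1}-\tht_k$ (so $c_k=\cos\delta$, $s_k=\sin\delta$, $\tht_k=\tht_{k+1}-\delta$, $\tht_{k+1}=\tht_k+\delta$). The only genuine work is the combinatorial bookkeeping in~(a) and~(b) --- tracking how the sets $\Jt_j$ deform under a descent move --- which I expect to be the main obstacle; the conjugation identity $\taut'=\rho\circ\taut\circ\rho$ is the tool that keeps it under control, after which the trigonometry is entirely routine.
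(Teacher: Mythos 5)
Your proof is correct and follows essentially the same outline as the paper's: reduce~\eqref{eq:gamma_ind} to $\g_j=\g'_j$ for $j\not\equiv k,k+1$ plus a $2\times2$ identity at $k,k+1$, identify the common set $S$ with $\Jt_k=\{k+1\}\sqcup S$, $\Jt_{k+1}=\{p\}\sqcup S$, $\Jt'_k=\{k+1\}\sqcup S$, $\Jt'_{k+1}=\{q\}\sqcup S$, and finish with the sine/cosine addition formulas using $\tht_p=\tht_k+\pi/2$, $\tht_q=\tht_{k+1}+\pi/2$. The one genuine improvement is your conjugation identity $\taut'=\rho\circ\taut\circ\rho$ with $\tht'_m=\tht_{\rho(m)}$: it packages in one statement the ``subtle point'' the paper handles ad hoc (that for $p<m\le q$ the set $\Jt_m$ contains $k+2n$ while $\Jt'_m$ contains $k+1+2n$, yet the products still agree), since $\rho(\Jt'_j)=\Jt_j$ explains the discrepancy in a uniform way; this is a cleaner bookkeeping device for part (a), but the underlying decomposition and trigonometry are the same.
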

\begin{proof}
 Denote $\curve_\Reg(t)=(\g_1(t),\g_2(t),\dots,\g_{2n}(t))$ and $\curve_{\Reg'}(t)=(\g'_1(t),\g'_2(t),\dots,\g'_{2n}(t))$. In order to treat the case $k=2n$ uniformly, we consider these sequences to be defined for all $k\in\Z$ via~\eqref{eq:aff:curve}. Let $a:=\taut(k)$ and $b:=\taut(k+1)$, thus we have $k<k+1<a<b$. Denote $\Reg'=(\tau',\bth')$, thus $\taut'(k)=b$, $\taut'(k+1)=a$, $\tht'_k=\tht_{k+1}$, and $\tht'_{k+1}=\tht_k$. 

Let $\Jt_m$ and $\Jt'_m$ be obtained from $\taut$ and $\taut'$ respectively by~\eqref{eq:aff:Jt}. By comparing $\Jt_m$ to $\Jt'_m$, we see that $\g_m(t)=\g'_m(t)$ for all $m\in\Z$ not equal to $k$ or $k+1$ modulo $2n$. A subtle point here is that for $a<m\leq b$, $\Jt_m$ contains $k+2n$ while $\Jt'_m$ contains $k+1+2n$ instead. However, we have $\tht_{k+2n}=\tht'_{k+1+2n}$, so we also get $\g_m(t)=\g'_m(t)$. It remains to consider the cases $m=k$ and $m=k+1$. We claim that there exists a function $\DUMMY$ such that 
\begin{align}
\label{eq:g_k:dummy}
\g_k(t)&=\sin(t-\tht_{k+1})\cdot \DUMMY, &\g_{k+1}(t)&=-\cos(t-\tht_k)\cdot \DUMMY;\\
\label{eq:g'_k:dummy}
\g'_k(t)&=\sin(t-\tht_{k})\cdot \DUMMY, &\g'_{k+1}(t)&=-\cos(t-\tht_{k+1})\cdot \DUMMY.
\end{align}
Indeed, we find that $k+1\in\Jt_k$, $a\in \Jt_{k+1}$, $k+1\in\Jt'_k$, and $b\in\Jt'_{k+1}$, with all other elements being common to $\Jt_k$, $\Jt_{k+1}$, $\Jt'_k$, and $\Jt'_{k+1}$. Using the fact that $\tht_a=\tht_k+\pi/2$ and $\tht'_b=\tht_{k+1}+\pi/2$, and letting $\DUMMY$ be the product of $\sin(t-\tht_m)$ for $m$ ranging over the common elements of these four sets, \eqref{eq:g_k:dummy}--\eqref{eq:g'_k:dummy} follow. We find
\begin{align}\label{eq:}
(\g'_k(t),\g'_{k+1}(t))\cdot \begin{pmatrix}
1/c_k & s_k/c_k\\
s_k/c_k & 1/c_k
\end{pmatrix}=\frac{\DUMMY}{c_k} \bigg(&\sin(t-\tht_k)-\sin(\tht_{k+1}-\tht_k)\cos(t-\tht_{k+1}),\\
&\sin(t-\tht_k)\sin(\tht_{k+1}-\tht_k)-\cos(t-\tht_{k+1})\bigg).
\end{align}
Writing 
\begin{equation}\label{eq:}
  \sin(t-\tht_k)=\sin \left((t-\tht_{k+1})+(\tht_{k+1}-\tht_k)\right),\quad\cos(t-\tht_{k+1})=\cos \left((t-\tht_k)-(\tht_{k+1}-\tht_k)\right)
\end{equation} 
and expanding using sum/difference formulas, we get
\begin{equation}\label{eq:}
(\g'_k(t),\g'_{k+1}(t))\cdot \begin{pmatrix}
1/c_k & s_k/c_k\\
s_k/c_k & 1/c_k
\end{pmatrix}=(\g_k(t),\g_{k+1}(t)).
\end{equation}
Since we have shown above that $\g'_m(t)=\g_m(t)$ for $m$ not equal to $k,k+1$ modulo $2n$, we are done with the proof of~\eqref{eq:gamma_ind}.
\end{proof}

Let us now complete the induction step in our proof. Let $\Reg=(\tau,\bth)$ and $\Reg'=(\tau',\bth')$ be as in \cref{lemma:gamma_ind}, thus $\xing(\tau)=\xing(\tau')+1$. Assume that the statement has been shown for $\curve_{\Reg'}(t)$. Thus $\Span(\curve_{\Reg'})$ has dimension $n$ and equals $\doublemap(M_{\Reg'})$. Comparing~\eqref{eq:gamma_ind} to~\eqref{eq:doublemap_ind}, we see that the same conclusions hold for $\Span(\curve_{\Reg})$. (Note that the matrix $\gth_k$ is always invertible.) This finishes the proof of \cref{prop:span_dim_n}, \cref{thm:main_span}, and \cref{prop:rows_dist_pts}. Recall that \cref{cor:main} has already been deduced from \cref{thm:main_span} in \cref{sec:double}.\qed

\section{Regular polygons}\label{sec:regular-polygons}
In this section, we prove \cref{thm:reg}. Before we proceed, we discuss some convergence results for the case of regular polygons approximating the unit disk.

\subsection{Asymptotics and convergence}
Let us discuss our formula~\eqref{eq:regular} in more detail. First, we clearly have $\<\sigma_1\sigma_1\>_{\Reg_n}=1$, and the next few values are
\begin{align}
  \<\sigma_1\sigma_2\>_{\Reg_n}&=\frac2n\cdot \frac1{\sin(\pi/2n)}-1, \\
  \<\sigma_1\sigma_3\>_{\Reg_n}&=\frac2n\left(\frac1{\sin(3\pi/2n)}-\frac1{\sin(\pi/2n)}\right) +1, \\
  \<\sigma_1\sigma_4\>_{\Reg_n}&=\frac2n\left(\frac1{\sin(5\pi/2n)}-\frac1{\sin(3\pi/2n)}+\frac1{\sin(\pi/2n)}\right)-1.
\end{align}
We stress that~\eqref{eq:regular} is valid not just asymptotically, but for all finite values of~$n$. Due to the explicit nature of this formula, computing the asymptotics becomes a straightforward exercise. For example, we have
\begin{equation}\label{eq:}
  \<\sigma_1\sigma_2\>_{\Reg_n}\to\frac4\pi-1,\quad \<\sigma_1\sigma_3\>_{\Reg_n}\to\frac4\pi \left(\frac13-1\right)+1,\quad \<\sigma_1\sigma_4\>_{\Reg_n}\to\frac4\pi \left(\frac15-\frac13+1\right)-1,\quad\dots
\end{equation}
as $n\to\infty$. By the Leibniz formula for $\pi$
\begin{equation}\label{eq:Leibniz}
\frac\pi4=1-\frac13+\frac15-\frac17+\frac19-\cdots,
\end{equation}
the above sequence $\lim_{n\to\infty}\<\sigma_1\sigma_k\>_{\Reg_n}$ tends to zero as $k$ increases. To get a nonzero limit, we need to multiply the correlations by~$n$.

\begin{corollary}\label{cor:reg_limit}
For $0<x<1$, we have
\begin{equation}\label{eq:reg_limit}
\lim_{n\to\infty} n\cdot \<\sigma_1\sigma_{\lfloor nx\rfloor}\>_{\Reg_n}=\frac1{\sin(\pi x)}.
\end{equation}
\end{corollary}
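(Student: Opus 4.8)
The plan is to read off the asymptotics directly from the explicit formula \eqref{eq:regular} in \cref{thm:reg}, isolating the part of each summand that blows up like $1/\theta$ and treating it via the Leibniz series \eqref{eq:Leibniz}. Write $q=\lfloor nx\rfloor$ and $k:=q-1$, so that for $n$ large we have $2\le q\le n-1$ and $k/n\to x$. Set $t_j:=\frac{(2j-1)\pi}{2n}$ for $1\le j\le k$, and fix $x'$ with $x<x'<1$; since $t_k=\frac{(2k-1)\pi}{2n}\to\pi x$, all the $t_j$ lie in $(0,\pi x']$ once $n$ is large. Unwinding the alternating sum and the trailing sign $\mp1$ in \eqref{eq:regular} (the bottom summand has sign $(-1)^{k-1}$, the constant has sign $(-1)^k$), one gets
\begin{equation*}
n\,\<\sigma_1\sigma_q\>_{\Reg_n}=2\sum_{j=1}^{k}\frac{(-1)^{k-j}}{\sin t_j}+(-1)^k n .
\end{equation*}

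Next I would split $\frac{1}{\sin\theta}=\frac1\theta+g(\theta)$, where $g(\theta):=\frac{1}{\sin\theta}-\frac1\theta$ extends to a smooth function on $[0,\pi)$ with $g(0)=0$. For the polar part, $\frac{1}{t_j}=\frac{2n}{(2j-1)\pi}$ and $(-1)^{k-j}=(-1)^{k-1}(-1)^{j-1}$, so
\begin{equation*}
2\sum_{j=1}^{k}\frac{(-1)^{k-j}}{t_j}=\frac{4n}{\pi}(-1)^{k-1}L_k,\qquad L_k:=\sum_{j=1}^{k}\frac{(-1)^{j-1}}{2j-1},
\end{equation*}
the $k$-th partial sum of the series \eqref{eq:Leibniz}. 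Writing $L_k=\frac\pi4-\bigl(\frac\pi4-L_k\bigr)$ and using that the alternating tail $\frac\pi4-L_k=(-1)^{k}\sum_{u\ge1}\frac{(-1)^{u-1}}{2k-1+2u}$ satisfies $\frac\pi4-L_k=(-1)^k\,\frac{1}{4k}\bigl(1+o(1)\bigr)$ as $k\to\infty$ (by partial fractions, $\sum_{u\ge1}\frac{(-1)^{u-1}}{2k-1+2u}$ equals a sum of terms $\frac{2}{(2k+4m+1)(2k+4m+3)}$, which one compares to $\int_0^\infty\frac{dm}{(2k+4m)^2}=\frac1{8k}$), one obtains
\begin{equation*}
\frac{4n}{\pi}(-1)^{k-1}L_k=(-1)^{k-1}n+\frac{n}{\pi k}\bigl(1+o(1)\bigr)=(-1)^{k-1}n+\frac{1}{\pi x}+o(1),
\end{equation*}
using $k/n\to x$.

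For the smooth remainder $2\sum_{j=1}^{k}(-1)^{k-j}g(t_j)$, I would group the alternating sum into consecutive pairs: for $k$ even it equals $2\sum_{i=1}^{k/2}\bigl(g(t_{2i})-g(t_{2i-1})\bigr)$, and since $t_{2i}-t_{2i-1}=\pi/n$, the mean value theorem turns this into $\frac{2\pi}{n}\sum_{i}g'(\xi_i)$, a Riemann sum (mesh $2\pi/n$, tagged points $\xi_i\in(t_{2i-1},t_{2i})\subset(0,\pi x']$) for $\int_{0}^{\pi x}g'=g(\pi x)-g(0)=g(\pi x)$; uniform continuity of $g'$ on $[0,\pi x']$ gives the convergence with an $O(1/n)$-type error. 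For $k$ odd there is one extra leftover term $g(t_1)=O(1/n)\to0$. Hence $2\sum_{j=1}^{k}(-1)^{k-j}g(t_j)\to g(\pi x)=\frac{1}{\sin\pi x}-\frac1{\pi x}$. Adding the three pieces, the two terms of size $n$ cancel, $(-1)^{k-1}n+(-1)^k n=0$, and the limit is $\frac{1}{\pi x}+\Bigl(\frac{1}{\sin\pi x}-\frac1{\pi x}\Bigr)=\frac{1}{\sin\pi x}$, which is \eqref{eq:reg_limit}.

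The only place genuine care is needed is the two middle estimates: pinning down the constant in the Leibniz-tail expansion $\frac\pi4-L_k\sim\frac{(-1)^k}{4k}$ exactly (an incorrect constant would leave a residual multiple of $\frac1{\pi x}$ and spoil the answer), and justifying the ``alternating sum of a $C^1$ function tends to half its total increment'' step with a uniform error on $[0,\pi x']$ — throughout one must keep track of the parity of $k$ and of the floor in $q=\lfloor nx\rfloor$, both of which affect only $o(1)$ terms. Everything else is routine.
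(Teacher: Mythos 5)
Your proof is correct, and it is essentially the paper's argument reorganized. The paper subtracts the partial Leibniz sum from formula~\eqref{eq:regular} term by term and then identifies the paired differences as a Riemann sum for $\int_0^x\bigl(\tfrac{\pi\cos\pi y}{\sin^2\pi y}-\tfrac1{\pi y^2}\bigr)\,dy$; you instead split $\tfrac1{\sin\theta}=\tfrac1\theta+g(\theta)$ at the outset, feed the polar part into the Leibniz tail estimate, and run a mean-value/Riemann-sum argument on $g$. These are two ways of presenting the same cancellation of the $O(n)$ terms, and the limits match. One small point in your favor: by fixing $x'\in(x,1)$ so that all $t_j\in(0,\pi x']$, your argument covers $0<x<1$ directly, without the symmetry reduction to $x\le\frac12$ that the paper uses; one small point to tidy up: the expression ``compares to $\int_0^\infty\tfrac{dm}{(2k+4m)^2}=\tfrac1{8k}$'' should really be compared against the sum $\sum_m\tfrac{2}{(2k+4m+1)(2k+4m+3)}$ with the factor $2$ carried along, giving $\tfrac1{4k}$ --- as you indeed use --- but as written the sentence momentarily drops that factor.
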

\noindent  Note that the scaling limit $\frac1{\sin(\pi x)}$ is consistent with the predictions of conformal field theory~\cite{BPZ2,BPZ1}. Specifically, applying a conformal map from the upper half plane to the unit disk, one can check that our scaling limit coincides with the function $\<\sigma(x_1)\sigma(x_2)\>_\Omega^{\operatorname{free}}$ defined in~\cite[Theorem~5]{Hongler}.

Despite the simplicity of these results, we have not been able to find any of them in the literature. The only result dealing with convergence of boundary spin correlations is due to Hongler~\cite[Theorem~5]{Hongler}. It applies to the case of the square lattice approximating a wide variety of regions, however, the unit disk is not among these regions, since its vertical and horizontal boundary parts are empty. 

\begin{remark}
Existing convergence results for Ising model correlations rely on the powerful technique of \emph{fermionic observables} developed in breakthrough work by Smirnov et al.~\cite{Smirnov,CS,HS,CDCHKS,CHI}. In particular, conformal invariance and universality for spin correlations in the interior of the region was shown  by Chelkak--Smirnov~\cite{CS}. They work in the context of arbitrary infinite rhombus tilings of the plane, however, a crucial assumption imposed e.g. in~\cite[Section~1.2]{CS} and~\cite[Section~1.2]{CS2} is that 
 \begin{equation}\label{eq:angles:CS}
\text{the angles of all rhombi are uniformly bounded away from $0$ and $\pi$}.
\end{equation}
It appears that this widely used assumption is completely necessary for the fermionic observables approach to apply. On the other hand, any rhombus tiling of a regular $2n$-gon $\Reg_n$ must contain a rhombus with angle~$\pi/n$. (In fact, for $n>2$, it contains exactly $n$ such rhombi.) In this sense, \cref{cor:reg_limit} is ``transversal'' to the convergence results obtained previously in the literature.
\end{remark}

\subsection{Proof of Theorem~\ref{thm:reg}}
Recall that $\Reg_n$ is a regular $2n$-gon, $M_{\Reg_n}$ is its correlation matrix, and $\double M_{\Reg_n}=(\double m_{j,k})$ is the associated ``doubled'' matrix defined in~\eqref{eq:double_signs}. By symmetry, it suffices to compute $\<\sigma_1\sigma_{j}\>_{\Reg_n}$ for $j\in[n]$. We are thus interested in the first row of $M_{\Reg_n}$, which is determined by the first row of $\double M_{\Reg_n}$. We have $\double m_{1,1}=\double m_{1,2}=1$. \Cref{thm:reg} is immediately implied by the following result.
\begin{lemma}
For $1<k<2n$, we have
\begin{equation}\label{eq:geom_prog}
\double m_{1,k}+\double m_{1,k+1}=
\begin{cases}
  0, &\text{if $k$ is odd;}\\
  (-1)^{\frac k2+1} \cdot \frac2{n\sin \left((k-1)\pi/2n\right)},&\text{if $k$ is even.}
\end{cases}
\end{equation}
\end{lemma}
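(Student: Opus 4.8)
The plan is to apply \cref{thm:main_span} (equivalently \cref{cor:main}) to the concrete region $\Reg_n$, for which everything can be computed explicitly. For the regular $2n$-gon, a convenient $\tau$-shape is $\th_k := (k-1)\pi/2n$ for $k\in[2n]$, so that $\v_k = \exp(2i\th_k) = \exp(i(k-1)\pi/n)$ and the antipodal matching is simply $\tau(k) = k+n$ (indices mod $2n$). In affine notation, $\taut(k) = k+n$ for all $k\in\Z$, and $\Jt_k = \{\taut(j) \mid j<k<\taut(j)\} = \{k+1, k+2, \dots, k+n-1\}$ — a block of $n-1$ consecutive integers. Thus by~\eqref{eq:aff:curve} we have $\g_k(t) = \prod_{m=k+1}^{k+n-1}\sin(t-\tht_m)$, and in the Fourier picture~\eqref{eq:fourier_coef}, writing $\omega := \exp(i\pi/n)$ so that $T_k^2 = \exp(2i\tht_k) = \omega^{k-1}$, the matrix $F_{\Reg_n}$ has $(j,k)$-entry $\coef_{j,k} = e_{j-1}(\omega^{k}, \omega^{k+1}, \dots, \omega^{k+n-2})/\prod_{m=k+1}^{k+n-1}\omega^{(m-1)/2}$. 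Because $F_{\Reg_n}$ is built from geometric progressions of roots of unity, its rows are (up to scalars) exponential vectors, and I expect $\Span_\C(\curve_{\Reg_n})$ to be spanned by the $n$ vectors $w^{(\ell)} := (\zeta^{(2\ell-n-1)(k-1)})_{k\in[2n]}$ for a suitable primitive $4n$-th root of unity $\zeta$ — this is exactly the statement that each $\g_k(t)$ is a degree-$(n-1)$ trigonometric polynomial whose frequencies are the half-integers in $[-(n-1)/2, (n-1)/2]$, cf.~\cref{rem:fourier}.

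The core computation is then to invert $AK_n$ for $A = F_{\Reg_n}$ (or for a cleaner choice of basis $A$ of $\Span_\C(\curve_{\Reg_n})$) and read off the first row of $B = (AK_n)^{-1}A = \double M_{\Reg_n}$ via \cref{cor:main}. The quantity we actually want, $\double m_{1,k} + \double m_{1,k+1}$, is the $(1,k)$-entry of the product $B\cdot C$ where $C$ is the $2n\times(2n-1)$ matrix with $C_{k,k} = C_{k+1,k} = 1$ and zeros elsewhere — i.e.\ we are pairing the first row of $B$ against consecutive-coordinate sums. Equivalently, since $\double M K_n = \Id_n$, the vector of pairwise sums $(\double m_{1,k} + \double m_{1,k+1})_k$ has a clean interpretation: it is essentially $B$ applied to the "averaging" directions, and because $B$ is constructed as $(AK_n)^{-1}A$ with $A$ a matrix of roots of unity, each such pairwise sum is a short geometric-type sum that telescopes. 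Concretely, I expect the first row of $\double M_{\Reg_n}$ itself to be expressible as a single trigonometric sum, and~\eqref{eq:geom_prog} then follows because the $k$-th and $(k+1)$-th entries differ by a factor controlled by $\omega$: the odd-$k$ case gives exact cancellation (reflecting that $\double m_{1,2j-1} = -\double m_{1,2j}$ for $j>1$, by~\eqref{eq:double_signs}, combined with the specific symmetry of the regular polygon forcing... wait — actually for $k$ odd, $k = 2j-1$, we have $\double m_{1,2j-1} + \double m_{1,2j} = (-1)^{1+j+1}m_{1,j} + (-1)(-1)^{1+j+1}m_{1,j} = 0$ automatically from~\eqref{eq:double_signs} since $1 < j$; so the odd case is \emph{free} and requires no computation at all), while the even case $k = 2j$ gives $\double m_{1,2j} + \double m_{1,2j+1} = (-1)^{1+j+1}m_{1,j} + (-1)^{1+(j+1)+1}m_{1,j+1} = (-1)^{j}(m_{1,j} - m_{1,j+1})$, i.e.\ a \emph{difference} of consecutive correlations, and~\eqref{eq:regular} is precisely the claim that these differences are the terms $\pm\frac{2}{n\sin((2k-1)\pi/2n)}$.

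So the real content is: compute $m_{1,j} - m_{1,j+1}$ for the regular polygon and show it equals $(-1)^{?}\frac{2}{n\sin((2j-1)\pi/2n)}$. I would do this by exhibiting an explicit basis $A$ of $\Span_\C(\curve_{\Reg_n})$ consisting of root-of-unity vectors, computing $AK_n$ as a (scaled) matrix whose entries are $\frac12(\zeta^a + \zeta^b)$-type expressions — essentially a rescaled DFT matrix composed with averaging — inverting it using the known inverse of the DFT, and then extracting the relevant row of $(AK_n)^{-1}A$. The main obstacle I anticipate is purely bookkeeping: tracking the half-integer exponents (the $\prod T_m$ denominator in~\eqref{eq:fourier_coef} introduces $\sqrt\omega$ factors), the alternating signs from~\eqref{eq:double_signs}, and the cyclic-shift sign twist in~\eqref{eq:CS}, and making sure the resulting closed form matches~\eqref{eq:geom_prog} with the correct sign $(-1)^{k/2+1}$. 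A useful sanity check along the way is \cref{ex:square1}/\cref{ex:square2} ($n=2$: $\double m_{1,3} + \double m_{1,4} = (\sqrt2-1) + (1-\sqrt2) = 0$ for $k=3$ odd, and $\double m_{1,2}+\double m_{1,3} = 1 + (\sqrt2-1) = \sqrt2 = \frac{2}{2\sin(\pi/4)}$ for $k=2$ even), which confirms both the sign conventions and the shape of the answer. An alternative, possibly cleaner route avoiding explicit inversion: use \cref{rmk:KW}, namely that $\double M_{\Reg_n}$ is the unique fixed point of the cyclic shift $S$ from~\eqref{eq:CS} among doubled correlation matrices; the eigenvector equation $\double M_{\Reg_n}\cdot S = \double M_{\Reg_n}$ (appropriately interpreted) pins down the first row up to normalization as a fixed geometric sum, and the normalization $\double m_{1,1} = 1$ then yields~\eqref{eq:geom_prog} directly. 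I would pursue whichever of the two is shorter, but both reduce~\eqref{eq:geom_prog} to a finite geometric-series identity, which is the routine endgame.
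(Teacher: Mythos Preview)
Your proposal is correct and takes essentially the same approach as the paper: pick the root-of-unity basis $A=(z_j^{k-1})$ with $z_j=\exp\bigl((2j-n-1)\pi i/2n\bigr)$, factor $AK_n$ as a diagonal matrix times a DFT-type Vandermonde $V$, invert via $V^{-1}=\frac1n V^*$, and obtain $\double m_{1,k}+\double m_{1,k+1}=\frac{2}{n}\sum_{j=1}^n z_j^{k-1}$ as a geometric sum giving~\eqref{eq:geom_prog}. One minor slip in your side remark: when passing from column $2j-1$ to column $2j$ in~\eqref{eq:double_signs} you dropped the extra minus sign, so for even $k=2j$ (with $j\ge 2$) one actually gets $\double m_{1,2j}+\double m_{1,2j+1}=(-1)^{j+1}(m_{1,j}+m_{1,j+1})$, a \emph{sum} rather than a difference of consecutive correlations---but this does not affect your main argument, which computes the $\double m$'s directly from $(AK_n)^{-1}A$; your observation that the odd-$k$ case is automatic from~\eqref{eq:double_signs} is correct and is a shortcut the paper does not take.
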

\begin{example}
For the matrix $\double M=(AK_n)^{-1}A$ from \cref{ex:square1}, we have 
\begin{equation}\label{eq:}
  \double m_{1,2}+\double m_{1,3}=1+(\sqrt2-1)=\sqrt2=\frac1{\sin(\pi/4)} \quad\text{and}\quad\double m_{1,3}+\double m_{1,4}=(\sqrt2-1)+(1-\sqrt2)=0.
\end{equation}
\end{example}

\begin{proof}
Let $\zeta:=\exp(\pi i/2n)$. For $j\in[n]$, set $z_j:=\zeta^{2j-n-1}$. These are the $2n$-th roots of $(-1)^{n-1}$ with positive real part. Let $A=(a_{j,k})$ be an $n\times 2n$ matrix given by $a_{j,k}=z_j^{k-1}$. It is not hard to see from~\eqref{eq:fourier_coef} that for each $j\in[n]$, the $j$-th row of $A$ is a scalar multiple of the $j$-th row of $F_{\Reg_n}$. In particular, the row span of $A$ equals $\Span(\curve_{\Reg_n})=\doublemap(M_{\Reg_n})$, and we have $(AK_n)^{-1}A=\double M_{\Reg_n}$. We compute the first row of $(AK_n)^{-1}A$ explicitly.

First, we have $AK_n=DV$ where $D=\diag \left(\frac{1+z_1}2,\frac{1+z_2}2,\dots,\frac{1+z_n}2\right)$ is an $n\times n$ diagonal matrix and $V=(z_j^{2(k-1)})$ is an $n\times n$ Vandermonde matrix. In fact, $V$ is a variant of the \emph{discrete Fourier transform (DFT) matrix}, and similarly to the standard DFT matrix, it satisfies $V^{-1}=\frac1n V^*$, where $V^*$ is the conjugate transpose of $V$. We find that $(AK_n)^{-1}=V^{-1}D^{-1}$, thus its entries are given by
\begin{equation}\label{eq:}
  ((AK_n)^{-1})_{r,j}=\frac2{n(1+z_j)z_j^{2(r-1)}}\quad\text{for $r,j\in[n]$.}
\end{equation}
We can now multiply $(AK_n)^{-1}$ by $A$ and focus on the first row of the resulting matrix:
\begin{equation}\label{eq:}
\double m_{1,k}=\frac2n\sum_{j=1}^n \frac{z_j^{k-1}}{(1+z_j)}\quad\text{for $k\in[2n]$.}
\end{equation}
Computing $\double m_{1,k}+\double m_{1,k+1}$, we see that $(1+z_j)$ cancels out and we get
\begin{equation}\label{eq:}
\double m_{1,k}+\double m_{1,k+1}=\frac2n \sum_{j=1}^n z_j^{k-1}\quad\text{for $1\leq k<2n$.}
\end{equation}
This is a geometric progression; the result is easily seen to be given by~\eqref{eq:geom_prog}.
\end{proof}

\subsection{Proof of Corollary~\ref{cor:reg_limit}}
By symmetry, we only need to prove the result for $x\leq \frac12$, so fix $0<x\leq \frac12$. We treat the cases of $k$ odd and $k$ even differently, so fix $\eps\in\{0,1\}$ and let $k=k(n,x,\eps)$ be the integer closest to $nx$ such that $k\equiv \eps\pmod 2$.  Rewrite~\eqref{eq:regular}--\eqref{eq:Leibniz} as follows:
\begin{align}
\label{eq:limA}
(-1)^{k}\<\sigma_1\sigma_{k+1}\> &=
 1-\frac2{n\sin(\pi/2n)}+\frac2{n\sin(3\pi/2n)}-\dots+\frac{(-1)^k\cdot 2}{n\sin((2k-1)\pi/2n)};\\
\label{eq:limB}
\frac4\pi \sum_{m=k}^{\infty} \frac{(-1)^m}{2m+1}&=1-\frac4\pi+\frac4{3\pi}-\dots+\frac{(-1)^k\cdot 4}{(2k-1)\pi}.
\end{align}
We subtract~\eqref{eq:limB} from~\eqref{eq:limA}, multiply the result by $n$, and take the limit as $n\to\infty$. The left hand side becomes
\begin{equation}\label{eq:}
  (-1)^{\eps} \left(\lim_{n\to\infty} \<\sigma_1\sigma_{k}\>-\frac1{\pi x}\right).
\end{equation}
For the right hand side, we write:
\begin{equation}\label{eq:} \frac2{n\sin((r-2)\pi/2n)}-\frac2{n\sin(r\pi/2n)}-\frac4{(r-2)\pi}+\frac4{r\pi}=\frac{4\sin(\pi/2n)\cos((r-1)\pi/2n)}{n\sin(r\pi/2n)\sin((r-2)\pi/2n)}-\frac8{r(r-2)\pi}.
\end{equation}
This gives a Riemann sum approximating the integral of a continuous function:
\begin{equation}\label{eq:}
\int_0^x \left(  \frac{\pi\cos(\pi y)}{(\sin(\pi y))^2}-\frac1{\pi y^2} \right)\, dy=-\frac1{\sin(\pi x)}+\frac1{\pi x}.
\end{equation}

Since we are grouping the terms in the right hand sides of~\eqref{eq:limA}--\eqref{eq:limB} in pairs, there will be one extra term whenever $k$ is even. Multiplying this term by $n$, we get
\begin{equation}\label{eq:}
  n \left(\frac2{n\sin((2k-1)\pi/2n)}-\frac4{(2k-1)\pi}\right)\to 2 \left(\frac1{\sin(\pi x)}-\frac1{\pi x}\right) \qquad \text{as $n\to\infty$.}
\end{equation}
 Combining the pieces together, we find
\begin{equation}\label{eq:}
  (-1)^{\eps} \left(\lim_{n\to\infty} \<\sigma_1\sigma_{k}\>-\frac1{\pi x}\right) = -\frac1{\sin(\pi x)}+\frac1{\pi x} + 
  \begin{cases}
    0, &\text{if $\eps$ is odd}\\ 
    2 \left(\frac1{\sin(\pi x)}-\frac1{\pi x}\right),&\text{if $\eps$ is even.}
  \end{cases}
\end{equation}
This shows $\lim_{n\to\infty} \<\sigma_1\sigma_{k}\>=\frac1{\sin(\pi x)}$, regardless of the parity of $\eps$.\qed

\section{Alternating regions}\label{sec:alternating-regions}
So far we have only covered the case of non-alternating regions $\Reg$. The goal of this section is to extend our approach to arbitrary valid regions.

\def\L{L}
Let $\Reg=(\tau,\bth)$ be a valid region as defined in \cref{sec:match-regi-tilings}. For $\v\in\C$, we let 
\begin{equation}\label{eq:}
\L_\v:=\{k\in[2n]\mid \v_k=\pm \v\}.
\end{equation}
By~\eqref{eq:v=-v}, we have $\tau(\L_\v)=\L_\v$. The following is straightforward to check; see \figref{fig:alt}(right).
\begin{enumerate}[label=(\alph*)]
\item\label{item:NC} The restriction $\tau |_{\L_\v}$ of $\tau$ to $\L_\v$ is a non-crossing matching for each $\v\in\C$.
\item\label{item:alt_tau} If $\Reg$ is alternating then the indices $1\leq a<b<c<d\leq 2n$ in~\eqref{eq:alternating} may be chosen so that $b=\tau(a)$ and $d=\tau(c)$.
\end{enumerate}

In view of~\ref{item:alt_tau}, when $\Reg$ is an alternating region with $a,b,c,d$ as above, we see from~\eqref{eq:curve} that each $\g_k(t)$ becomes divisible by $\sin(t-\th_a)=\pm\sin(t-\th_c)$. Rescaling $\curve_\Reg(t)$ by $\frac1{\sin(t-\th_a)}$ does not change $\Span(\curve_\Reg)$, however, the coordinates of $\frac{\curve_\Reg(t)}{\sin(t-\th_a)}$ are trigonometric polynomials of degree $n-2$. Thus when $\Reg$ is alternating, $\Span(\curve_\Reg)$ has dimension strictly less than $n$, by the arguments in \cref{sec:fourier-transform}. In order to fix this, we need to take into account the ``derivatives'' of $\curve_\Reg(t)$.

For $k\in[2n]$, let $T_k:=\exp(i\th_k)$ so that $T_k^2=\v_k$. Let
\begin{equation}\label{eq:GG}
  \GG_\Reg(T)=(\gg_1(T),\gg_2(T),\dots,\gg_{2n}(T)), \quad\text{where}\quad \gg_k(T):=\prod_{j\in \Jt_k} \frac{T-\v_j}{T_j} \quad\text{for $k\in[2n]$.}
\end{equation}
Comparing with~\eqref{eq:gamma_fourier}, we find that $  \GG_\Reg(T^2)=(2iT)^{n-1}\cdot \curve_\Reg(t)$ for $T=\exp(it)$.
Thus the curves $\GG_\Reg$ and $\curve_\Reg$ have the same $\C$-span. For $k\in[2n]$, let
\begin{equation}\label{eq:}
\supp(k):=
\begin{cases}
  \{k,k+1,\dots,\tau(k)\}, &\text{if $k<\tau(k)$,}\\
  \{k,k+1,\dots,n,1,\dots,\tau(k)\},&\text{if $k>\tau(k)$.}
\end{cases}
\end{equation}
For a vector $x\in\C^{2n}$ and a set $S\subset[2n]$, we let $x|_S\in\C^{2n}$ be the vector obtained from $x$ by
\begin{equation}\label{eq:}
(x|_S)_k:=
\begin{cases}
  x_k, &\text{if $k\in S$,}\\
  0,&\text{if $k\notin S$.}
\end{cases}
\end{equation}
For $m\geq0$, denote by $\GG^\parr m_\Reg(T)$ the $m$-fold derivative of $\GG_\Reg(T)$:
\begin{equation}\label{eq:}
\GG^\parr m_\Reg(T):=\frac{d^m}{dT^m}\GG_\Reg(T).
\end{equation}
Finally, for $k\in[2n]$, let $m_k:=\#\{j\in\J_k\mid \v_j=\v_k\}$ be the degree with which $(T-\v_k)$ divides~$\gg_k(T)$. Denote
\begin{equation}\label{eq:}
  \u k:=\GG^\parr{m_k}_\Reg(\v_k)|_{\supp(k)} \quad\text{for $k\in[2n]$.}
\end{equation}
Thus $\u k\in\C^{2n}$ is obtained by (i) differentiating $\GG_\Reg(T)$ $m_k$ times, (ii) substituting $T=\v_k$, and (iii) sending all coordinates not in $\supp(k)$ to $0$.

\begin{theorem}\label{thm:arb_reg}
Let $\Reg$ be a valid region.  Then for each $k\in[2n]$, the vectors 
\begin{equation}\label{eq:u_j}
\{\u j\mid j\in \J_k\sqcup \{k\}\}
\end{equation}
form a basis of $\doublemap(M_\Reg)$.
\end{theorem}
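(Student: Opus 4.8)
The plan is to prove \cref{thm:arb_reg} by induction on $\xing(\tau)$, mirroring the structure of the proof of \cref{thm:main_span} in the non-alternating case, but now keeping track of derivatives of $\GG_\Reg$ at repeated roots. First I would establish the base case $\xing(\tau)=0$: here $\tau$ is non-crossing, so $\gg_k(T)=\prod_{j\in\Jt_k}(T-\v_j)/T_j$ where $\Jt_k$ encodes the nested pairs of $\tau$. For a pair $j<\tau(j)$, one checks that $\u j$ is (up to a nonzero scalar) the vector $\bas_j+\eps_{j,\tau(j)}\bas_{\tau(j)}$ restricted appropriately — this requires computing the leading nonvanishing derivative $\GG^\parr{m_j}_\Reg(\v_j)$ and observing that, after the $m_j$-fold differentiation kills the multiple factor $(T-\v_j)^{m_j}$, the surviving coordinates indexed inside $\supp(j)$ collapse to the two indices $j$ and $\tau(j)$ (all others vanish at $T=\v_j$ because $j$ or $\tau(j)$ lies in their $\Jt$-set). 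Since the $\bas_j+\eps_{j,\tau(j)}\bas_{\tau(j)}$ for $j\in\J_1\sqcup\{1\}$ span $\doublemap(M_\Reg)$ in the non-crossing case (already shown in \cref{sec:regular-polygons}'s analogue, i.e.\ the base case argument in the ``Proof of the formula'' section), this settles $\xing(\tau)=0$. One also needs to check that the $2n$ vectors $\u j$, $j\in[2n]$, are consistent across the choices of $k$ — i.e.\ that replacing $k$ by another index only changes which $(n-1)$-subset $\J_k$ we pick, not the individual vectors $\u j$.

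For the induction step I would use the descent move $\Reg\mapsto\ska\Reg$ from \cref{sec:removing-crossing}. Given a $\tau$-descent $k$, set $\Reg'=\ska\Reg$ with $\xing(\tau')=\xing(\tau)-1$. By \cref{lemma:gamma_ind} we have $\curve_\Reg(t)=\curve_{\Reg'}(t)\cdot\gth_k$, and passing to the $\GG$-normalization, $\GG_\Reg(T)=\GG_{\Reg'}(T)\cdot\gth_k$ for all $T$ (the scalar $(2iT)^{n-1}$ is common to both sides and cancels). Differentiating $m$ times in $T$ commutes with right-multiplication by the constant matrix $\gth_k$, so $\GG^\parr m_\Reg(T)=\GG^\parr m_{\Reg'}(T)\cdot\gth_k$ as well. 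Combined with \cref{eq:doublemap_ind}, $\doublemap(M_\Reg)=\doublemap(M_{\Reg'})\cdot\gth_k$, it suffices to show that each basis vector $\u j$ for $\Reg$ equals (a scalar multiple of) the corresponding $\GG^\parr{m'_{j'}}_{\Reg'}(\v_{j'})|_{\supp'(j')}\cdot\gth_k$ for a suitable $j'$, and that the index bookkeeping matches up: $\J_k$ for $\tau$ should correspond under the descent to $\J_{k}$ or $\J_{k+1}$ for $\tau'$ plus the relevant endpoints.

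The main obstacle — and the step I would budget the most care for — is controlling the interaction between the restriction operator $x\mapsto x|_{\supp(k)}$ and right-multiplication by $\gth_k$. The matrix $\gth_k$ acts nontrivially only in coordinates $k,k+1$ (or $1,2n$ for $k=2n$), and the descent swaps $\tht_k\leftrightarrow\tht_{k+1}$ and alters $\taut$ on the four indices $k,k+1,\taut(k),\taut(k+1)$; this changes the supports $\supp(k),\supp(k+1)$ and the multiplicities $m_k,m_k+1$ and $m_k'$ in a way that must be checked to be compatible. Concretely, I expect to need an analogue of the identity \eqref{eq:g_k:dummy}--\eqref{eq:g'_k:dummy} for $\GG$ and its derivatives: write $\gg_k(T)$ and $\gg_{k+1}(T)$ as $(T-\v_{k+1})\cdot\Psi(T)$ and $-(\text{something})\cdot\Psi(T)$ for a common factor $\Psi$, take the $m_k$-th derivative via the Leibniz rule (only the lowest surviving term matters at $T=\v_k$ since $\v_{k+1}\neq\v_k$ for a descent), and verify that the $2\times 2$ transformation $\Bth_k$ intertwines the $(\g_k,\g_{k+1})$-block with the $(\g'_k,\g'_{k+1})$-block exactly as in \cref{lemma:gamma_ind}. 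The subtlety is that for alternating regions the factor $\Psi$ itself may vanish at $\v_k$, so one must argue that the relevant derivative count $m_k$ is unchanged by the descent away from indices $k,k+1$ and that the restriction to $\supp(k)$ versus $\supp'(k)$ differs only in the coordinates where $\gth_k$ is supported, so that $(\u k \text{ for } \Reg)=(\u{k'}\text{ for }\Reg')\cdot\gth_k$ holds on the nose. Once this compatibility is nailed down, the inductive step closes and, together with the base case, yields that $\{\u j\mid j\in\J_k\sqcup\{k\}\}$ is a basis of $\doublemap(M_\Reg)$ for every $k$.
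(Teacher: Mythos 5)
Your proposal is a genuinely different route from the paper's: the paper does \emph{not} induct on $\xing(\tau)$ for this theorem. Instead, it approximates $\Reg=(\tau,\bth)$ by a sequence of \emph{generic} regions $\RegN=(\tau,\bthN)$ with the same matching, invokes continuity of both $\bth\mapsto M_\Reg$ and $\doublemap$, and shows that each $\u j$ arises as a limit of vectors $u^{(j,N)}\in\SpanC(\curve_{\RegN})$ built via divided-difference operators $\dd{a}$ acting on a rescaled curve $\GGBT$. The restriction $|_{\supp(j)}$ that you are fighting with is then \emph{automatic}: the paper shows $\w j=\w j|_{\supp(j)}$ because each evaluation $\GGB(\v_{\parr{N,k'}})$, for $k'\in\Jp_j\sqcup\{j\}$, already vanishes in coordinates outside $\supp(j)$ by \eqref{eq:supp_zero} and the non-crossing property \eqref{eq:NC}. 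This sidesteps entirely the need to commute restriction with $\gth_k$, which is precisely what you flag as your hardest step.

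Beyond being a different route, your argument as written has two genuine gaps.

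First, the base case claim is false. You assert that in the non-crossing case the surviving coordinates of $\GG^{\parr{m_j}}_\Reg(\v_j)|_{\supp(j)}$ collapse to the indices $j$ and $\tau(j)$ because ``$j$ or $\tau(j)$ lies in their $\Jt$-set.'' But for $m\in\supp(j)$ with $j<m<\tau(j)$, one has $\tau(j)\in\Jt_m$ (not $j$), so the corresponding factor of $\gg_m(T)$ is $(T-\v_{\tau(j)})=(T+\v_j)$, which does \emph{not} vanish at $T=\v_j$. Concretely, for $n=3$, $\tau=(1\,6)(2\,5)(3\,4)$, and generic angles, $\u 1$ has all six coordinates nonzero; it is not a scalar multiple of $\bas_1+\eps_{1,6}\bas_6$. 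What is true — and what the paper actually uses for linear independence — is a weaker triangularity statement: the $n\times 2n$ matrix with rows $\u{j_1},\dots,\u{j_n}$, $\{j_1<\dots<j_n\}=\J_1\sqcup\{1\}$, has an upper-triangular $n\times n$ submatrix with nonzero diagonal in the columns $\J_1\sqcup\{1\}$. Your base case would need to be rewritten along these lines (and you would still need a separate argument that the span is all of $\doublemap(M_\Reg)$, which is not obvious from triangularity alone — the paper gets it from the limiting argument together with \itemref{item:dim<=n} of \cref{lemma:span=F}).

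Second, the inductive step is only a plan, and the obstacle you name is real. Even granting $\GG^{\parr m}_\Reg(T)=\GG^{\parr m}_{\Reg'}(T)\cdot\gth_k$, the vectors $\u j$ involve a restriction $|_{\supp(j)}$ which does not commute with $\gth_k$ when exactly one of $k,k+1$ lies in $\supp(j)$, and the support itself changes under the descent for $j\in\{k,k+1,\tau(k),\tau(k+1)\}$ (e.g.\ $\supp(k)=[k,\tau(k)]$ becomes $\supp'(k)=[k,\tau(k+1)]$, a strictly larger set). Resolving this for all four affected indices — tracking simultaneously the change in $\supp(j)$, in $m_j$, and in the relevant derivative of $\gg_j$ — would require a substantial computation that is not present; as it stands, the inductive step does not close.
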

\begin{proof}
As explained after \cref{dfn:non_alt}, any generic region is non-alternating. Clearly, any valid region $\Reg=(\tau,\bth)$ can be approximated by a sequence $(\RegN)_{N\geq0}$ of non-alternating valid regions with the same matching $\tau$: $\RegN=(\tau,\bthN)$. When we keep $\tau$ fixed, the matrix $M_\Reg$ depends continuously on $\bth$, thus $M_{\RegN}\to M_\Reg$ as $N\to\infty$. The map $\doublemap$ is also continuous (the topology on the set of $n$-dimensional subspaces of $\R^{2n}$ is inherited from the \emph{Grassmannian} $\Gr(n,2n)$). Thus $\doublemap(M_\Reg)$ consists of all vectors $x\in\R^{2n}$ obtained as limits of sequences of vectors $x^{(N)}\in\Span(\curve_{\RegN})$ as $N\to\infty$. Switching from working over $\R$ to working over $\C$ as in \cref{sec:real-compl-subsp}, it remains to show the following:
\begin{enumerate}
\item\label{item:lin_indep} the vectors in~\eqref{eq:u_j} are linearly independent;
\item\label{item:limit} for each $j\in [2n]$, $\u j$ can be obtained as a limit of a sequence $u^{(j,N)}\in \SpanC(\curve_{\RegN})$, where $(\RegN)_{N\geq0}$ is a sequence of generic valid regions approximating~$\Reg$.
\end{enumerate}
Let us first explain~\eqref{item:lin_indep}. Set $k=1$ and let $\{j_1<j_2<\dots<j_n\}=\J_1\sqcup\{1\}$. Let
 $A$ be the $n\times 2n$ matrix with rows $\u{j_1},\u{j_2},\dots,\u{j_n}$. Then the submatrix of $A$ with column set $\J_1\sqcup\{1\}$ is upper triangular with nonzero diagonal entries. Indeed, for $j\in \J_1\sqcup\{1\}$, the $j$-th coordinate of $\u j$ is nonzero since we have differentiated $\gg_j(T)$ by $T$ exactly $m_j$ times before substituting $T=\v_j$. On the other hand, for $j'\in\J_1$ such that $j'>j$, we have $j\notin\supp(j')$, so $(\u{j'})_{j}=0$. Thus the vectors in~\eqref{eq:u_j} are linearly independent for $k=1$. Because of the cyclic symmetry, the same holds for arbitrary $k\in[2n]$.

To show~\eqref{item:limit}, we need to introduce certain operators which are similar to the divided difference operators appearing in Schubert calculus. For a rational function $P(T)\in\C(T)$ defined at $a\in \C$, we set
\begin{equation}\label{eq:}
\dd{a} P(T):=\frac{P(T)-P(a)}{T-a}.
\end{equation}
Such operators commute: for $a\neq b$, we have $\dd{a} \circ\dd{b}=\dd{b}\circ\dd{a}$, and for a finite sequence $(a_1,a_2,\dots,a_m)$ of pairwise distinct complex numbers, we set 
\begin{equation}\label{eq:}
\dd{(a_1,a_2,\dots,a_m)} P(T):=\dd{a_1}\circ\dd{a_2}\circ\dots\circ \dd{a_m} P(T).
\end{equation}
If $P(T)\in\C[T]$ is a polynomial then so is $\dd{a} P(T)$. In this case, we claim
\begin{equation}\label{eq:multilimit}
  \lim_{a_1,\dots,a_m,a\to a_0} \left[ \left(\dd{(a_1,\dots,a_m)} P(T) \right) |_{T=a}\right]= \frac{1}{m!}\cdot P^\parr m (a_0) \quad\text{for $a_0\in\C$}.
\end{equation}
 Indeed, denoting the left hand side of~\eqref{eq:multilimit} by $H_m(P)$, one can show by induction on $m$ that for  $H_m(T^d)=0$ for $d<m$ and $H_m(T^d)={d\choose m} a_0^{d-m}$ for $d\geq m$. Extending this by linearity proves~\eqref{eq:multilimit} for polynomials in general.

An important property of $\supp(k)$ that we will need is that for all $j\in[2n]$, we have
\begin{equation}\label{eq:supp_zero}
j\notin \supp(k)\ \  \Longleftrightarrow\ \   k\in\J_j \ \ \Longrightarrow\ \  \gg_j(\v_k)=0, \quad\text{thus}\quad \GG_\Reg(\v_k)=\GG_\Reg(\v_k) |_{\supp(k)}.
\end{equation}

Let us go back to proving~\eqref{item:limit}. We let $\bth_\parr N=(\th_{\parr{N,1}},\th_{\parr{N,2}},\dots,\th_{\parr{N,2n}})$
 and $\v_\parr{N,k}:=\exp(2i\th_{\parr{N,k}})$. Thus for each $N$, the complex numbers $(\v_{\parr{N,k}})_{k\in[2n]}$ are pairwise distinct, and we have $\v_\parr{N,k}\to \v_k$ as $N\to\infty$. Fix $j\in[2n]$ and let
\begin{equation}\label{eq:}
  \Jp_j:=\{k\in \J_j\cap \supp(j)\mid \v_k=\v_j\}, \qquad\Jpp_j:=\{k\in \J_j\setminus \supp(j)\mid \v_k=\v_j\}. 
\end{equation}
Thus $\Jp_j\sqcup\Jpp_j\subset\L_{\v_j}$, and by our observation~\ref{item:NC} above, we have (cf. \cref{fig:NC}) 
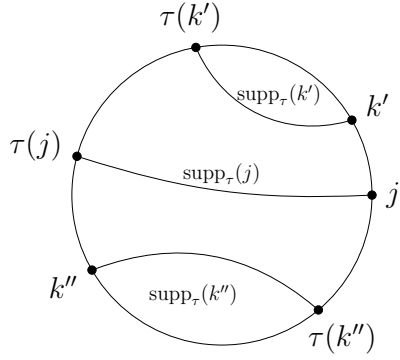
\begin{figure}
\scalebox{0.95}{
\begin{tikzpicture}
\def\dotscl{0.3}
\def\labscl{1}
\def\RAD{2.1}
\draw(0,0) circle (\RAD);
\node[draw,circle, fill=black,scale=\dotscl] (Aj) at (0:\RAD) {};
\node[draw,circle, fill=black,scale=\dotscl] (Atj) at (165:\RAD) {};

\node[draw,circle, fill=black,scale=\dotscl] (Ak) at (30:\RAD) {};
\node[draw,circle, fill=black,scale=\dotscl] (Atk) at (100:\RAD) {};

\node[draw,circle, fill=black,scale=\dotscl] (Ak2) at (-150:\RAD) {};
\node[draw,circle, fill=black,scale=\dotscl] (Atk2) at (-50:\RAD) {};

\node[anchor=180,scale=\labscl] (Lj) at (Aj.0) {$j$};
\node[anchor=165+180,scale=\labscl] (Ltj) at (Atj.165) {$\tau(j)$};
\node[anchor=30+180,scale=\labscl] (Lk) at (Ak.30) {$k'$};
\node[anchor=100+180,scale=\labscl] (Ltk) at (Atk.100) {$\tau(k')$};
\node[anchor=-150+180,scale=\labscl] (Lk2) at (Ak2.-150) {$k''$};
\node[anchor=-50+180,scale=\labscl] (Ltk2) at (Atk2.-50) {$\tau(k'')$};

\draw(Aj) to[bend right=-10] (Atj);
\draw(Ak) to[bend right=-40] (Atk);
\draw(Ak2) to[bend right=-30] (Atk2);

\def\suppscl{0.7}
\node[scale=\suppscl] (Sj) at (0,0.3) {$\supp(j)$};
\node[scale=\suppscl] (Sk) at (0.8,1.4) {$\supp(k')$};
\node[scale=\suppscl] (Sk2) at (-0.4,-1.4) {$\supp(k'')$};
\end{tikzpicture}
}
  \caption{\label{fig:NC} Supports of elements in $\L_\v\cap J_j$; see~\eqref{eq:NC}.}
\end{figure}
\begin{equation}\label{eq:NC}
\supp(k')\subset\supp(j) \quad\text{for $k'\in \Jp_j$,}\qquad \supp(k'')\cap\supp(j)=\emptyset \quad\text{for $k''\in \Jpp_j$.}
\end{equation}
 Let $m':=|\Jp_j|$, $\{k'_1,\dots,k'_{m'}\}:=\Jp_j$,  and $(a_1,\dots,a_{m'}):=(\v_{\parr{N,k'_1}},\dots,\v_{\parr{N,k'_{m'}}})$. Similarly, let $m'':=|\Jpp_j|$, $\{k''_1,\dots,k''_{m''}\}:=\Jpp_j$, and $(b_1,\dots,b_{m''}):=(\v_{\parr{N,k''_1}},\dots,\v_{\parr{N,k''_{m''}}})$. Thus $m_j=m'+m''$. Let 
\begin{equation}\label{eq:}
\GGBT:=\frac1{(T-b_1)\cdots (T-b_{m''})}\cdot  \GG_{\RegN}(T).
\end{equation}
By~\ref{item:NC}, we find that for each $k'\in\supp(j)$, $(\GG_{\RegN}(T))_{k'}$ is divisible by $(T-b_1)\cdots(T-b_{m''})$, and thus $(\GGBT)_{k'}$ is a genuine polynomial. For $k''\notin\supp(j)$, $(\GGBT)_{k''}$ is in general a rational function in $T$, but it is well defined at $T=a_1,\dots,T=a_{m'}$. Let $a:=\v_{\parr{N,j}}$. We extend the operator $P\mapsto \left(\dd{(a_1,a_2,\dots,a_{m'})}P\right)|_{T=a}$ to act coordinate-wise on tuples of rational functions and denote
\begin{equation}\label{eq:}
\w j:=\left(\dd{(a_1,a_2,\dots,a_{m'})} \GGBT \right)\big|_{T=a}.
\end{equation}
The vector  $\w j\in\C^{2n}$ is a linear combination of vectors $\{\GGB(v_{\parr{N,k}})\}_{k\in\Jp_j\sqcup\{j\}}$. First, each of these vectors belongs to $\SpanC(\GG_{\RegN})$, therefore so does $\w j$. Second, each of these vectors has $j''$-th coordinate equal to $0$ for $j''\notin\supp(j)$ by~\eqref{eq:supp_zero} and~\eqref{eq:NC}. Thus we have $\w j=\w j|_{\supp(j)}$. Third, as we mentioned above, the remaining entries of $\GGBT$ are polynomials in $T$, so~\eqref{eq:multilimit} applies to them. Letting $a_0:=\v_j$ and denoting $u^\parr{N,j}:=m'!\cdot \w j$, we apply~\eqref{eq:multilimit} to see that $u^\parr{N,j}\to \u j$ as $N\to\infty$. This shows~\eqref{item:limit}.
\end{proof}

\section{Concluding remarks}

\subsection{Relations to total positivity} \label{sec:relat-total-posit}
We briefly translate the objects studied above to the objects arising in total positivity. The matching $\tau$ is a \emph{decorated permutation} and the sets $(\J_j\sqcup\{j\})_{j\in[2n]}$ form a \emph{Grassmann necklace}; see~\cite{Pos}. The affine notation in \cref{sec:affine-notation} is mostly due to~\cite{KLS}: for instance, $\taut$ is a \emph{bounded affine permutation} in their language. The recurrence in \cref{sec:removing-crossing} is the ``BCFW bridge removal'' recurrence studied in~\cite{abcgpt,LamCDM} going back to~\cite{BCFW}. The image of the map $\doublemap$ is the \emph{totally nonnegative orthogonal Grassmannian} introduced in~\cite{HW,HWX}. The construction~\cite{GP} of this map was inspired by the results of Lis~\cite{Lis}.  The subspace $\doublemap(M_{\Reg_n})$ obtained in the case of a regular $2n$-gon is the unique cyclically symmetric point (cf. \cref{rmk:KW}) of the totally nonnegative Grassmannian studied in~\cite{GKL,KarpCS}.

\subsection{Explicit questions}
 Let $\Reg$ be a non-alternating region. Combining our results with the results of~\cite{GP}, it follows that $\Span(\curve_{\Reg})$ belongs to $\OGtnn(n,2n)$. In particular, for an $n\times 2n$ matrix $A$ whose row span is $\Span(\curve_\Reg)$ as in \cref{thm:main_span}, we have:
\begin{enumerate}
\item the nonzero maximal minors of $A$ all have the same sign;
\item the maximal minors of $A$ with complementary sets of columns are equal.
\end{enumerate}
The second condition can be equivalently restated as follows: for all $t,t'\in\R$, we have
\begin{equation}\label{eq:gamma_OG}
\g_1(t)\g_1(t')-\g_2(t)\g_2(t')+\dots +\g_{2n-1}(t)\g_{2n-1}(t')-\g_{2n}(t)\g_{2n}(t')=0.
\end{equation}
The curve $\curve_\Reg(t)$ is an explicit product of sines~\eqref{eq:curve}, which makes the above observations appear quite mysterious. We do not have a conceptual explanation for why they should be true. 
\begin{question}
Give an elementary proof that $\curve_\Reg(t)$ has the above properties.
\end{question}

\newcommand{\arxiv}[1]{\href{https://arxiv.org/abs/#1}{\textup{\texttt{arXiv:#1}}}}

\bibliographystyle{alpha}
\bibliography{ising_crit}

\end{document}